\PassOptionsToPackage{dvipsnames}{xcolor}

\documentclass[a4paper,twocolumn,unpublished,11pt]{quantumarticle}
\pdfoutput=1

\usepackage[T1]{fontenc}

\usepackage{graphicx}
\usepackage{amsmath}
\usepackage{amssymb}
\usepackage{amsthm}
\usepackage{mathtools}
\usepackage{physics}
\usepackage{dsfont}
\usepackage{bbm}
\usepackage{braket}
\usepackage{longtable}
\usepackage{enumitem}
\usepackage{float}

\usepackage[super,sort&compress]{natbib}

\usepackage{xcolor}

\usepackage{tikz}
\usetikzlibrary{
    positioning,
    arrows.meta,
    shapes.geometric,
    fit,
    calc
}

\usepackage{tcolorbox}
\tcbuselibrary{breakable,skins}

\definecolor{LightGray}{RGB}{220,220,220}
\definecolor{myred2}{RGB}{255,19,0}
\definecolor{myred}{RGB}{15,122,98}
\definecolor{myblue}{RGB}{14,81,167}
\definecolor{myorange}{RGB}{255,129,0}
\definecolor{mygreen}{RGB}{0,146,44}

\newtheorem{thm}{Theorem}[section]
\newtheorem*{thm*}{Theorem}
\newtheorem{lem}[thm]{Lemma}
\newtheorem{cor}[thm]{Corollary}

\newtheorem{prop}[thm]{Proposition}

\newcommand{\cB}{\mathcal{B}}

\newcommand{\cF}{\mathcal{F}}
\newcommand{\cG}{\mathcal{G}}
\newcommand{\cH}{\mathcal{H}}

\newcommand{\cJ}{\mathcal{J}}
\newcommand{\cK}{\mathcal{K}}

\newcommand{\cO}{\mathcal{O}}

\newcommand{\cS}{\mathcal{S}}

\newcommand{\cZ}{\mathcal{Z}}

\def\tr{\operatorname{tr}}%
\def\vun#1{\left(#1\right)}%

\usepackage[normalem]{ulem}

\DeclareMathOperator{\supp}{supp}

\usepackage[pagebackref, colorlinks = true, linkcolor = myred, urlcolor  = myred, citecolor = myred]{hyperref}

\begin{document}

\title{\vspace{-0.2cm}  Efficient Computation of QKD Key Rates without Semidefinite Programming}

\author{Bence Temesi}
%\email{bence.temesi@itp.uni-hannover.de}
\affiliation{Institute for Theoretical Physics, Leibniz Universit\"at Hannover, Hannover, Germany}
\author{Antoine  Gansel}
%\email{enter-email@thispoi.nt}
\affiliation{Chair of Data Security and Cryptography, University of Regensburg, Regensburg, Germany}
\author{Gereon Ko\ss mann}
%\email{rene.schwonnek@itp.uni-hannover.de}
\affiliation{Institute for Quantum Information, RWTH Aachen University, Aachen, Germany}
\author{Ren\'e Schwonnek}
%\email{rene.schwonnek@itp.uni-hannover.de}
\affiliation{Institute for Theoretical Physics, Leibniz Universit\"at Hannover, Hannover, Germany}

\maketitle

\fontfamily{lmr}\selectfont

\begin{abstract}
Translating observed data into a reliable estimate of the secure key rate is a crucial step for operating a quantum key distribution device. We provide a computational method for this task that only requires eigenvalue computations and is therefore both fast and resource efficient. In contrast, existing approaches rely on semidefinite programming or programming on the entropy cone, whose memory requirements can scale as $d^4$ in the underlying Hilbert-space dimension. %depending on how one incorporates the constraints.
Our method reduces this requirement to $d^2$. A minimal implementation of our algorithm takes fewer than 100 lines of Common Lisp. We demonstrate real-time key-rate estimation on a Raspberry Pi with a 1 GB memory and a Cortex-A53 processor. Despite these modest resources, our implementation outperforms existing workstation-based benchmarks by several orders of magnitude. Non-numerical verification can be incorporated with little overhead using rational approximations. These results open the way toward embedding complete numerical security analysis directly into qkd hardware.
\end{abstract}
\vspace{-0.5cm}
\section{Introduction}\label{sec:intro}
Over the past decades, quantum cryptography has developed from the theoretical proposal of Bennett and Brassard in 1984 \cite{Bennett2014} into an increasingly mature technology. Quantum key distribution systems\cite{Gisin2002,Lo2014,Diamanti2016,Pirandola2020} and also quantum random number generators\cite{Ma2016}  are now actually approaching a wider range of commercial deployment. Furthermore, academic research holds many demonstrator systems\cite{Boaron2018,Minder2019,Wei2020,Pittaluga2021,Zahidy2024,Li2026} and proof-of-principle experiments \cite{Liu2022,Zhang2022,Nadlinger2022,Primaatmaja2023} that will continue to broaden the range of cryptographic quantum technologies and may become practically relevant in a not to far future.

Turning these advances \cite{Pirandola2020} into reliable products requires however more than a secure physical design. A formal security proof\cite{Scarani2009} based on fundamental principles of quantum mechanics is an equally important part of a fully functioning system.
In the last years, theoretical research has 
taken essential steps by providing functional frameworks for performing a full security analysis for real devices.  In particular, modern proof systems are commonly based on the left over hash lemma \cite{Renner2005,Tomamichel2011,Dupuis2023} combined with an entropy accumulation theorem \cite{ArnonFriedman2018,Dupuis2020,Metger2024,Arqand2025} or closely related techniques \cite{MarginalConstraintEAT,PetzRenyiBounds}. They allow to establish $\varepsilon$-composable \cite{RennerWolf2005,RENNER2008} security against general quantum attacks. The input to these frameworks is a bound on the extractable randomness, formulated by the conditional entropy $H(X|E)$\cite{Wilde2016} that an attacker $E$ has on the raw data $X$ from which a device will produce a key. In essence, this number determines the compression rate of a privacy amplification step, i.e. the amount of information that has to be sacrificed in order to guarantee that Eve has almost no knowledge on what remains. 

A central task for a security guarantee is hence to translate experimental data, which is for example obtained in the parameter estimation step of a protocol or already within a calibration phase, into a rigorous and sufficiently tight estimate on the conditional von Neumann entropy. Besides in highly symmetric protocols, like  BB84, this task, that deals with noisy real world data, demands to use numerical methods. Performing this task on embedded hardware usable in a demonstrator for a full CV-qkd system\cite{Laudenbach2018,Kanitschar2023,Navarro2026} was the original challenge motivating the research presented in this work. 

\section{Results}\label{sec:results}
\vspace{-0.3cm}Several numerical approaches have been developed to estimate the conditional entropy from observed measurement data. While these methods have made the numerical treatment of increasingly general qkd protocols possible, their computational demands can become substantial already for moderately sized instances. In particular, existing approaches rely on semidefinite optimization\cite{Coles2016,Winick2018,Tan2021,Hu2022,Arajo2023,Kossmann2026} or optimization on a problem-specific cone\cite{Fawzi2023,He2025,Lorente2025,qics}. Both memory requirements and runtime grow rapidly with the Hilbert-space dimension $d$ and the number of constraints $n$. Reported computations can therefore require substantial workstation or HPC resources and, for some instances, runtime of minutes to hours. 

For practical qkd systems operating under changing conditions, however, the secure key rate should ideally be recomputed repeatedly from live data, potentially directly on embedded hardware. The computational cost of existing methods makes such on-device evaluation difficult. Existing qkd systems therefore, so far have to resort to precomputed lookup tables or analytically tractable approximations\cite{Laudenbach2018,Denys2021}, sacrificing either adaptability to changing conditions or tightness of the resulting bound. Efficient and certifiable key-rate estimation is thus a part of the technological challenge itself.

In this work, we introduce a compact and resource-efficient alternative that avoids semidefinite programming altogether. We bypass costly numerical optimizations over the full quantum state space by a candidate-and-certificate method that iteratively exploits the optimality of Gibbs states for entropy optimization. The resulting algorithm requires only standard matrix diagonalizations and smallest-eigenvalue computations%, with memory requirements scaling as $d^2$ and runtime requirements $d^3$
%$d_A^2 d_B^2$ and runtime requirements scaling as 
%$d_A^3 d_B^3$ 
per iteration. Indeed, we do not need any discretization or numerical approximation of the conditional von Neumann entropy. This makes the complete security calculation sufficiently lightweight to be performed directly on embedded hardware. We demonstrate this by implementing a solver on a Raspberry Pi 3. It has only $1\mathrm{GiB}$ of memory and a Cortex-A53 processor, which is a very common CPU for embedded devices including home routers, many SoCs, and the fire tv stick. Despite these modest resources, it  outperforms existing workstation-based benchmarks by orders of magnitude.
\begin{figure}
    (a) Performance Benchmark on a Raspberry Pi 3\\
                \includegraphics[width=0.99\linewidth]{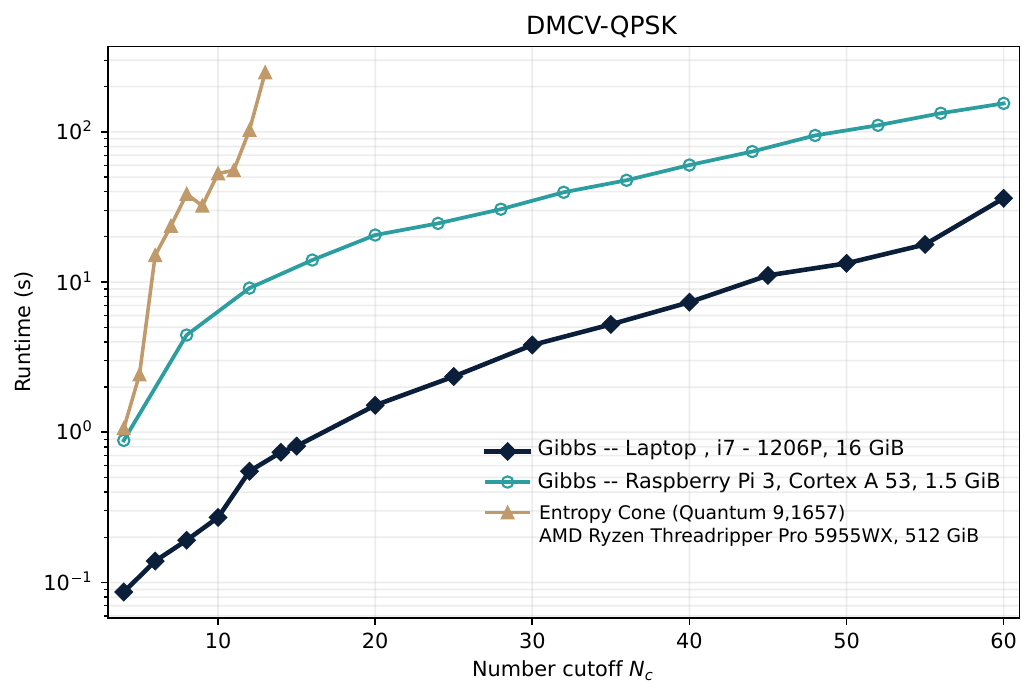}\\[0.1cm]
    (b) Setup and Entropyflow
                \begin{center}
                    \includegraphics[width=0.6\linewidth]{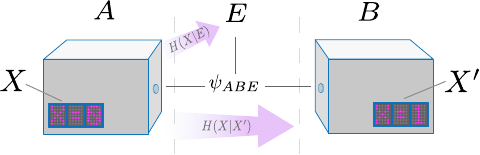}\\[0.1cm]
                    \includegraphics[width=0.7\linewidth]{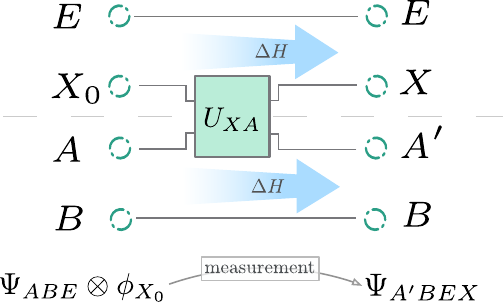}\\[0.1cm]
                \end{center} 
    (c) Inner Candidate and Outer certificate 
                \includegraphics[width=0.99\linewidth]{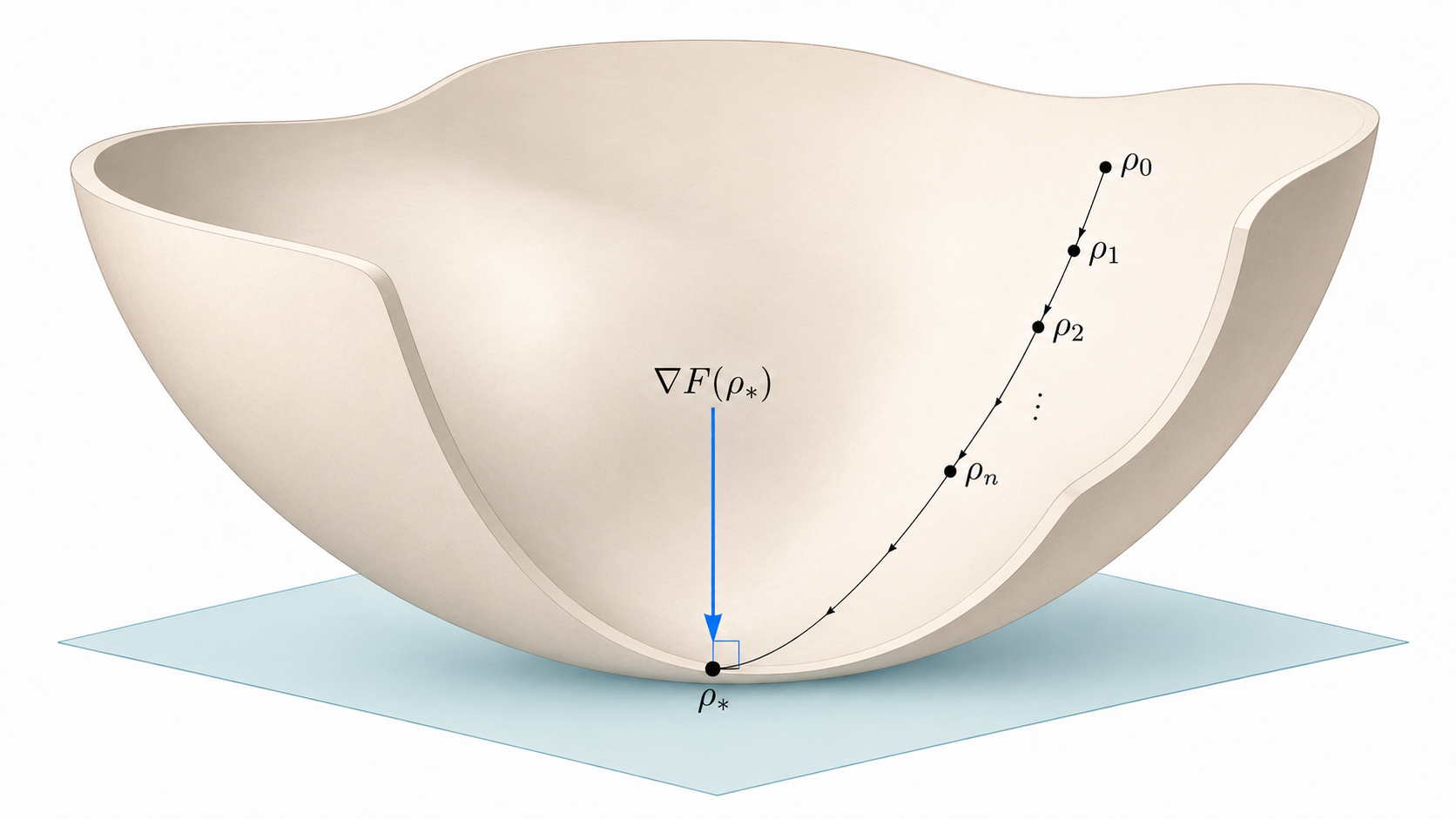}

    \caption{
    (a) Runtime of our algorithm on a laptop and on a raspberry 3 in comparison to existing methods that run on workstations\cite{Lorente2025}. Benchmark for QPSK-DMCV-QKD  based on example data from \cite{Lorente2025}. Details in Sec.\ref{sec:bench}.
    (b) Basic setting of an entanglement based Alice-Bob-Eve protocol in stylised boxes and as circuit diagram. The uncertainty $H(X|E)$ equals the entropy production $\Delta H$ of the measurement process $X_0E\rightarrow XE$. When dilated\cite{Wilde2016} to a unitary $U_{AX}$, the key map $\Phi:A\rightarrow A'$ is the complementary channel to this process. Its entropy production is also $\Delta H$. (c) An AI's perceptions of the basic geometric intuition of our algorithm.  We first produce a inner candidate $\rho^*$. Since the optimization is convex we can lower bound the graph  $(\rho,F(\rho))$ of the target function by a tangent hyperplane at a point $(\rho^*,F(\rho^*))$. This relaxations gets tight at the optimum. }
    \label{fig:bild1}
\end{figure}

\subsection{Setting and problem statement}\label{sec:setting}
We consider a standard quantum key distribution scenario\cite{Pirandola2020,Diamanti2016} in the entanglement based picture, see Fig.~\ref{fig:bild1}. Prepare and measure protocols can be brought into this picture by the source replacement method \cite{Curty2004,Tupkary2025Review,vanLuijk2024}.
The scenario involves two legitimate parties, Alice and Bob who aim to establish a shared secret key from a quantum state $\rho_{AB}$. In a worst case, an attacker Eve has full access to the source that produces this state and can store a purification $\psi_{ABE}$ of it in a quantum memory. 
The information Alice and Bob have on the protocol and on the potentially compromised state $\rho_{AB}$ is modeled by linear constraints $\tr\vun{\rho_{AB}M_i}=m_i$ for bipartite measurements $\{M_i\}$ on Alice's and Bob's system and observed statistics $(m_i)$ for $i=1,\ldots, n$.
They contain all measurement data obtained during the calibration and parameter estimation phase but can also express symmetries and other modeling assumptions on a concrete qkd device. Linearity of constraints follows from the stochastic structure of quantum mechanics. For $d \equiv d_Ad_B$ let
\begin{align}\label{eq:feasible-set}
    \mathcal{F}=\{\rho \ | \ \tr\vun{\rho M_i}=m_i, \ \rho\geq 0, \ \tr \rho = 1\}    
\end{align}
be the set of feasible states. We focus on the standard case of protocols with one-way error correction. Here one party declares their dataset to be the target the other party has to recover. For the moment let Alice be the party with the correct data. For  CV-qkd protocols\cite{Laudenbach2018} Bob usually takes this role, which in our computations only results in a relabeling of variables.
The measurement of this raw data, i.e. the raw key, is modeled by a quantum channel $\Phi$ \cite{Coles2016, Tan2021} that maps Alice's initial system to the state after the raw key data has been produced. The quantity we want to compute in this work is the minimal conditional entropy
\begin{align}\label{eq:qkdproblem}
F^* \coloneqq \inf_{\psi_{ABE}}  H(X|E) \ \text{ s.th. } \rho_{AB}\in\mathcal{F}.
\end{align}
Operationally the value of \eqref{eq:qkdproblem} can be understood as worst case bound on the uncertainty\cite{Shannon2001} Eve has in a single round on the raw data $X$. 
In practice it determines the securely extractable randomness\cite{Renner2005}. 
Reliable bounds on it are an essential ingredient of the security assessment of a qkd device. On one hand they allow for the direct computation of the asymptotic keyrate via the Devetak Winter formula \cite{Devetak2005} and on the other they are the relevant single shot quantities from which min-tradeoff functions in the entropy accumulation framework\cite{Metger2024} for finite size analyses are constructed. 

By using basic duality arguments\cite{Tomamichel2016,Tan2021}, the optimization in \eqref{eq:qkdproblem} can be reduced to an optimization that acts only on the Alice-Bob system. Here we have to solve the equivalent problem
\begin{align}\label{eq:qkdAB}
    F^*=\inf_{\rho \in\mathcal{F} } D\vun{\rho\Vert \Phi(\rho)}
\end{align}
in which we minimize over the Umegaki relative entropy $D(\rho\Vert \sigma) = \tr\rho\vun{\log \rho - \log \sigma}$. This reformulation is not merely a useful mathematical tool, it also reveals a fundamental intuition on why qkd works where classical methods do not. The primordial problem \eqref{eq:qkdproblem} contains the system of Eve. It is inaccessible to legitimate users of a cryptographic device. The duality step, illustrated in Fig.~\ref{fig:bild1}(a), hinges on the concept of purifications, which is a genuine feature of quantum theories \cite{Plvala2023}. The resulting expression \eqref{eq:qkdAB}, which can be understood as minimal entropy production of the measurement process \cite{Tan2021}, depends the quantum systems located in Alice's and Bobs labs.
So in principle, they can gather the information needed for determining $F^*$ through local actions and classical communication. Finding those bounds from incomplete knowledge is exactly the problem our method solves. 

There are three central challenges in the computation of $F^*$. First, the optimization is performed over the full state space of the underlying quantum system. For a Hilbert-space dimension $d$, a density operator already comprises $d^2-1$ independent real parameters, subject in addition to positivity and the constraints imposed by $\mathcal{F}$. Second, the objective itself is nonlinear, but convex in the quantum state. The quantum relative entropy involves matrix logarithms and, in general, does not reduce to a standard linear or quadratic optimization problem. Finally, a numerical approximation to the optimum is not sufficient by itself. For cryptographic applications, the resulting value must be accompanied by a rigorous bound that remains valid despite finite numerical precision. Our method addresses these three difficulties by separating the computation into a fast \emph{candidate} finding phase and a subsequent \emph{certification} phase. A sketch of the underlying geometry is illustrated in Fig.~\ref{fig:bild1}(b). A flow diagram of our algorithm is depicted in Fig.~\ref{fig:bild2}. A detailed derivation is given in Sec.~\ref{methods} and in the appendix. 
\begin{figure}[t]
    \centering
    \includegraphics[width=0.8\linewidth]{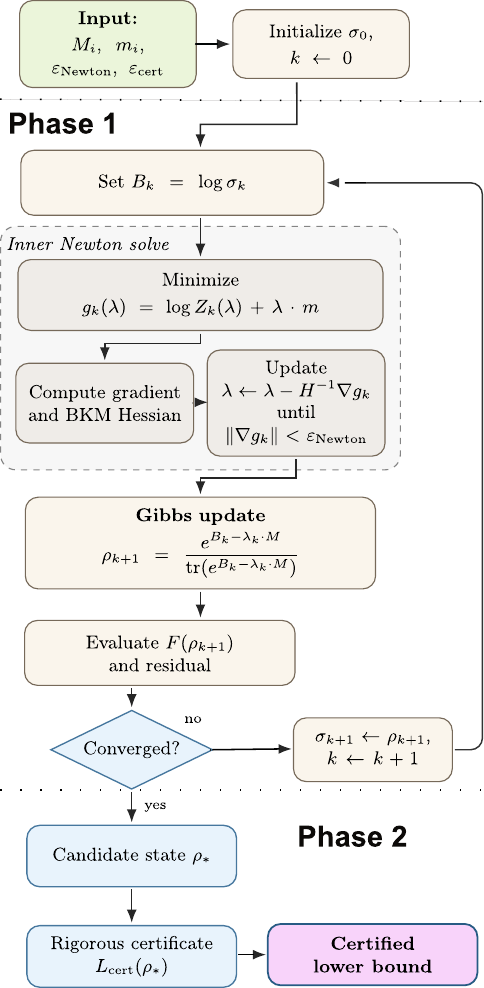}
    \caption{Flow diagram of the candidate-and-certificate algorithm. Phase 1 generates with an iterative optimization over adopted Gibbs states a candidate state, which is in Phase 2 converted into a lower bound. The core computation effort is the loop for a newton method.}
    \label{fig:bild2}
\end{figure}

\textbf{Phase 1: Candidate.}
Rather than optimizing directly over all matrix elements of the quantum state at once, we generate a sequence of feasible Gibbs states that converges to the optimum. The construction is most naturally formulated in analogy with equilibrium thermodynamics. Starting from a full-rank state $\rho_0$, each iteration determines a set of Lagrange multipliers $\lambda=(\lambda_1,\ldots,\lambda_n)$ associated with the $n$ constraints that form $\mathcal{F}$. 

For fixed $\rho_k$, we introduce the effective Hamiltonian
\begin{align}
H_k(\lambda) = \Phi(\rho_k)-\lambda\cdot M
\end{align}
and its partition function
\begin{align}
Z_k(\lambda) = \operatorname{tr}\exp\vun{H_k(\lambda)}.
\end{align}
The next candidate state is the corresponding Gibbs state,
\begin{align}\label{eq:gibbs1}
\rho_{k+1}(\lambda) = \frac{\exp[H_k(\lambda)]}{Z_k(\lambda)}.
\end{align}
with  multipliers $\lambda$ that are obtained by minimizing the convex log-partition functional
\begin{align}
g_k(\lambda) = \log Z_k(\lambda) +\lambda\cdot m.
\end{align}
Indeed, its stationarity conditions are precisely the linear constraints,
\begin{align}
\frac{\partial g_k}{\partial\lambda_i} = m_i-\operatorname{tr}\left(\rho_{k+1} (\lambda)M_i\right) =0,
\end{align}
which ensures $\rho_{k+1}(\lambda)\in\mathcal{F}$. The central simplification is therefore that the original matrix-valued optimization is replaced, at every iteration, by a smooth convex optimization of $g_k$ over the real parameters $\lambda$. Repeating this construction produces a sequence of feasible states approaching a candidate $\rho_{\mathrm{cand}}$ and a candidate value $F_{\mathrm{cand}}$, see Fig.~\ref{fig:bild1}(c).

The optimization over the multipliers $\lambda$ is convex and can be performed efficiently by Newton-type methods\cite{Boyd2004}. We test and finetune several variations. Both the gradient and Hessian of $g_k$ can be evaluated analytically; the latter takes the form of a Gibbs susceptibility, or equivalently a Bogoliubov--Kubo--Mori covariance matrix\cite{Petz1993}. Computationally, the dominant matrix operation is the diagonalization of the current effective Hamiltonian $H_k(\lambda)$, whose eigensystem provides the Gibbs state as well as the quantities required for the newton step. 

\textbf{Phase 2: Certificate.}
A feasible candidate alone provides only one side of the optimum. To obtain a rigorous lower bound, we exploit the convexity of the original optimization problem \eqref{eq:qkdAB}. Let $F(\rho)$ be the relative entropy on the r.h.s of  \eqref{eq:qkdAB}. By convexity of $F$ in its argument, any feasible candidate $\sigma$ and a gradient $G_{\sigma}$ of $F$ at $\sigma$ provide a supporting plane, which is drawn as a blue plane in Fig.~\ref{fig:bild1}(c) and denoted as the following affine lower bound
\begin{align}
F(\rho) \geq F(\sigma) + \operatorname{tr}\vun{G_{\sigma}(\rho-\sigma)}.
\end{align}
Let $\rho^*$ denote the optimizer of \eqref{eq:qkdAB}. Setting $\rho=\rho^*$ and following the steps in Sec.~\ref{sec:outer-certificate} leads to the certificate
\begin{align}\label{eq:cert1}
F^* \geq \lambda_{\min}\left(G_\sigma+\lambda\cdot M\right)-\lambda\cdot m \equiv F_{\mathrm{cert}} .
\end{align}
Here $\lambda_{min}$ refers to the smallest eigenvalue. %\GK{of what, needs some explanation}. The 
 parameters $\lambda$ in \eqref{eq:cert1} arise from the construction of $\sigma$ as Gibbs state as in \eqref{eq:gibbs1}. 
In Sec.~\ref{sec:outer-certificate} we show that for $\sigma=\rho^*$ the certificate \eqref{eq:cert1} nicely recovers the corresponding KKT conditions of \eqref{eq:qkdAB}. It is therefore tight at the optimum and also becomes increasingly tight as the candidate approaches it.
Thus, certification requires only one additional smallest-eigenvalue computation and yields
\begin{align}\label{eq:certificate_inequality}
F_{\mathrm{cert}}
\leq F^*
\leq F_{\mathrm{cand}} .
\end{align}
The resulting gap $\varepsilon_{num}=F_{\mathrm{cand}}-F_{\mathrm{cert}}$ provides a direct and rigorous measure of the remaining numerical uncertainty. Moreover, \eqref{eq:certificate_inequality} can be directly used for a rational approximation of the resulting key rate. That is a certified bound given by a rational number as a lower bound, which can be directly incorporated into a security analysis. Phase 1 and Phase 2 can be repeated until a prescribed target precision is reached. In the detailed convergence analysis, see Sec.~\ref{sec:outer-convergence}, we show that this method converges sublinearly with $O(1/k)$ in the worst case and much faster in practice.  

From a computational perspective, both phases require only standard spectral operations on the matrices of the original Hilbert-space of dimension $d$. In the candidate phase, the dominant operation is the diagonalization of the $d\times d$ effective Hamiltonian, while the Newton update itself and acts on an $n\times n$ Hessian, with $n$ the number of constraints. The certificate adds only a smallest-eigenvalue computation for another $d\times d$ matrix. For a fixed number of constraints, the working memory therefore scales as $O(d^2)$, while a straightforward dense implementation's runtime is dominated by $O(d^3)$ spectral decompositions.

\subsection{Benchmarking}\label{sec:bench}
In Fig.~\eqref{fig:runtime_comparison}, we first compare our method with the benchmarks reported by Lorente et al.\cite{Lorente2025}.  There the authors provide, beside others, benchmarking results of a Gauß-Radau based integral approximation\cite{Arajo2023} and an interior point method on the entropy cone \cite{Lorente2025,He2025}. Their example instances for DMCV-QPSK, MUB, and overlapping-basis protocol are curated on a git hub repository accompanying their work. It is the base for our benchmark. 
For DMCV we use $\alpha=0.35$, $\xi=0.05$, and $L=60,\mathrm{km}$, while the MUB and overlap instances use visibility $v=0.95$. Our code is implemented in Python and runs on a standard laptop. As numerical target we require a candidate--certificate gap of at least $\varepsilon_{\rm num}\leq10^{-5}$ nats. The reference timings of Lorente et al.\cite{Lorente2025} were obtained with different software and  workstation hardware and are not normalized to the same precision criterion. Figures \ref{fig:bild1}(a) and \ref{fig:bild2} therefore do not define a hardware-independent speedup factor, but they clearly show that the same benchmark families can be treated with substantially smaller computational resources by our method.

\begin{figure}
    \centering
    \includegraphics[width=1\linewidth]{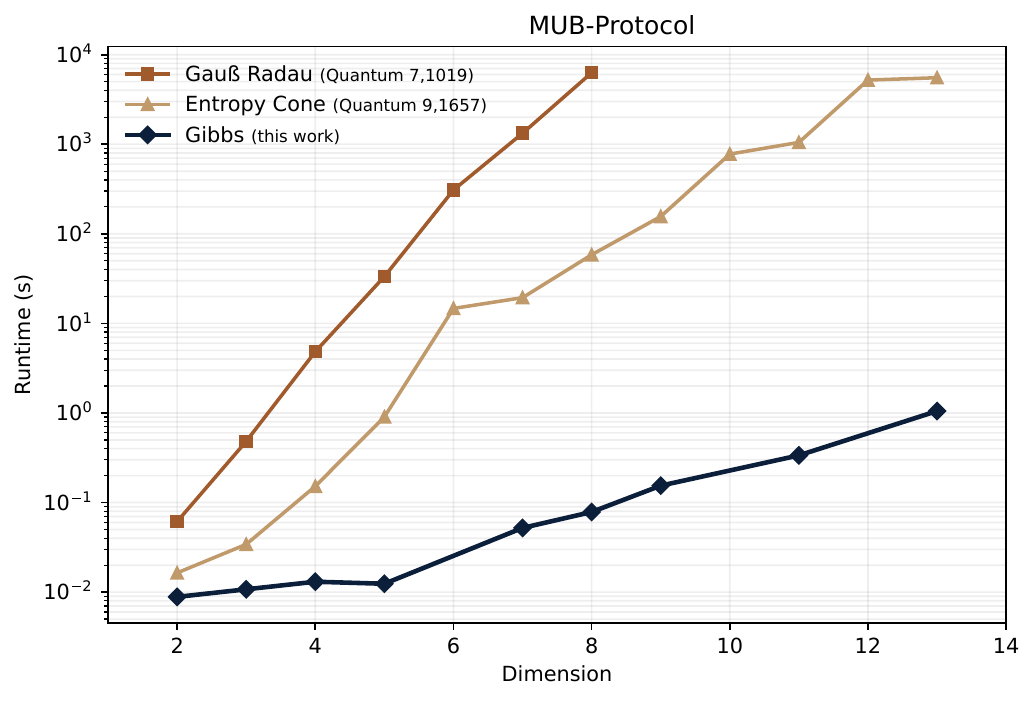}
    \includegraphics[width=1\linewidth]{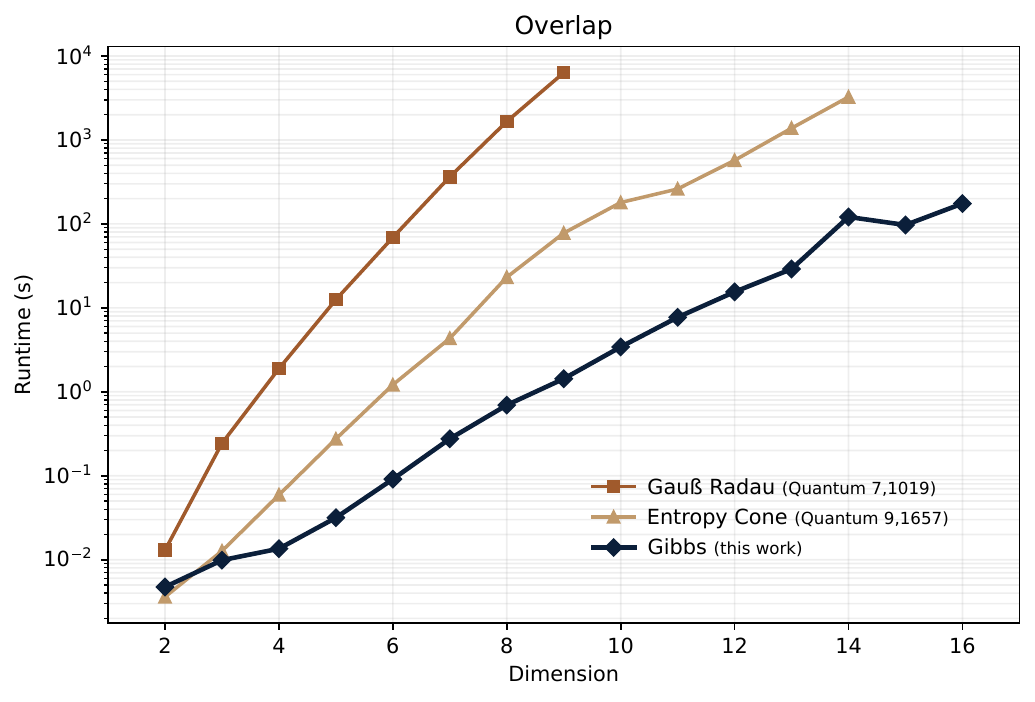}
    \caption{A runtime comparison for the MUB protocol and the overlap protocol from \cite{Arajo2023} as a function of the local dimension $d_A = d_B$ is shown. We compare the computation of the extractable randomness from \eqref{eq:qkdAB} in \cite{Arajo2023}, which uses the Gauß-Radau approximation of the relative entropy \cite{Brown2024}, the entropy cone methods \cite{Lorente2025} and our method. The underlying devices are a laptop and the other hardware described in Fig.~\ref{fig:bild1}(a). All protocols deliver at least precision $10^{-6}$ to the optimizer. Due to the logarithmic scaling, we observe that our new technique provides an improvement of several orders of magnitude.}
    \label{fig:runtime_comparison}
\end{figure}

\begin{figure}
    \centering
    \includegraphics[width=1\linewidth]{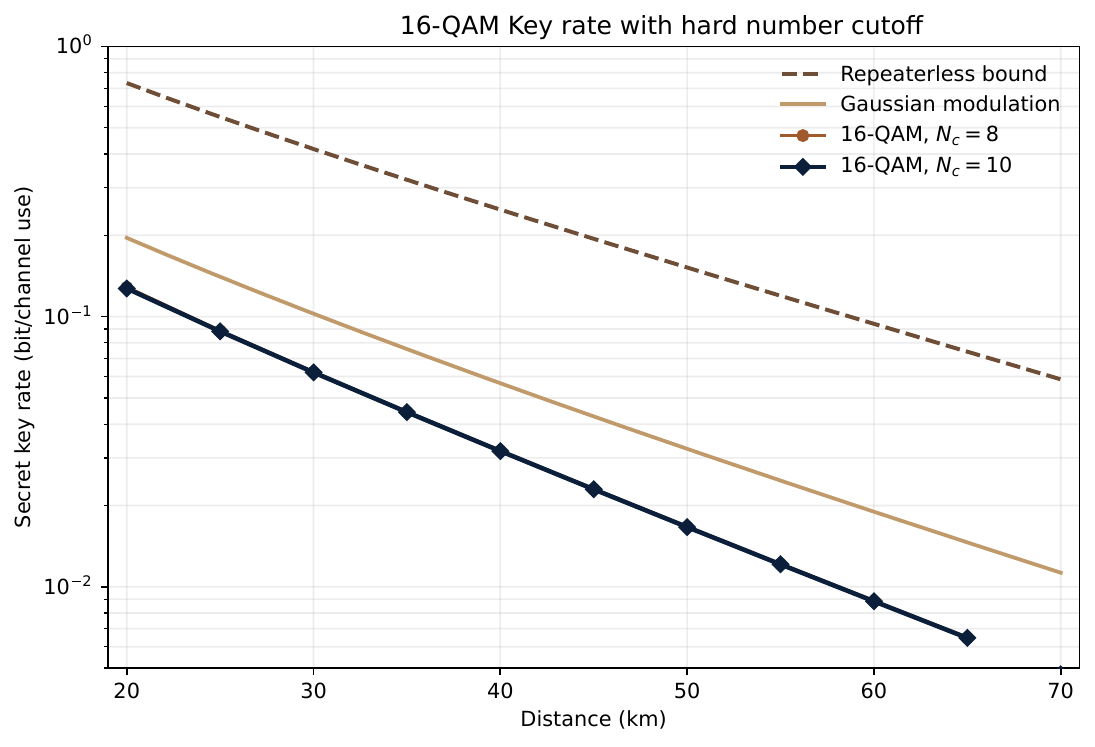}
    \caption{A discrete modulation CVQKD protocol\cite{Denys2021} with 16-QAM coding is shown. We take parameters $V_A=5$, $\nu=0.085$, $\xi=0.02$, $\beta=0.95$, and fiber loss $0.2\,\mathrm{dB/km}$. We use hard photon-number cutoffs $N_c=8$ and $N_c=10$ and evaluate distances up to $90\,\mathrm{km}$. The target certificate gap is $10^{-6}$ nats, with a fixed iteration budget and a runtime limit of four minutes per point. For comparison, we also depict the repeaterless bound\cite{pirandola2017fundamental} and the rate achievable by a Gaußian modulation protocol\cite{Denys2021,navascues2006optimality}. }
    \label{fig:bild3}
\end{figure}

To test whether this remains true on embedded hardware, we run the DMCV-QPSK\cite{Lorente2025} benchmark on a Raspberry Pi 3. The Pi calculations use one BLAS thread and the same $10^{-6}$-nat certificate target. Here we use a Julia implementation with a more memory efficient newton step. The  warm-up time for loading an instance is excluded and the median of five runs is reported. We compare to our Python based reference implementation on a laptop. For completeness reported timings from Lorente et. al\cite{Lorente2025} are also depicted in Figure \ref{fig:bild1}(a).
On the raspberry Pi we see an expected increase in runtime, but no qualitative change of the scaling with $N_c$ over the tested range. Since the laptop and the Pi runs use different software stacks, their ratio is not a pure processor benchmark. The relevant result is that the complete candidate-and-certificate calculation remains feasible on Cortex-A53 based hardware.

As a more demanding test, presented in Fig.\ref{fig:bild3}, we test a discrete modulation CVQKD protocol\cite{Denys2021} with 16-QAM coding and hard number cutoff. Albeit the effective dimension, which is $d=144$ for $N_c=8$ and $d=176$ for $N_c=10$ stays moderate, we have 319 constraints that set a real numerical challenge to our newton methods. It manifests in our computation as an increase in runtime amount for the newton steps, such as a slower convergence rate for the certificate.  %At the smallest key rates the candidate remains stable, while the certificate gap increases and no longer closes within this budget. 
Nonetheless, even on a laptop instances of this size are computationally accessible. Although a rigorous dimension reduction is still required for a complete protocol analysis, we can already see behaviors that could be promising when properly verified. On one hand the error between a number cutoff 8 and 10 is on are between $10^{-3}$ and $10^{-6}$ and hence to small to be visible in Fig.\ref{fig:bild3}. So we can hope to also get good results after correcting for dimension reduction error on that scale. 
On the other, the scaling of the key rate with the distance is promising. It suggest a substantial advantage of this coding compared to qpsk and especially existing analytical estimates\cite{Denys2021}. If this advantage will persist in a setting without hard number cutoff 16-QAM could be an attractive coding scheme for practical CV-QKD systems.

\section{Discussion}\label{sec:discussion}
The first methods for computing key rates were based on suboptimal lower bounds to  conditional entropy. A direct path, which is still used today on big instances, is to estimate it by the min-entropy, which can be computed via a short SDP. A better variational bound was found by Coles et al.\cite{Coles2016}, this ansatz can also be extended to general algebraic qkd settings\cite{Tan2021}. Its basic working mechanism originates in proof techniques for entropic uncertainty relations\cite{Abdelkhalek2016,Schwonnek2018} and shares similarities to the Gibbs state optimality step used in this work. 
Modern methods for computing keyrates can be sorted into three branches. The first are methods  based on integral approximations of the relative entropy resulting in SDP hierarchies. This can either be done by Gauß-Radau quadratures \cite{Brown2024,Arajo2023} or 
via the layer cake representation\cite{Frenkel2023,KossmannDIbounds2024,Kossmann2026_commuting_operator}. Whereas the latter\cite{Kossmann2026} is more flexible and has better numerical performance. In the second branch, the key rate computation is directly casted as an optimization problem on the entropy cone\cite{Lorente2025,Fawzi2023,He2025}. Up to now the methods using the entropy cone had the best numerical performance, which is why we used them as comparisons for our numerical benchmarks. 
Our new method can be located in the third branch. The combination of finding a good candidate and certifying it via a gradient-based certificate was also used by Winick et al.\cite{Winick2018} and subsequent works\cite{George2021,Hu2022}, which are now part of the Openqkdsecurity package\cite{burniston2026openqkdsecurity}. In contrast to this work, both the Frank-Wolfe algorithm for candidate finding and the certificate require to solve semidefinite optimization problems an suffer from the same $d^6$ memory bottleneck as other methods. 

To put this general problem into perspective, consider for example the problem of estimating the conditional von Neumann entropy for a bipartite state $\rho_{AB}$ acting on a $d_A d_B$-dimensional Hilbert space subject to the statistics of tomographically complete measurements on both sides. For example arising from  a \emph{mutually unbiased bases protocol} (MUB).
The size of the
resulting optimization problem is governed by the dimension $d_A d_B$ and by the number of constraints $n$. When Alice and Bob each employ $d_A+1$ and $d_B+1$ MUB's, the number of raw constraints is
\begin{align}
  n = d_A(d_A+1)\,d_B(d_B+1) = \Theta\!\left(d_A^2 d_B^2\right),
\end{align}
so that $m$ grows quadratically in the local dimensions. Consequently, a generic interior-point treatment that forms and factorizes the associated Schur complement incurs a memory cost of $\Theta(n^2) = \Theta(d_A^4 d_B^4)$ and a per-iteration runtime cost dominated by $\Theta(n^3) = \Theta(d_A^6 d_B^6)$ (e.g. Cholesky factorization), which quickly becomes prohibitive even for moderate local dimensions independently of the specific technique for handling the non-linear conditional von Neumann entropy as a cost function\cite{Nesterov2018}.

The methods presented in this work are not only better in computation performance, they also share several practical properties that make them particularly attractive for implementations in security-critical environments.

\textbf{(i) Hardware-efficient and deployable.}
The numerical core consists almost entirely of standard spectral operations and requires only quadratic memory in the Hilbert-space dimension, making it well suited to established BLAS and LAPACK implementations. We demonstrate this by running the complete solver on a Raspberry Pi~3 B+ with $1,\mathrm{GB}$ of memory and an Arm Cortex-A53 processor. To put these resources into perspective, processors of this class are found in embedded devices ranging from home routers to typical control boards used in existing qkd devices. A complete numerical security analysis can therefore be performed within a hardware regime representative of widely deployed embedded systems.

\textbf{(ii) Small and auditable implementation.}
The simplicity of the numerical primitives allows the complete algorithm to be implemented in fewer than 100 lines of Common Lisp, with eigensystem decomposition as the only sophisticated numerical primitive at the algorithmic level. While compact code does not by itself guarantee software security, it substantially reduces the amount of code that must be trusted, reviewed, and maintained.

\textbf{(iii) Verifiable final certificate.}
The final certificate can be evaluated using controlled rational approximations and exact arithmetic with little additional overhead. Its validity can therefore be established independently of floating-point accuracy or the numerical behaviour of the preceding optimization.

For future work, the substantial reduction in computational cost opens several directions, especially for protocol design and optimization. In discrete-modulated CV-qkd, for example, higher-order constellations such as 16-QAM or 64-QAM can now be explored systematically and compared with Gaussian modulation\cite{Denys2021}, while pulse and energy shaping can be incorporated as additional optimization parameters. The increased numerical headroom may likewise make more refined dimension-reduction techniques practically accessible. More fundamentally, key-rate estimation becomes fast enough to be included directly in a feedback loop: experimental parameters, modulation regions, or pulse shapes could now be optimized adaptively against the certified key rate itself.

As a last point, we have to stress that the so far best reported finite size \cite{Arqand2025} analyses are based on the sandwiched Renyí entropy\cite{Tomamichel2016}. For future work, our method can be easily adapted to this without changing the basic architecture or sacrificing too much of its performance.  

\section{Methods}\label{methods}

\subsection{Formulation as bipartite relative entropy optimization} It is well known that the qkd problem \eqref{eq:qkdproblem} can be formulated as an optimization acting only on the  Alice--Bob system \cite{Tan2021,Tomamichel2016}. This will be our starting point.  After source replacement and a justified finite-dimensional reduction, we recall the feasible set of states from \eqref{eq:feasible-set}. 
%Here $M_i$ are Hermitian observables and the $m_i$ are the observed expectation values. They can represent parameter estimation data, calibration information that Alice and Bob use to build the set of compatible states. 
Furthermore, let $\cG_0: \cB(\cH)\to \cB(\cK_0)$ be the CPTP preprocessing map and let
\begin{equation} \label{eq:direct-pinching}
\Phi_0(Y)\coloneqq\sum_a P_a Y P_a
\end{equation}
when $\supp(\mu) \subseteq \supp(\nu)$, and set it to $+\infty$ otherwise. Within this notation the qkd problem is to compute
\begin{equation} \label{eq:normalized-qkd-problem}
    F^* \coloneqq \min_{\rho \in \cF} D(\cG_0(\rho)\,\|\, \Phi_0(\cG_0(\rho)).
\end{equation}
Taking $\cG_0$ to be the identity recovers the direct pinching problem. For now this setting already covers a lot of qkd cases, it does not hold yet for the general case involving post-selections and thus trace-nonincreasing operations. We leave that extension to future work. 

In Appendix~\ref{app:channel-reduction}, we show how to avoid running into ill-defined logarithms, which leads to CPTP maps that are compressed onto a nonzero support, $\cG$, $\Phi$, and their combination $\cZ \coloneqq \Phi \circ \cG$. They satisfy, on their declared output spaces, $\cG(\rho) \succ 0$, $\cZ(\rho) \succ 0$ for $\rho \succ 0$. Using $S(X) = -\tr(X \log X)$ with the supported logarithm and $0 \log 0 \coloneqq 0$, together with the pinching entropy identity gives
\begin{equation}
  D\!\left(\cG_0(\rho)\middle\Vert\Phi_0(\cG_0(\rho))\right)
  =S(\cZ(\rho))-S(\cG(\rho)).
  \label{eq:compressed-objective-identity}
\end{equation}
Thus
\begin{equation}
  \begin{aligned}
    F(\rho)&\coloneqq S(\cZ(\rho))-S(\cG(\rho)),\\
    F^*&\coloneqq \min_{\rho\in\mathcal F}F(\rho).
  \end{aligned}
  \label{eq:channel-objective}
\end{equation}
Denote the optimizer set by
\begin{equation}
  \mathcal S^*\coloneqq\operatorname*{argmin}_{\rho\in\mathcal F}F(\rho).
  \label{eq:optimizer-set}
\end{equation}
The normalized Discrete-Modulated Continuous Variable (DM CV) model of Gonz\'alez Lorente et
al.~\cite{Lorente2025} has an isometric preprocessing map. Identifying its support with the input space reduces the objective to
\begin{equation}
  F(\rho)=S(\cZ(\rho))-S(\rho).
  \label{eq:dmcv-channel-objective}
\end{equation}
The compact block form of $\cZ$ and its Naimark construction are given in Appendix~\ref{app:channel-reduction}. Direct pinching is the further special case $\cG = \operatorname{id}$ and $\cZ = \Phi$.

Throughout the convergence analysis we assume that $cF$ contains a specified Slater state $\rho_\text{sl} \succ 0$, that $\{\mathbbm{1}, M_1, \dots, M_n\}$ is linearly independent, and that the initial state is feasible $\rho_0 \in \cF$ with $\rho_0 \succ 0$. We use the entropies with supported logarithms and also assume exact affine projections, i.e. that the inner Gibbs solver converges exactly (which is true asymptotically).

\subsubsection{Basic calculus for one step of improving a candidate}\label{sec:outerloop}
The main part of our method is to approximate \eqref{eq:channel-objective} by an iteration of inner points $\rho_k\in\mathcal{F}$ with increasing solution quality. At outer iteration $k$, fix the current state $\sigma\coloneqq\rho_k \in \cF$ with $\sigma \succ 0$, and consider
\begin{equation}
  \begin{aligned}
    T(\sigma)
    &\coloneqq\operatorname*{argmin}_{\rho\in\mathcal F}
      D(\rho\Vert\chi_\sigma),\\
    \rho_{k+1}&\coloneqq T(\rho_k).
  \end{aligned}
  \label{eq:exact-outer-map}
\end{equation}
Here $\sigma$, and hence the reference state $\chi_\sigma$, is held fixed while $\rho$ is varied over $\cF$. We call \eqref{eq:exact-outer-map} the exact outer update. The following subsection shows how its state space minimization is evaluated through an inner optimization over Gibbs multipliers.

It remains to construct $\chi_\sigma$ and verify this update improves the objective. First define the objective gradient
\begin{equation}
  G_\sigma\coloneqq\nabla F(\sigma)
  =\cG^*(\log\cG(\sigma))-\cZ^*(\log\cZ(\sigma)),
  \label{eq:channel-gradient}
\end{equation}
which also obeys $\tr(\sigma G_\sigma) = F(\sigma)$.
The moving reference is 
\begin{equation}
  \begin{aligned}
    H_\sigma&\coloneqq\log\sigma-G_\sigma,\\
    Z_\sigma&\coloneqq\tr e^{H_\sigma},\\
    \chi_\sigma&\coloneqq\frac{e^{H_\sigma}}{Z_\sigma}.
  \end{aligned}
  \label{eq:moving-reference}
\end{equation}

To justify the update, the relative entropy perturbation of $F$ gives the majorization
\begin{equation}
  F(\rho)\le
  F(\sigma)+\tr[G_\sigma(\rho-\sigma)]+D(\rho\Vert\sigma) =: Q_\sigma(\rho),
  \label{eq:tangent-majorizer}
\end{equation}
which we derive in Appendix~\ref{app:channel-bregman} by the Bregman identity combined with data processing under $\Phi$ and $\cG$. This line of reasoning also establishes convexity of $F$.
Completing this expression as a relative entropy gives
\begin{equation}
  D(\rho\Vert\chi_\sigma)
  =Q_\sigma(\rho)+\log Z_\sigma.
  \label{eq:projection-majorizer-equivalence}
\end{equation}
As the second term is independent of $\rho$, $T(\sigma)$ also minimizes $Q_\sigma$ over $\cF$. Given that $\sigma$ itself is feasible, we have $F(T(\sigma)) \leq Q_\sigma(T(\sigma)) \leq Q_\sigma(\sigma) = F(\sigma)$. Consequently, every exact step improves the objective. For direct pinching, $G_\sigma = \log \sigma - \log\Phi(\sigma)$ and the update simplifies to
\begin{equation}
  \begin{aligned}
    \chi_\sigma &=\Phi(\sigma),\\
    \rho_{k+1}&=\operatorname*{argmin}_{\rho\in\mathcal F}
      D(\rho\Vert\Phi(\rho_k)).
  \end{aligned}
  \label{eq:direct-outer-map}
\end{equation}
The pinching Pythagorean identity in Lemma~\ref{lem:pinching-identities} then gives the especially transparent comparison
\begin{equation}
  F(\sigma)\ge D(T(\sigma)\Vert\Phi(\sigma))
  \ge F(T(\sigma)).
  \label{eq:direct-candidate-improvement}
\end{equation}

\subsection{Minimization of the relative entropy in each step}\label{sec:affine-gibbs-projection}
The central simplification for tackling optimization step \eqref{eq:exact-outer-map} stems from the
structure we derive here. For $\lambda = (\lambda_1, \dots, \lambda_n) \in \mathbb{R}^n$, write $\lambda \cdot M \coloneqq \sum_i \lambda_i M_i$ and $\lambda \cdot m = \sum_i \lambda_im_i$. For a positive reference state $\chi$, define
\begin{align}
  Z_\chi(\lambda)&\coloneqq\tr \exp(\log\chi-\lambda\cdot M),
  \notag\\
  \rho_{\chi,\lambda}&\coloneqq\frac{\exp(\log\chi-\lambda\cdot M)}{Z_\chi(\lambda)},
  \notag\\
  g_\chi(\lambda)&\coloneqq\log Z_\chi(\lambda)+\lambda\cdot m.
  \label{eq:gibbs-dual-objective}
\end{align}

\paragraph{Gibbs form of the exact step.} The unique minimizer of $D(\rho\,\|\,\chi)$ over $\cF$ is
\begin{equation}
  \rho^*_\chi=\rho_{\chi,\lambda_\chi^*},
  \qquad
  \lambda_\chi^*\coloneqq \operatorname*{argmin}_{\lambda\in\mathbb R^n}
  g_\chi(\lambda).
  \label{eq:unique-information-projection}
\end{equation}
It is positive definite and matches every affine moment set by the constraints exactly. Thus the state space optimization in one outer step becomes an unconstrained problem in only $n$ real Lagrange multipliers. The operator $-\log \chi + \lambda \cdot M$ may be read as an effective Hamiltonian, with $Z_\chi$ its partition function and $\rho_{\chi, \lambda}$ its Gibbs state. The foral statement and proof are in Lemma~\ref{lem:gibbs-projection} of Appendix~\ref{app:gibbs-projection-proof}. 

For every $\tau \in \cF$, including singular $\tau$, the exact Gibbs projection satisfies the affine Pythagorean identity of Proposition~\ref{prop:affine-pythagorean}; see also Ji~\cite{ji2022matrix}.
\begin{equation}
  D(\tau\Vert\chi)
  =D(\tau\Vert\rho^*_\chi)
   +D(\rho^*_\chi\Vert\chi).
  \label{eq:affine-pythagorean-main}
\end{equation}
The multiplier objective is real analytic, and its gradient has the direct interpretation \begin{equation}
  \bigl(\nabla g_\chi(\lambda)\bigr)_i
  =m_i-\tr(\rho_{\chi,\lambda}M_i),
  \label{eq:dual-gradient}
\end{equation}
i.e. it is precisely the vector of moment mismatches. Its Hessian is the positive response matrix of these moments, in other words, the Gram matrix of the observables in the inner product in the underlying Bogoliubov-Kubo-Mori (BKM) geometry. Together with the fact that $\{\mathbbm1,M_1,\ldots,M_n\}$ is linearly independent therefore makes the Hessian positive definite at every finite multiplier. Consequently, the Newton direction is well defined by 
\begin{equation}
  \nabla^2g_\chi(\lambda^{(t)})\,\Delta\lambda^{(t)}
  =-\nabla g_\chi(\lambda^{(t)}).
  \label{eq:newton-direction}
\end{equation}
Together with the coercivity proved in Appendix~\ref{app:gibbs-projection-proof}, this strict convexity guarantees that $g_\chi$ has a unique global minimizer, the Gibbs state, characterized by $\nabla g_\chi = 0$ and hence by satisfying all constraints. Its analytic gradient and Hessian make standard smooth convex solvers, like the damped Newton method we use, ideal. The eigensystem used to form the Gibbs state is also used to form the Hessian, so a Newton step needs one matrix diagonalization and one $n \times n$ linear solve. In previous works, such an optimization is usually done by semi definite programming and quickly becomes resource intensive. The essential ingredient for our efficiency improvement is the found Gibbs structure of the inner optimizers. For more details about the underlying BKM geometry and how we use it for these results, see Appendix~\ref{app:bkm-geometry}.

\subsection{Outer certificates}\label{sec:outer-certificate}
The candidate iteration alone approaches the optimum from above: every exactly feasible candidate $\widehat{\rho}$ satisfies $F^* \leq F(\widehat{\rho})$. However, for a qkd security calculation this is not enough, because a secure key rate requires a certified lower bound on the entropy contribution $F^*$. We therefore accompany each candidate by an outer certificate below $F^*$. Together, the two values bracket the optimum and quantify the remaining optimization gap. 

To construct the certificate, we can again use geometric reasoning and our gradients. Let $\sigma \succ 0$ be a linearization state, typically the current outer iterate, and set $G_\sigma = \nabla F(\sigma)$. Geometrically, convexity means that the tangent to $F$ at $\sigma$ lies below $F$, 
\begin{equation}
  F(\rho)
  \ge F(\sigma)+\tr\!\left[G_\sigma(\rho-\sigma)\right]
  =\tr(G_\sigma\rho),
  \label{eq:supporting-linearization}
\end{equation}
where the last equality uses $\tr(\sigma G_\sigma)=F(\sigma)$. Minimizing this tangent over $\cF$ would already give a valid lower bound, but it is itself a linear semidefinite program. Solving that problem at every iteration would forfeit the main computational advantage of the Gibbs method. Instead, shift $G_\sigma$ by an arbitrary affine combination of the moment observables. Feasibility fixes the expectation of this shift, while the expectation of the resulting Hermitian matrix is bounded below by its smallest eigenvalue. Hence every $\alpha \in \mathbb{R}^n$ gives the inexpensive certificate
\begin{equation}
  F_{\mathrm{cert}}(\sigma,\alpha)
  \coloneqq \lambda_{\min}\!\left(G_\sigma+\alpha\cdot M\right)
    -\alpha\cdot m
  \le F^*.
  \label{eq:certificate-pencil-bound}
\end{equation}
The multiplier already produced by the Gibbs step is a natural choice for $\alpha$, although validity does not depend on that choice. Any exactly feasible candidate $\widehat{\rho}$ closes in on the other side
\begin{equation}
  F_{\mathrm{cert}}(\sigma,\alpha)
  \le F^*\le F(\widehat\rho).
  \label{eq:candidate-certificate-bracket}
\end{equation}
At a positive-definite optimizer the affine optimality condition makes the certificate pencil a scalar multiple of the identity, so an appropriate multiplier closes the bound exactly, making our certificate tight. This is the affine equality specialization of Ref.~\cite{Winick2018}. Proposition~\ref{prop:candidate-certificate} in Appendix~\ref{app:certificate-proof} gives the formal statement and proof.

\subsection{Outer convergence} \label{sec:outer-convergence}
Let us consider the exact sequence in \eqref{eq:exact-outer-map}. Every finite iterate is positive and feasible, and the objective values decrease monotonically. The key estimate compares one step with any optimizer $\rho^* \in \cS^*$:
\begin{equation}
  F(\rho_{k+1})-F^*
  \le D(\rho^*\Vert\rho_k)-D(\rho^*\Vert\rho_{k+1}).
  \label{eq:one-step-telescoping-bound}
\end{equation}
Summing over all $k$ immediately gives the main convergence result for the objective function
\begin{equation}
  0\le F(\rho_n)-F^*
  \le\frac{D(\rho^*\Vert\rho_0)}{n}
  \le\frac{-\log\lambda_{\min}(\rho_0)}{n}.
  \label{eq:global-objective-rate}
\end{equation}
Accordingly, the candidate value converges to the optimum at worst $1/ n$. Compactness and continuity also imply that the states approach the optimizer set 
\begin{equation} \label{eq:state-convergence-qualitative}
    \inf_{\tau \in \mathcal{S}^*} \|\rho_n - \tau\|_1 \longrightarrow 0.
\end{equation}
If the optimizer is unique, the whole sequence converges to it even when it is singular. Theorem~\ref{thm:global-convergence} in Appendix~\ref{app:global-convergence-proof} gives the formal statement and the complete relative entropy proof. For isometric preprocessing its numerator sharpens to $D(\cZ(\rho^*)\Vert\cZ(\rho_0))$, and direct pinching is obtained by setting $\cZ=\Phi$.

In general, however, we cannot assume that the optimizer is unique. In this case, the above result only proves that the state converges to a set, but not necessarily to either of the optimizers in the set. If $\cS^*$ contains at least one positive-definite state $\rho^\sharp \succ 0$, the relative entropy estimate in \eqref{eq:one-step-telescoping-bound} keeps the exact trajectory uniformly away from the boundary. The complete sequence then converges to one of the positive-definite optimizers $\rho^\flat \succ 0$ selected by $\rho_0$, and the exact multipliers converge as well. These selection results and their proofs are given in Corollary~\ref{cor:interior-selection} of Appendix~\ref{app:interior-selection-proof}.

\subsubsection{Geometric convergence}
\label{sec:geometric-convergence}
Once an optimizer $\rho^* \in \cS^*$ has been selected by the trajectory, we can study at which rate do we reach that state with our algorithm. Near the optimizer, the second-order perturbation of relative entropy once again reveals the local BKM geometry. Linearizing the exact map $T$ \eqref{eq:exact-outer-map} in that geometry separates directions along the optimizer set from transverse directions. Our results in Appendix~\ref{app:bkm-geometry} show that the former directions are flat, whereas the latter ones are \textit{strictly} contracting. Unless every feasible state is already optimal, the transverse linearized factor is strictly below one. Consequently, for a suitable $0 < \theta < 1$ and constants $c_\theta, \tilde{c}_\theta$, we see a geometric contraction
\begin{align}
  \norm{\rho_k-\rho^*}
  &\le c_\theta\theta^k,
  \label{eq:geometric-state-rate}\\
  0\le F(\rho_k)-F^*
  &\le\widetilde c_\theta\theta^{2k},
  \label{eq:geometric-objective-rate}
\end{align}
where the norm holds in any fixed norm on the finite-dimensional state space, like trace norm or the BKM norm that was used. For more details, please see Theorem~\ref{thm:geometric-convergence}. Although this is only a local contraction result, close to the optimizer, Theorem~\ref{thm:global-convergence} and Corollary~\ref{cor:interior-selection} ensure that, whenever $\mathcal{S}^*$ contains a positive-definite state, the complete exact trajectory converges to a selected optimizer and therefore enters the contraction neighborhood after finitely many outer steps. Thus the global $\cO(1 / n)$ objective bound applied from the outset, while the geometric state rate and the $O(\theta^{2k})$ objective rate govern the eventual tail.

\subsection{Inner Gibbs solver}\label{sec:inner-gibbs-solver}
The preceding results concern the exact outer map. For a fixed reference $\chi = \chi_{\rho_k}$, \eqref{eq:dual-gradient} says that the inner gradient is exactly the moment mismatch. Newton's method is therefore a natural correction: its positive response matrix converts the current mismatches into the multiplier update in \eqref{eq:newton-direction}. We globalize the step by standard Armijo backtracking \cite{Boyd2004}.

When the number of constraints is $r \geq 1$, Theorem~\ref{thm:inner-newton-convergence} proves that for every fixed positive $\chi$ and finite initial multiplier, the exact damped sequence converges to $\lambda_\chi^*$. Once it is sufficiently close, full Newton steps are accepted and convergence is quadratic. The results can be summarized as the following two bounds
\begin{align}
  g_\chi(\lambda)-g_\chi(\lambda_\chi^*)
  &=D(\rho^*_\chi\Vert\rho_{\chi,\lambda}),
  \label{eq:dual-gap-relative-entropy}\\
  \norm{\rho_{\chi,\lambda}-\rho^*_\chi}_1
  &\le\sqrt{2\bigl[g_\chi(\lambda)-g_\chi(\lambda_\chi^*)\bigr]}.
  \label{eq:dual-gap-state-bound}
\end{align}
Thus a dual objective gap of at most $\varepsilon$ guarantees that the approximate Gibbs state returned by the inner solver is within $\sqrt{2\varepsilon}$ in trace norm of the exact Gibbs state.

In practice the inner solve stops after finitely many steps. Unless its final moment vanishes exactly, the returned Gibbs state is not exactly feasible and hence does not equal the outer map analyzed above. A practical stopping rule must therefore be paired with a feasibility correction and an analysis of the resulting inexact outer iteration. 

\section*{Acknowledgements}
GK acknowledges support from the Excellence Cluster - Matter and Light for Quantum Computing (ML4Q-2) and by the European Research Council (ERC Grant Agreement No. 948139). R.S.\ is supported  by the DFG under Germany's Excellence Strategy - EXC-2123 QuantumFrontiers - 390837967 and  SFB 1227 (DQ-mat), the Quantum Valley Lower Saxony, and the German Federal Ministry of Research, Technology and Space (BMFTR) via the projects CBQD, SEQUIN, Quics, Quanda, and ATIQ.  

\section*{Use of LLM}
The core  of this algorithm was developed during the work on \cite{Schwonnek2018Diss}. In this extended version, \cite{OpenAI2026ChatGPT} and \cite{claude2026} were used for typographic improvements, for refining the details of the mathematical proofs, and for Coding. The authors take full responsibility for the content. 

\bibliographystyle{naturemag}
\bibliography{lit}

% =====================
% Appendix
% =====================
\clearpage
\onecolumngrid
\appendix
\section*{Appendices}
\section{Channel reduction and support}
\label{app:channel-reduction}

Matrix logarithms in the main algorithm must be evaluated on fixed faithful
output spaces, whereas the unreduced protocol maps may contain null
directions.  This appendix removes those directions by exact support
compression, proves that the compressed maps remain CPTP, and verifies that
the compact DM CV block representation preserves the objective.  Set
\begin{equation}
  \begin{aligned}
    Q&\coloneqq \operatorname{supp}(\cG_0(\mathbbm1_{\mathcal H})),\\
    R&\coloneqq \operatorname{supp}(\Phi_0(\cG_0(\mathbbm1_{\mathcal H})))
  \end{aligned}
  \label{eq:channel-support-projectors}
\end{equation}
as orthogonal projections on $\mathcal K_0$, with $\cG_0$ and $\Phi_0$ the uncompressed maps. Let
$\mathcal K_G\coloneqq \operatorname{ran}Q$ and
$\mathcal K_Z\coloneqq \operatorname{ran}R$, and let
$V_Q:\mathcal K_G\hookrightarrow\mathcal K_0$ and
$V_R:\mathcal K_Z\hookrightarrow\mathcal K_0$ be the canonical isometric
inclusions.  Thus
$V_Q^\dagger V_Q=\mathbbm1_{\mathcal K_G}$,
$V_QV_Q^\dagger=Q$,
$V_R^\dagger V_R=\mathbbm1_{\mathcal K_Z}$, and
$V_RV_R^\dagger=R$.  Define
\begin{align}
  \cG:\mathcal B(\mathcal H)&\longrightarrow\mathcal B(\mathcal K_G),
  &\cG(X)&\coloneqq V_Q^\dagger\cG_0(X)V_Q,
  \label{eq:compressed-preprocessing-map}\\
  \Phi:\mathcal B(\mathcal K_G)&\longrightarrow\mathcal B(\mathcal K_Z),
  &\Phi(Y)&\coloneqq V_R^\dagger
    \Phi_0(V_QYV_Q^\dagger)V_R,
  \label{eq:compressed-pinching-map}
\end{align}
Then, the combined map is given by 
\begin{equation}
  \cZ\coloneqq \Phi\circ\cG:
    \mathcal B(\mathcal H)\longrightarrow\mathcal B(\mathcal K_Z).
  \label{eq:compressed-composed-map}
\end{equation}
For $X\succeq0$,
\begin{equation}
  0\preceq\cG_0(X)
  \preceq\norm{X}_{\mathrm{op}}\cG_0(\mathbbm1_{\mathcal H}).
  \label{eq:preprocessing-support-domination}
\end{equation}
Hence every positive output of $\cG_0$ is supported on $Q$.  Decomposing a
general operator into a linear combination of positive operators gives the
same support statement for the complete range.  Compression by $V_Q$ loses
neither trace nor nonzero spectrum and preserves complete positivity, so
$\cG$ is CPTP.  By definition of $Q$,
$\cG(\mathbbm1_{\mathcal H})\succ0$ on $\mathcal K_G$.

On $\mathcal K_G$ there are constants $0<c\le C<\infty$ such that
\begin{equation}
  cV_QV_Q^\dagger
  \preceq\cG_0(\mathbbm1_{\mathcal H})
  \preceq CV_QV_Q^\dagger.
  \label{eq:preprocessing-support-equivalence}
\end{equation}
Applying the positive map $\Phi_0$ shows that
\begin{equation}
  \operatorname{supp}(\Phi_0(V_QV_Q^\dagger))=R.
  \label{eq:pinching-output-support}
\end{equation}
Every output of $Y\mapsto\Phi_0(V_QYV_Q^\dagger)$ is consequently supported
on $R$.  The second compression again preserves trace and complete
positivity; hence $\Phi$ and $\cZ$ are CPTP and
$\Phi(\mathbbm1_{\mathcal K_G})\succ0$ on $\mathcal K_Z$ and
$\cZ(\mathbbm1_{\mathcal H})\succ0$ on $\mathcal K_Z$.  In particular, if
$\rho\succeq b\mathbbm1_{\mathcal H}$ for $b>0$, then
\begin{equation}
  \begin{aligned}
    \cG(\rho)&\succeq b\cG(\mathbbm1_{\mathcal H})\succ0,\\
    \cZ(\rho)&\succeq b\cZ(\mathbbm1_{\mathcal H})\succ0.
  \end{aligned}
  \label{eq:faithful-channel-outputs}
\end{equation}
Thus all output logarithms at positive iterates are taken on fixed exact
supports.

For the normalized DM CV construction, specialize
$\mathcal H=\mathcal H_A\otimes\mathcal H_B$ and let
$\{E_a\in\mathcal B(\mathcal H_B):a\in\mathsf A\}$ be a POVM, so
$E_a\succeq0$ and $\sum_aE_a=\mathbbm1_{\mathcal H_B}$.  With key register
space $\mathcal H_Z\coloneqq \mathbb C^{|\mathsf A|}$ and basis
$\{\ket a:a\in\mathsf A\}$, set $\mathcal K_0\coloneqq \mathcal H_A\otimes\mathcal H_B\otimes\mathcal H_Z,$
\begin{equation}
  \begin{aligned}
    &V:\mathcal H\longrightarrow\mathcal K_0,
    &V\coloneqq \sum_{a\in\mathsf A}
       \mathbbm1_{\mathcal H_A}\otimes\sqrt{E_a}\otimes\ket a
  \end{aligned}
  \label{eq:appendix-dmcv-naimark-data}
\end{equation}
with projectors $P_a\coloneqq \mathbbm1_{\mathcal H_A\otimes\mathcal H_B} \otimes\ket a\!\bra a$. Then the preprocessing map can be written as $\cG_0(X)\coloneqq VXV^\dagger$, where $V^\dagger V=\sum_a\mathbbm1_{\mathcal H_A}\otimes E_a
=\mathbbm1_{\mathcal H}$.  For
$\mathcal K_a\coloneqq \operatorname{ran}(P_aV)\subseteq\mathcal K_0$, let
$V_a:\mathcal K_a\hookrightarrow\mathcal K_0$ be the inclusion and define
\begin{equation}
  \begin{aligned}
    A_a&\coloneqq V_a^\dagger P_aV:\mathcal H\longrightarrow\mathcal K_a,\\
    \mathcal K_Z&\coloneqq \bigoplus_{a\in\mathsf A}\mathcal K_a.
  \end{aligned}
  \label{eq:appendix-dmcv-block-spaces}
\end{equation}
Since $V_aV_a^\dagger$ projects onto $\operatorname{ran}(P_aV)$,
\begin{equation}
  A_a^\dagger A_a
  =V^\dagger P_aV_aV_a^\dagger P_aV
  =V^\dagger P_aV.
  \label{eq:dmcv-block-completeness-step}
\end{equation}
Summing over $a$ gives
$\sum_aA_a^\dagger A_a=\mathbbm1_{\mathcal H}$, so the compact block map
\begin{equation}
  \cZ:\mathcal B(\mathcal H)\longrightarrow\mathcal B(\mathcal K_Z),
  \qquad
  \cZ(X)\coloneqq \bigoplus_{a\in\mathsf A}A_aXA_a^\dagger,
  \label{eq:appendix-dmcv-compact-channel}
\end{equation}
is CPTP.  Each $A_a$ is onto its declared range, and $\cZ(\rho)$ has the same
nonzero block spectra as $\Phi_0(V\rho V^\dagger)$.  Since $V$ is an
isometry, $S(V\rho V^\dagger)=S(\rho)$, and hence the normalized DM CV
objective is $F(\rho)=S(\cZ(\rho))-S(\rho)$.

\section{Relative entropy identities}
\label{app:relative-entropy-identities}

The convergence analysis repeatedly needs exact entropy decompositions that
remain valid for boundary states.  This appendix establishes the pinching,
channel Bregman, affine projection, and moving reference identities used to
turn individual Gibbs steps into telescoping estimates.

\subsection{Pinching entropy and Pythagorean identities}
\label{app:pinching-identities}

\begin{lem}[Pinching entropy and Pythagorean identities]
  \label{lem:pinching-identities}
  Let $\cH$ be a finite-dimensional Hilbert space and let
  $\{\widehat P_b:b\in\mathsf B\}\subset\mathcal B(\mathcal H)$ be a finite
  family of mutually orthogonal projections summing to
  $\mathbbm1_{\mathcal H}$.  Define the pinching
  $\Phi:\mathcal B(\mathcal H)\to\mathcal B(\mathcal H)$ by
  $\Phi(X)\coloneqq \sum_{b\in\mathsf B} P_b X P_b$.  For every
  $X\in\mathcal B(\mathcal H)$ with $X\succeq0$,
  \begin{equation}
    D(X\Vert\Phi(X))=S(\Phi(X))-S(X),
    \label{eq:boundary-pinching-entropy}
  \end{equation}
  including when $X$ is singular.  If $\tau$ and $\sigma$ are density
  operators on $\mathcal H$ and $\sigma\succ0$, then
  \begin{equation}
    D(\tau\Vert\Phi(\sigma))
    =D(\tau\Vert\Phi(\tau))
     +D(\Phi(\tau)\Vert\Phi(\sigma)).
    \label{eq:pinching-pythagorean-appendix}
  \end{equation}
\end{lem}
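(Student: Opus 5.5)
The plan is to work blockwise in the decomposition $\mathcal H=\bigoplus_b \operatorname{ran}\widehat P_b$. The two facts that do most of the work are that $\Phi(X)$ commutes with every $\widehat P_b$, and that the pinching is self-adjoint and idempotent with respect to the Hilbert--Schmidt inner product.

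\emph{Support condition.} For $X\succeq0$ we have $\Phi(X)\succeq \widehat P_b X\widehat P_b$ for each $b$. If $\Phi(X)v=0$, then $\langle v,\widehat P_bX\widehat P_b v\rangle=0$ for all $b$, so $X^{1/2}\widehat P_b v=0$ for all $b$. Summing over $b$ gives $X^{1/2}v=0$. Hence $\ker\Phi(X)\subseteq\ker X$, that is, $\supp X\subseteq\supp\Phi(X)$, and $D(X\Vert\Phi(X))$ is finite.

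\emph{Proof of \eqref{eq:boundary-pinching-entropy}.} Since $\Phi(X)=\bigoplus_b \widehat P_bX\widehat P_b$ is block diagonal, its supported logarithm $L\coloneqq\log\Phi(X)$ (defined as $0$ on the kernel) is block diagonal as well, so $\Phi(L)=L$. Self-adjointness of $\Phi$ then gives
\[
\tr(XL)=\tr(X\Phi(L))=\tr(\Phi(X)L)=-S(\Phi(X)).
\]
The support inclusion guarantees that $\tr X\log\Phi(X)$, computed with the supported logarithm, is the correct term in $D$. Subtracting this from $\tr X\log X=-S(X)$, with the convention $0\log0=0$, yields the claim, singular $X$ included.

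\emph{Proof of \eqref{eq:pinching-pythagorean-appendix}.} Let $\sigma\succ0$. Then $\Phi(\sigma)\succeq\min_b\lambda_{\min}(\sigma)\,\mathbbm1\succ0$, because each block $\widehat P_b\sigma\widehat P_b$ is positive definite on its range. So $\log\Phi(\sigma)$ is an honest, block-diagonal logarithm, and all three relative entropies are finite: the middle one by the support inclusion, the last because $\Phi(\sigma)\succ0$. Expanding each term,
\begin{align*}
D(\tau\Vert\Phi(\tau))+D(\Phi(\tau)\Vert\Phi(\sigma))
&=\tr\tau\log\tau-\tr\tau\log\Phi(\tau)\\
&\quad+\tr\Phi(\tau)\log\Phi(\tau)-\tr\Phi(\tau)\log\Phi(\sigma).
\end{align*}
By the same self-adjointness and block-diagonality argument, $\tr\tau\log\Phi(\tau)=\tr\Phi(\tau)\log\Phi(\tau)$, so the two middle terms cancel. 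Likewise $\tr\Phi(\tau)\log\Phi(\sigma)=\tr\tau\log\Phi(\sigma)$, and what remains is exactly $D(\tau\Vert\Phi(\sigma))$.

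\emph{Main obstacle.} The only delicate step is the boundary case: justifying that the supported logarithm of a singular $\Phi(\tau)$ or $\Phi(X)$ is still block diagonal, and that pairing it against $\tau$ instead of $\Phi(\tau)$ is legitimate. Both points follow from two facts. First, the support projection of a block-diagonal operator is itself block diagonal, since it is a spectral projection of an operator commuting with every $\widehat P_b$. Second, the support inclusion established at the start ensures no $\log 0$ contributions appear.
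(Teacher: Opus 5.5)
Your proposal is correct and takes essentially the same route as the paper: block diagonality of the (supported) logarithms of $\Phi(X)$, $\Phi(\tau)$ and $\Phi(\sigma)$, combined with Hilbert--Schmidt self-adjointness of the pinching, followed by expanding the three relative entropies. The only difference is minor: you prove $\supp X\subseteq\supp\Phi(X)$ with a direct kernel argument, whereas the paper invokes the pinching inequality $X\preceq s\,\Phi(X)$, with $s$ the number of nonzero blocks.
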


\begin{proof}
Let $s$ be the number of nonzero pinching blocks.  The pinching inequality
$X\preceq s\Phi(X)$ implies
$\operatorname{supp}(X)\preceq\operatorname{supp}(\Phi(X))$, so the
relative entropy in \eqref{eq:boundary-pinching-entropy} is finite.  The
supported logarithm $\log_{\operatorname{supp}}\Phi(X)$ is block diagonal.
Self-adjointness of $\Phi$ in the Hilbert--Schmidt pairing therefore gives
\begin{equation}
  \tr\!\left[X\log_{\operatorname{supp}}\Phi(X)\right]
  =\tr\!\left[\Phi(X)\log_{\operatorname{supp}}\Phi(X)\right],
  \label{eq:pinching-log-pairing}
\end{equation}
which proves \eqref{eq:boundary-pinching-entropy} directly.

For the second identity, both $\log\Phi(\sigma)$ and
$\log_{\operatorname{supp}}\Phi(\tau)$ are block diagonal.  Applying the same
self-adjointness relation to each logarithm and expanding the three relative
entropies proves \eqref{eq:pinching-pythagorean-appendix}.  The assumption
$\sigma\succ0$ ensures all second-argument logarithms are ordinary and all
three quantities are finite.
\end{proof}

Taking $X=\cG_0(\rho)$ for a density operator $\rho$ on $\mathcal H$ in
\eqref{eq:boundary-pinching-entropy}, and using invariance of the nonzero
spectra under the exact compressions in Appendix~\ref{app:channel-reduction},
gives
\begin{equation}
  D(\cG_0(\rho)\Vert\Phi_0(\cG_0(\rho)))
  =S(\cZ(\rho))-S(\cG(\rho))=F(\rho).
  \label{eq:appendix-compressed-objective-identity}
\end{equation}
In the direct case, $\mathcal K_0=\mathcal H$, $\cG_0=\operatorname{id}$,
and \eqref{eq:pinching-pythagorean-appendix} applies with
$\Phi=\Phi_0$.

\subsection{Channel Bregman identity}
\label{app:channel-bregman}

\begin{prop}[Channel Bregman identity and relative smoothness]
  \label{prop:channel-bregman}
  Let
  $\cG:\mathcal B(\mathcal H)\to\mathcal B(\mathcal K_G)$ and
  $\Phi:\mathcal B(\mathcal K_G)\to\mathcal B(\mathcal K_Z)$ be the
  compressed CPTP maps in Appendix~\ref{app:channel-reduction}, and let
  $\cZ\coloneqq \Phi\circ\cG$.  For density operators $\rho$ and $\sigma$ on
  $\mathcal H$ with $\sigma\succ0$, the Bregman difference of $F$ is
  \begin{equation}
    \mathfrak B_F(\rho,\sigma)
    =D(\cG(\rho)\Vert\cG(\sigma))
     -D(\cZ(\rho)\Vert\cZ(\sigma))
    \label{eq:appendix-channel-bregman}
  \end{equation}
  and satisfies
  \begin{equation}
    0\le \mathfrak B_F(\rho,\sigma)\le D(\rho\Vert\sigma).
    \label{eq:appendix-relative-smoothness}
  \end{equation}
  Consequently, $F$ is convex on the complete state space.  On positive
  states its Hilbert--Schmidt gradient is
  \begin{equation}
    \nabla F(\sigma)
    =\cG^*(\log\cG(\sigma))-\cZ^*(\log\cZ(\sigma))
    =G_\sigma.
    \label{eq:appendix-channel-gradient}
  \end{equation}
\end{prop}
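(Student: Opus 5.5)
The plan is to write $F=f_{\cZ}-f_{\cG}$ with $f_{\cG}(\rho)\coloneqq -S(\cG(\rho))=\tr[\cG(\rho)\log\cG(\rho)]$ and $f_{\cZ}(\rho)\coloneqq -S(\cZ(\rho))$, and to compute the Bregman difference of each piece separately. For $\sigma\succ0$, \eqref{eq:faithful-channel-outputs} gives $\cG(\sigma)\succ0$ on $\mathcal K_G$ and $\cZ(\sigma)\succ0$ on $\mathcal K_Z$. Hence both output logarithms are ordinary and the maps $\rho\mapsto S(\cG(\rho))$, $\rho\mapsto S(\cZ(\rho))$ are differentiable at $\sigma$.

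\textbf{Gradient.} I would use the standard derivative $\frac{d}{dt}\tr[X_t\log X_t]=\tr[\dot X_t\log X_t]$, valid because $\tr\dot X_t=0$ on trace-preserving directions and because of the usual cyclicity argument for $\tr[X\,D\log(X)[\dot X]]=\tr\dot X$. Pulling back through the adjoints, $\nabla(-S\circ\cG)(\sigma)=\cG^*(\log\cG(\sigma))$, up to a multiple of the identity that is irrelevant on traceless directions. Subtracting the analogous expression for $\cZ$ gives \eqref{eq:appendix-channel-gradient}. Since $\cG^*(\mathbbm1)=\mathbbm1=\cZ^*(\mathbbm1)$, the identity ambiguities cancel exactly. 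Moreover, $\tr(\sigma G_\sigma)=\tr[\cG(\sigma)\log\cG(\sigma)]-\tr[\cZ(\sigma)\log\cZ(\sigma)]=F(\sigma)$.

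\textbf{Bregman identity.} I would expand $\mathfrak B_F(\rho,\sigma)=F(\rho)-F(\sigma)-\tr[G_\sigma(\rho-\sigma)]$. Using $\tr(\sigma G_\sigma)=F(\sigma)$, this becomes $F(\rho)-\tr(\rho G_\sigma)$. Inserting $G_\sigma$ and moving the adjoints across gives
\begin{equation*}
\tr[\cG(\rho)\log_{\operatorname{supp}}\cG(\rho)]-\tr[\cG(\rho)\log\cG(\sigma)]-\bigl(\tr[\cZ(\rho)\log_{\operatorname{supp}}\cZ(\rho)]-\tr[\cZ(\rho)\log\cZ(\sigma)]\bigr).
\end{equation*}
This is exactly $D(\cG(\rho)\Vert\cG(\sigma))-D(\cZ(\rho)\Vert\cZ(\sigma))$, with both terms finite because the second arguments are positive definite. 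The only care needed is for singular $\rho$. Here the convention $0\log0=0$ and the supported logarithm make each term well defined, so the identity holds on the whole state space and not just at positive $\rho$.

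\textbf{Bounds.} The lower bound $\mathfrak B_F\ge0$ is data processing of relative entropy under the CPTP map $\Phi$, applied to the pair $\cG(\rho),\cG(\sigma)$, because $\cZ=\Phi\circ\cG$. The upper bound follows from data processing under $\cG$: $D(\cG(\rho)\Vert\cG(\sigma))\le D(\rho\Vert\sigma)$, and the subtracted term is nonnegative. Both maps being CPTP was established in Appendix~\ref{app:channel-reduction}.

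\textbf{Convexity.} On positive $\sigma$, nonnegativity of the Bregman difference says $F$ lies above its tangent planes. Together with differentiability on the interior, this gives convexity on the positive-definite states. Convexity then extends to the full state space by continuity of $F$, since the entropy is continuous in finite dimension, taking limits along $\rho_t=(1-t)\rho+t\mathbbm1/d$. I expect this boundary extension, together with justifying the gradient formula via the Fréchet derivative of the matrix logarithm, to be the main technical points. Both are routine, and everything else reduces to data processing.
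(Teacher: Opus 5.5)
Your proposal is correct and follows essentially the same route as the paper. The gradient's identity terms cancel by unitality of $\cG^*,\cZ^*$, and the Bregman identity comes from expanding against $G_\sigma$; the paper packages this as $\mathfrak B_{h\circ\cG}(\rho,\sigma)=D(\cG(\rho)\Vert\cG(\sigma))$ with $h=-S$. The two bounds follow from data processing under $\Phi$ and $\cG$, and convexity extends to the boundary by continuity. One harmless slip: your opening decomposition should read $F=f_{\cG}-f_{\cZ}$, which is the sign you already use in every later formula.
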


\begin{proof}
For $X\succeq0$, set
$h(X)\coloneqq \tr[X\log_{\operatorname{supp}}X]=-S(X)$.  At $\sigma\succ0$,
\begin{equation}
  \begin{aligned}
    \mathrm Dh_\sigma[X]&=\tr[X(\log\sigma+\mathbbm1)],\\
    \mathfrak B_h(\rho,\sigma)&=D(\rho\Vert\sigma).
  \end{aligned}
  \label{eq:negative-entropy-bregman}
\end{equation}
Linearity and trace preservation give
\begin{align}
  \mathfrak B_{h\circ\cG}(\rho,\sigma)
  &=D(\cG(\rho)\Vert\cG(\sigma)),
  \label{eq:preprocessing-bregman}\\
  \mathfrak B_{h\circ\cZ}(\rho,\sigma)
  &=D(\cZ(\rho)\Vert\cZ(\sigma)).
  \label{eq:output-bregman}
\end{align}
Since $F=h\circ\cG-h\circ\cZ$, subtraction proves
\eqref{eq:appendix-channel-bregman}.  Data processing under $\Phi$ makes its
right-hand side nonnegative.  Data processing under $\cG$ and nonnegativity
of relative entropy give
\begin{equation}
  \mathfrak B_F(\rho,\sigma)
  \le D(\cG(\rho)\Vert\cG(\sigma))
  \le D(\rho\Vert\sigma).
  \label{eq:channel-relative-smoothness-proof}
\end{equation}
For these standard properties of quantum relative entropy, see
Watrous~\cite{Watrous2018}, Theorems~5.35 and 5.38.  Thus $F$ is convex
on positive states.  Mixing boundary arguments with a fixed positive state
and using finite-dimensional entropy continuity extends convexity and
continuity to the complete state space.

Differentiation gives
\begin{equation}
  \nabla F(\sigma)
  =\cG^*(\log\cG(\sigma)+\mathbbm1_{\mathcal K_G})
   -\cZ^*(\log\cZ(\sigma)+\mathbbm1_{\mathcal K_Z}).
  \label{eq:channel-gradient-before-cancellation}
\end{equation}
The adjoints are unital because the channels are trace preserving, so the
identity terms cancel.  Adjointness then gives
\begin{equation}
  \begin{aligned}
    \tr(\sigma G_\sigma)
    &=\tr[\cG(\sigma)\log\cG(\sigma)]\\
    &\quad-\tr[\cZ(\sigma)\log\cZ(\sigma)]
    =F(\sigma).
  \end{aligned}
  \label{eq:gradient-expectation-identity}
\end{equation}
which proves \eqref{eq:appendix-channel-gradient}.  Finally, the Bregman
expansion and \eqref{eq:appendix-relative-smoothness} prove
\begin{equation}
  \begin{aligned}
    F(\rho)&\le Q_\sigma(\rho),\\
    F(\sigma)&=Q_\sigma(\sigma).
  \end{aligned}
  \label{eq:appendix-majorization-property}
\end{equation}
\end{proof}

\subsection{Affine information projection Pythagorean identity}
\label{app:affine-pythagorean}

\begin{prop}[Affine information projection Pythagorean identity]
  \label{prop:affine-pythagorean}
  Let $\chi$ be a positive-definite density operator on $\mathcal H$, and suppose
  a finite multiplier $\lambda^*\in\mathbb R^r$ makes the
  Gibbs state $\rho^*_\chi\coloneqq \rho_{\chi,\lambda^*}$ satisfy all constraints in
  \eqref{eq:feasible-set}.  Then, for every
  $\tau\in\mathcal F$, including singular $\tau$,
  \begin{equation}
    D(\tau\Vert\chi)
    =D(\tau\Vert\rho^*_\chi)
     +D(\rho^*_\chi\Vert\chi).
    \label{eq:affine-pythagorean-appendix}
  \end{equation}
  In particular, $\rho^*_\chi$ is the unique minimizer of
  $D(\rho\Vert\chi)$ over $\mathcal F$.
\end{prop}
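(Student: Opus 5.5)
The plan is to expand all three relative entropies using the explicit Gibbs form of $\rho^*_\chi$. Since $\chi\succ0$, the operator $\log\chi-\lambda^*\cdot M$ is an ordinary Hermitian matrix. Writing $Z^*\coloneqq Z_\chi(\lambda^*)$, we therefore have $\rho^*_\chi\succ0$ and
\begin{equation*}
  \log\rho^*_\chi=\log\chi-\lambda^*\cdot M-(\log Z^*)\mathbbm1 .
\end{equation*}
In particular, both second arguments $\chi$ and $\rho^*_\chi$ are positive definite. Hence $D(\tau\Vert\chi)$ and $D(\tau\Vert\rho^*_\chi)$ are finite for every density operator $\tau$, singular or not, once $\tr\tau\log\tau$ is taken with the supported logarithm and the convention $0\log0=0$. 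No support condition needs to be checked.

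The key step is the difference of the two quantities involving $\tau$. The terms $\tr\tau\log\tau$ cancel, which is legitimate because both are the same finite number. Inserting the formula for $\log\rho^*_\chi$ then gives
\begin{equation*}
  D(\tau\Vert\chi)-D(\tau\Vert\rho^*_\chi)
  =\tr\tau(\log\rho^*_\chi-\log\chi)
  =-\lambda^*\cdot\bigl(\tr\tau M_i\bigr)_i-\log Z^*\,\tr\tau .
\end{equation*}
Because $\tau\in\mathcal F$, we have $\tr\tau M_i=m_i$ and $\tr\tau=1$, so the right-hand side equals $-\lambda^*\cdot m-\log Z^*$. This value depends on $\tau$ only through the constraint values.

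By hypothesis $\rho^*_\chi\in\mathcal F$, so the same computation applies with $\tau=\rho^*_\chi$. Using $D(\rho^*_\chi\Vert\rho^*_\chi)=0$, it yields $D(\rho^*_\chi\Vert\chi)=-\lambda^*\cdot m-\log Z^*$. Equating the two expressions proves \eqref{eq:affine-pythagorean-appendix}.

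For uniqueness, the identity gives $D(\tau\Vert\chi)\ge D(\rho^*_\chi\Vert\chi)$ for every $\tau\in\mathcal F$. Equality holds iff $D(\tau\Vert\rho^*_\chi)=0$, and by Klein's inequality (faithfulness of relative entropy) this happens iff $\tau=\rho^*_\chi$.

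The only real care point is justifying the cancellation of $\tr\tau\log\tau$ for singular $\tau$. I would handle it by noting explicitly that each relative entropy splits as $\tr\tau\log_{\operatorname{supp}}\tau-\tr\tau\log(\cdot)$ with both terms finite, because the second argument is full rank. No limiting argument is then needed. Everything else is the linearity of the trace together with feasibility.
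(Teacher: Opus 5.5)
Your proposal is correct and is essentially the paper's argument: the paper derives $D(\tau\Vert\rho_{\chi,\lambda})=D(\tau\Vert\chi)+g_\chi(\lambda)$ for feasible $\tau$ from the Gibbs logarithm identity, evaluates it at $\lambda^*$ once for arbitrary $\tau\in\mathcal F$ and once for $\tau=\rho^*_\chi$, and subtracts, exactly as you do with your constant $-\lambda^*\cdot m-\log Z^*=-g_\chi(\lambda^*)$. Your other two points also match the paper: because both reference states are positive definite, no support condition on singular $\tau$ is needed, and uniqueness follows from faithfulness of relative entropy.
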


\begin{proof}
For every feasible $\tau$ and every finite $\lambda$,
\begin{equation}
  \log\rho_{\chi,\lambda}
  =\log\chi-\lambda\cdot M
    -\log Z_\chi(\lambda)\mathbbm1_{\mathcal H},
  \label{eq:gibbs-log-identity}
\end{equation}
and hence feasibility gives
\begin{equation}
  D(\tau\Vert\rho_{\chi,\lambda})
  =D(\tau\Vert\chi)+g_\chi(\lambda).
  \label{eq:gibbs-variational-identity}
\end{equation}
Evaluate \eqref{eq:gibbs-variational-identity} first with arbitrary
$\tau\in\mathcal F$ and then with $\tau=\rho^*_\chi$, both at
$\lambda=\lambda^*$, and subtract.  This proves
\eqref{eq:affine-pythagorean-appendix}.  Both reference states are
positive, so no support qualification on the possibly singular first
argument is needed.  Nonnegativity and faithfulness of relative entropy prove
uniqueness.  The general matrix projection framework is given by
Ji~\cite{ji2022matrix}, Lemma~3.10 and Theorem~3.1; the calculation above is
the direct proof for the present affine slice.
\end{proof}

\subsection{Moving-reference and mirror three-point identities}
\label{app:mirror-identities}

\begin{prop}[Moving-reference and mirror three-point identities]
  \label{prop:mirror-identities}
  Let $\sigma\in\mathcal F$ be positive, define $\chi(\sigma)$ by
  \eqref{eq:moving-reference}, and let
  $\eta\coloneqq T(\sigma)$ be its exact affine information projection.  For every
  density operator $\tau$ on $\mathcal H$, including singular $\tau$,
  \begin{equation}
    D(\tau\Vert\chi(\sigma))
    =D(\tau\Vert\sigma)+\tr(G_\sigma\tau)+\log Z(\sigma).
    \label{eq:moving-reference-identity}
  \end{equation}
  If $\tau\in\mathcal F$, then
  \begin{equation}
    D(\tau\Vert\sigma)-D(\tau\Vert\eta)
    =D(\eta\Vert\sigma)+\tr[G_\sigma(\eta-\tau)].
    \label{eq:mirror-three-point-identity}
  \end{equation}
  Taking $\tau=\sigma$ gives
  \begin{equation}
    \tr[G_\sigma(\sigma-\eta)]
    =D(\sigma\Vert\eta)+D(\eta\Vert\sigma).
    \label{eq:symmetrized-mirror-step}
  \end{equation}
\end{prop}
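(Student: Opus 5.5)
The plan is to read off all three identities from the definition \eqref{eq:moving-reference} of $\chi(\sigma)$, combined with the affine Pythagorean identity of Proposition~\ref{prop:affine-pythagorean}.

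\textbf{Step 1: the moving-reference identity \eqref{eq:moving-reference-identity}.} Since $H_\sigma=\log\sigma-G_\sigma$ is Hermitian, $\chi(\sigma)\succ0$ and
\begin{equation*}
\log\chi(\sigma)=\log\sigma-G_\sigma-\log Z(\sigma)\mathbbm1 .
\end{equation*}
For any density operator $\tau$, substitute this into $D(\tau\Vert\chi(\sigma))=\tr\tau\log_{\operatorname{supp}}\tau-\tr\tau\log\chi(\sigma)$ and use $\tr\tau=1$. The result is $D(\tau\Vert\sigma)+\tr(G_\sigma\tau)+\log Z(\sigma)$. Both second arguments, $\sigma$ and $\chi(\sigma)$, are positive definite, so every term is finite and the identity holds for singular $\tau$ as well. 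The only care needed is to use the supported logarithm for $\tau$, which appears identically on both sides.

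\textbf{Step 2: the mirror three-point identity \eqref{eq:mirror-three-point-identity}.} Take $\tau\in\mathcal F$. By Lemma~\ref{lem:gibbs-projection} the exact projection $\eta=T(\sigma)$ is a positive-definite Gibbs state with a finite multiplier that satisfies the constraints. Proposition~\ref{prop:affine-pythagorean} with $\chi=\chi(\sigma)$ then gives
\begin{equation*}
D(\tau\Vert\chi(\sigma))=D(\tau\Vert\eta)+D(\eta\Vert\chi(\sigma)).
\end{equation*}
Rewrite both $\chi$-relative entropies with Step 1. For $D(\eta\Vert\chi(\sigma))$ this is legitimate because $\eta$ is a density operator. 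The $\log Z(\sigma)$ terms cancel, leaving
\begin{equation*}
D(\tau\Vert\sigma)+\tr(G_\sigma\tau)=D(\tau\Vert\eta)+D(\eta\Vert\sigma)+\tr(G_\sigma\eta).
\end{equation*}
Rearranging gives \eqref{eq:mirror-three-point-identity}. All quantities are finite because $\sigma\succ0$ and $\eta\succ0$.

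\textbf{Step 3: the symmetrized step \eqref{eq:symmetrized-mirror-step}.} This is the special case $\tau=\sigma$, which is allowed because $\sigma\in\mathcal F$ by assumption. The left side of \eqref{eq:mirror-three-point-identity} becomes $-D(\sigma\Vert\eta)$. Moving it across gives $\tr[G_\sigma(\sigma-\eta)]=D(\sigma\Vert\eta)+D(\eta\Vert\sigma)$.

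\textbf{Main obstacle.} The delicate point is not the algebra but justifying the use of Proposition~\ref{prop:affine-pythagorean}. That requires $\eta$ to be realized by a finite multiplier, i.e.\ that $g_{\chi(\sigma)}$ attains its minimum. I would take this from Lemma~\ref{lem:gibbs-projection}, which supplies coercivity of $g_\chi$ for positive $\chi$ under the Slater and linear-independence assumptions. I would also note explicitly that $Z(\sigma)<\infty$ and $\chi(\sigma)\succ0$, since $G_\sigma$ is a finite Hermitian matrix whenever $\sigma\succ0$ (by \eqref{eq:faithful-channel-outputs}).
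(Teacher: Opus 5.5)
Your proposal is correct and follows the paper's proof essentially step for step. Like the paper, you substitute $\log\chi(\sigma)=\log\sigma-G_\sigma-\log Z(\sigma)\mathbbm1$ into the relative entropy, apply the affine Pythagorean identity to $\eta=T(\sigma)$ and expand both $\chi(\sigma)$-terms so that $\log Z(\sigma)$ cancels, and then specialize to $\tau=\sigma$; your remarks on the finite multiplier from Lemma~\ref{lem:gibbs-projection} and on the finiteness of $G_\sigma$ only make explicit what the paper leaves implicit.
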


\begin{proof}
Equation~\eqref{eq:moving-reference} gives
\begin{equation}
  \log\chi(\sigma)
  =\log\sigma-G_\sigma
    -\log Z(\sigma)\mathbbm1_{\mathcal H}.
  \label{eq:moving-reference-logarithm}
\end{equation}
Substitution into relative entropy proves
\eqref{eq:moving-reference-identity}.  Using
\eqref{eq:gradient-expectation-identity} and the definition of
$Q_\sigma$ also gives
\begin{equation}
  D(\rho\Vert\chi(\sigma))
  =Q_\sigma(\rho)+\log Z(\sigma).
  \label{eq:appendix-projection-majorizer-equivalence}
\end{equation}
The additive term is independent of $\rho$, so the affine information
projection is equivalently the entropy-mirror subproblem defined by
$Q_\sigma$ in \eqref{eq:tangent-majorizer}.

Apply Proposition~\ref{prop:affine-pythagorean} to $\eta=T(\sigma)$ and
comparison state $\tau\in\mathcal F$.  Expand both moving-reference terms by
\eqref{eq:moving-reference-identity}; the normalization constant cancels
and gives \eqref{eq:mirror-three-point-identity}.  The specialization
$\tau=\sigma$ yields \eqref{eq:symmetrized-mirror-step}.
\end{proof}

\section{Affine Gibbs projection}
\label{app:gibbs-projection-proof}

Each outer iteration is well defined only if its affine Gibbs subproblem has a
unique feasible solution at a finite multiplier.  This appendix proves that
claim from the Slater and constraint-independence assumptions, thereby
justifying the exact update map used throughout the convergence analysis.
Fix an arbitrary positive-definite density operator $\chi$ on $\mathcal H$;
the Gibbs family and dual objective associated with this reference are those
in \eqref{eq:gibbs-dual-objective}.

\begin{lem}[Optimality of affine Gibbs states]
  \label{lem:gibbs-projection}
  For every $\chi\succ0$, the function $g_\chi$ has a unique finite minimizer
  $\lambda_\chi^*$.  The state $\rho^*_\chi\coloneqq 
  \rho_{\chi,\lambda_\chi^*}$ is positive definite, matches every affine
  moment, and is the unique information projection
  \begin{equation}
    \rho^*_\chi
    =\operatorname*{argmin}_{\rho\in\mathcal F}D(\rho\Vert\chi).
    \label{eq:appendix-gibbs-projection-statement}
  \end{equation}
  If $r=0$, this projection is $\chi$ itself.
\end{lem}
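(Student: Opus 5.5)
The plan is to establish existence and uniqueness of a finite minimizer of $g_\chi$ by convexity plus coercivity, and then read off feasibility and optimality from stationarity together with Proposition~\ref{prop:affine-pythagorean}. I take $r=n$ as the number of constraints. If $r=0$, then $g_\chi\equiv\log\tr\chi=0$, the Gibbs family is just $\{\chi\}$, and the claim reduces to nonnegativity and faithfulness of $D(\cdot\Vert\chi)$. So assume $r\ge1$.

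\textbf{Smoothness and strict convexity.} The map $\lambda\mapsto\log\tr\exp(\log\chi-\lambda\cdot M)$ is real analytic. Its gradient is $-\tr(\rho_{\chi,\lambda}M_i)$, so $\nabla g_\chi$ is the moment mismatch in \eqref{eq:dual-gradient}. Its Hessian is the BKM covariance
\begin{equation*}
v^\top\nabla^2 g_\chi(\lambda)v=\int_0^1\tr\!\left[\rho^{s}\,\Delta\,\rho^{1-s}\Delta\right]ds,\qquad \Delta\coloneqq v\cdot M-\tr(\rho\, v\cdot M)\mathbbm1,\quad \rho=\rho_{\chi,\lambda}.
\end{equation*}
Because $\rho\succ0$, this quadratic form vanishes only if $\Delta=0$, that is, only if $v\cdot M$ is a multiple of $\mathbbm1$. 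Linear independence of $\{\mathbbm1,M_1,\dots,M_r\}$ then forces $v=0$. Hence $g_\chi$ is strictly convex, and any minimizer is unique.

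\textbf{Coercivity (main obstacle).} Here I use the Slater state $\rho_{\rm sl}\succ0$ in $\mathcal F$. Applying \eqref{eq:gibbs-variational-identity} with $\tau=\rho_{\rm sl}$ gives
\begin{equation*}
g_\chi(\lambda)=D(\rho_{\rm sl}\Vert\rho_{\chi,\lambda})-D(\rho_{\rm sl}\Vert\chi).
\end{equation*}
Fix a unit direction $v$ and set $A\coloneqq v\cdot M-\tr(\rho_{\rm sl}\,v\cdot M)\mathbbm1$. This operator is nonzero by linear independence and has $\tr(\rho_{\rm sl}A)=0$. Since $\rho_{\rm sl}\succ0$, $A$ must have a strictly negative eigenvalue. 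By Golden--Thompson, or more simply by the eigenvalue bound $\tr e^{X}\ge e^{\lambda_{\max}(X)}$ combined with Weyl's inequality,
\begin{equation*}
g_\chi(tv)=\log\tr\exp(\log\chi-tA)\ge\lambda_{\max}(\log\chi-tA)\ge t\,|\lambda_{\min}(A)|+\lambda_{\min}(\log\chi)\;\longrightarrow\;\infty .
\end{equation*}
The quantity $|\lambda_{\min}(A)|$ is continuous and positive on the compact unit sphere, so it is bounded below there by some $c>0$. This gives the linear growth $g_\chi(\lambda)\ge c\|\lambda\|-C$. A continuous coercive function attains its minimum, so a finite minimizer $\lambda^*_\chi$ exists, and by strict convexity it is unique.

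\textbf{Conclusion.} At $\lambda^*_\chi$ the gradient vanishes, so by \eqref{eq:dual-gradient} the state $\rho^*_\chi$ matches every moment. Being an exponential, it is positive definite with unit trace, so $\rho^*_\chi\in\mathcal F$. Proposition~\ref{prop:affine-pythagorean} now applies and yields both \eqref{eq:appendix-gibbs-projection-statement} and uniqueness of the information projection.
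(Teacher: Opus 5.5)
Your proof is correct and follows the paper's argument essentially step for step. Your centered operator $A=v\cdot M-(v\cdot m)\mathbbm1$ satisfies $-\lambda_{\min}(A)=v\cdot m-\lambda_{\min}(A_v)$, so your constant $c$ is exactly the paper's uniform Slater margin $\delta_{\mathrm{sl}}$. The eigenvalue/Weyl lower bound, the BKM-covariance strict convexity and the final appeal to Proposition~\ref{prop:affine-pythagorean} also coincide with the paper's.

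Two cosmetic points. Golden--Thompson bounds $\tr e^{X+Y}$ from above, so it is not what you need; the bound $\tr e^{X}\ge e^{\lambda_{\max}(X)}$ you also cite is the right one. The identity $g_\chi(\lambda)=D(\rho_{\rm sl}\Vert\rho_{\chi,\lambda})-D(\rho_{\rm sl}\Vert\chi)$ is never used and can be dropped.
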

\noindent This is the formal version of the Gibbs structure result stated in
Sec.~\ref{sec:affine-gibbs-projection}.

\begin{proof}[Proof of Lemma~\ref{lem:gibbs-projection}]
If $r=0$, there are no nontrivial affine moments.  The normalized reference
$\chi$ is feasible, and $D(\rho\Vert\chi)\ge0$ with equality only at
$\rho=\chi$.  Assume henceforth that $n\ge1$.

For a unit vector $v\in\mathbb R^r$, set $A_v\coloneqq \sum_iv_iM_i$.  Feasibility
of a Slater state $\rho_\mathrm{sl} \succ 0$ gives
\begin{equation}
  v\cdot m-\lambda_{\min}(A_v)
  =\tr\!\left[
    \rho_{\mathrm{sl}}(A_v-\lambda_{\min}(A_v)\mathbbm1)
  \right].
  \label{eq:slater-directional-margin}
\end{equation}
The operator in parentheses is positive semidefinite and nonzero: otherwise
$A_v$ would be a scalar multiple of the identity, contrary to independence
of $\{\mathbbm1,M_1,\ldots,M_n\}$.  Since
$\rho_{\mathrm{sl}}\succ0$, continuity and compactness of the unit sphere
give the uniform margin
\begin{equation}
  \delta_{\mathrm{sl}}
  \coloneqq \min_{\norm{v}_2=1}
    \bigl(v\cdot m-\lambda_{\min}(A_v)\bigr)>0.
  \label{eq:uniform-slater-margin}
\end{equation}

For nonzero $\lambda$, write $\lambda=tv$, where
$t=\norm{\lambda}_2$ and $\norm{v}_2=1$.  The Rayleigh principle gives
\begin{equation}
  \log\tr\exp(\log\chi-tA_v)
  \ge\lambda_{\min}(\log\chi)-t\lambda_{\min}(A_v),
  \label{eq:log-partition-rayleigh-bound}
\end{equation}
and hence
\begin{equation}
  g_\chi(\lambda)
  \ge\lambda_{\min}(\log\chi)+t\delta_{\mathrm{sl}}.
  \label{eq:dual-coercivity}
\end{equation}
Thus $g_\chi$ is coercive and attains a minimum at a finite multiplier.

Duhamel differentiation of the exponential gives, for $i=1,\ldots,r$,
\begin{equation}
  \bigl(\nabla g_\chi(\lambda)\bigr)_i
  =m_i-\tr(\rho_{\chi,\lambda}M_i).
  \label{eq:appendix-dual-gradient}
\end{equation}
Proposition~\ref{prop:inner-bkm-susceptibility} in
Appendix~\ref{app:bkm-geometry} identifies the Hessian as a BKM covariance.
That covariance is positive definite under the constraint-independence
assumption, so $g_\chi$ is strictly convex and its finite minimizer
$\lambda_\chi^*$ is unique.

Stationarity is exact moment matching for $i=1,\ldots,r$:
\begin{equation}
  \tr(\rho^*_\chi M_i)=m_i.
  \label{eq:exact-moment-matching}
\end{equation}
Together with normalization, these are precisely the affine constraints.
The normalized exponential is positive definite.  Therefore the assumption
of Proposition~\ref{prop:affine-pythagorean} hold with
$\lambda^*=\lambda_\chi^*$, and that proposition proves
\begin{equation}
  \rho^*_\chi
  =\operatorname*{argmin}_{\rho\in\mathcal F}D(\rho\Vert\chi).
  \label{eq:appendix-unique-information projection}
\end{equation}
It also gives
\begin{equation}
  g_\chi(\lambda_\chi^*)
  =-D(\rho^*_\chi\Vert\chi),
  \label{eq:optimal-dual-value}
\end{equation}
by taking $\tau=\rho^*_\chi$ in
\eqref{eq:gibbs-variational-identity}.  This completes the proof.
\end{proof}

\section{Outer certificate}
\label{app:certificate-proof}

A feasible candidate gives an upper bound, but a security calculation also
needs a lower bound on the unknown optimum.  This appendix derives that bound
from a supporting hyperplane and proves its tightness at a positive optimizer
through the affine optimality conditions.

For a positive-definite density operator $\omega$ on $\mathcal H$ and
$\alpha\in\mathbb R^n$, define the certificate pencil and its value by
\begin{equation}
  \begin{aligned}
    C(\omega,\alpha)&\coloneqq G_\omega+\sum_{i=1}^n\alpha_iM_i,\\
    L(\omega,\alpha)&\coloneqq \lambda_{\min}(C(\omega,\alpha))-\alpha\cdot m.
  \end{aligned}
  \label{eq:appendix-certificate-definition}
\end{equation}
\begin{prop}[Exact candidate--certificate bracket]
  \label{prop:candidate-certificate}
  For every $\omega\succ0$, every $\alpha\in\mathbb R^n$, and every
  feasible candidate $\widehat\rho$,
  \begin{equation}
    L(\omega,\alpha)\le F^*\le F(\widehat\rho).
    \label{eq:appendix-candidate-certificate-bracket}
  \end{equation}
  If $z\in\mathcal S^*$ is positive definite, there is an
  $\alpha^*\in\mathbb R^r$ such that $L(z,\alpha^*)=F^*$.
\end{prop}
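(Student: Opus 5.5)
The bracket splits into two independent inequalities, and tightness is a separate KKT argument.

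\emph{Upper bound.} The inequality $F^*\le F(\widehat\rho)$ is immediate: $F^*$ is a minimum over $\mathcal F$ and $\widehat\rho\in\mathcal F$.

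\emph{Lower bound.} Fix $\omega\succ0$ and $\alpha\in\mathbb R^n$, and take any $\rho\in\mathcal F$. By Proposition~\ref{prop:channel-bregman}, $\mathfrak B_F(\rho,\omega)\ge0$, which is the supporting-hyperplane inequality
\begin{equation*}
F(\rho)\ge F(\omega)+\tr[G_\omega(\rho-\omega)]=\tr(G_\omega\rho),
\end{equation*}
where the equality uses $\tr(\omega G_\omega)=F(\omega)$ from \eqref{eq:gradient-expectation-identity}. This holds also for singular $\rho$, since the Bregman identity only requires $\omega\succ0$ and $F$ is continuous on the full state space. Feasibility gives $\tr(\rho\,\alpha\cdot M)=\alpha\cdot m$, so
\begin{equation*}
\tr(G_\omega\rho)=\tr\bigl(C(\omega,\alpha)\rho\bigr)-\alpha\cdot m\ge\lambda_{\min}\bigl(C(\omega,\alpha)\bigr)-\alpha\cdot m=L(\omega,\alpha),
\end{equation*}
using $\rho\succeq0$ and $\tr\rho=1$. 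Minimizing over $\rho\in\mathcal F$ then gives $L(\omega,\alpha)\le F^*$.

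\emph{Tightness.} Let $z\in\mathcal S^*$ with $z\succ0$. Since $z$ lies in the relative interior of the positive cone, the only active constraints near $z$ are the affine ones, and $F$ is differentiable at $z$ with gradient $G_z$. First-order optimality of the convex function $F$ on the affine slice $\{\tr\rho=1,\ \tr(\rho M_i)=m_i\}$, intersected with a neighborhood of $z$ in which positivity is automatic, gives
\begin{equation*}
\tr(G_z X)=0\quad\text{for all Hermitian }X\text{ with }\tr X=0,\ \tr(XM_i)=0\ \forall i.
\end{equation*}
Hence $G_z$ lies in the span of $\{\mathbbm1,M_1,\ldots,M_n\}$, which is the orthogonal complement of that subspace in Hilbert--Schmidt geometry. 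Write $G_z=\mu\mathbbm1-\sum_i\alpha^*_iM_i$ and set $C(z,\alpha^*)=\mu\mathbbm1$. Then $\lambda_{\min}=\mu$, and
\begin{equation*}
F^*=F(z)=\tr(zG_z)=\mu-\alpha^*\cdot m=L(z,\alpha^*).
\end{equation*}

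The main point needing care is the KKT step: differentiability of $F$ at positive $z$ (clear, since the logarithms are taken on faithful supports by \eqref{eq:faithful-channel-outputs}), and the absence of inequality multipliers because $z$ is interior to the positive cone. The remaining steps are direct.
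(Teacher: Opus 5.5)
Your proposal is correct and follows essentially the same route as the paper. The paper likewise (i) proves the supporting-hyperplane bound $F(\rho)\ge\tr(G_\omega\rho)$, which you take directly from $\mathfrak B_F(\rho,\omega)\ge0$ in Proposition~\ref{prop:channel-bregman}, the same source as the paper's convexity claim; (ii) shifts by $\alpha\cdot M$ and bounds the result below by $\lambda_{\min}$; and (iii) gets tightness from $G_z+\alpha^*\cdot M=\nu\mathbbm1$ at a positive optimizer. Your orthogonal-complement derivation of that condition is more explicit than the paper's. It also shows that constraint independence is needed only for uniqueness of $\alpha^*$, not for its existence.
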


This proposition is the formal certificate result summarized in
Sec.~\ref{sec:outer-certificate}.

\begin{proof}
Convexity, differentiability, and
\eqref{eq:gradient-expectation-identity} give, for every density operator
$\rho$ on $\mathcal H$,
\begin{equation}
  F(\rho)
  \ge F(\omega)+\tr[G_\omega(\rho-\omega)]
  =\tr(G_\omega\rho).
  \label{eq:appendix-supporting-linearization}
\end{equation}
If
$\rho\in\mathcal F$ and $\alpha\in\mathbb R^n$, then
\begin{equation}
  \tr(G_\omega\rho)
  =\tr\!\left[\left(G_\omega+\sum_i\alpha_iM_i\right)\rho\right]
   -\alpha\cdot m.
  \label{eq:certificate-affine-shift}
\end{equation}
The expectation of a Hermitian matrix in a density operator is at least its
smallest eigenvalue.  This proves the lower bound in
\eqref{eq:appendix-candidate-certificate-bracket}; feasibility of
$\widehat\rho$ proves the upper bound.

For tightness, let $z\in\mathcal S^*$ with $z\succ0$.  Since $z$ lies in the relative
interior of the positive cone within the affine slice, constrained
stationarity and independence of the constraints imply the existence of
$\alpha^*\in\mathbb R^r$ and $\nu\in\mathbb R$ such that
\begin{equation}
  G_z+\sum_i\alpha_i^*M_i=\nu\mathbbm1_{\mathcal H}.
  \label{eq:positive-optimizer-kkt}
\end{equation}
Taking the expectation in $z$ and using feasibility and
\eqref{eq:gradient-expectation-identity} gives
\begin{equation}
  \nu=F^*+\alpha^*\cdot m.
  \label{eq:certificate-kkt-scalar}
\end{equation}
The least eigenvalue of the certificate pencil is $\nu$, and hence
\begin{equation}
  L(z,\alpha^*)=\nu-\alpha^*\cdot m=F^*.
  \label{eq:certificate-tightness}
\end{equation}
The logarithmic linearization was used only at the positive state $\omega$;
the feasible upper candidate may be singular.
\end{proof}

\section{Global convergence}
\label{app:global-convergence-proof}

Objective descent alone does not ensure that repeated exact Gibbs projections
approach the solution of the original problem.  This appendix derives the
telescoping estimate that yields the last-iterate $O(1/n)$ objective rate and
convergence of the states to the optimizer set.  It is the formal counterpart
of Sec.~\ref{sec:outer-convergence}.

\begin{thm}[Global convergence of the exact outer iteration]
  \label{thm:global-convergence}
  For every $\rho^*\in\mathcal S^*$ and every integer $n\ge1$,
  \begin{equation}
    0\le F(\rho_n)-F^*
    \le\frac{D(\rho^*\Vert\rho_0)}{n}
    \le\frac{-\log\lambda_{\min}(\rho_0)}{n}.
    \label{eq:appendix-global-objective-rate}
  \end{equation}
  Moreover,
  \begin{equation}
    \min_{z\in\mathcal S^*}\norm{\rho_n-z}_1\longrightarrow0.
    \label{eq:appendix-optimizer-set-convergence}
  \end{equation}
  If $\mathcal S^*$ is a singleton, the complete state sequence converges to
  its element.  If the compressed preprocessing is identified isometrically
  with the input, so that $\cG=\operatorname{id}$, the first estimate sharpens
  to
  \begin{equation}
    F(\rho_n)-F^*
    \le\frac{D(\cZ(\rho^*)\Vert\cZ(\rho_0))}{n}.
    \label{eq:appendix-isometric-objective-rate}
  \end{equation}
\end{thm}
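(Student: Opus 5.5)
Our plan is to derive the one-step estimate \eqref{eq:one-step-telescoping-bound} from the mirror three-point identity of Proposition~\ref{prop:mirror-identities}, together with the relative smoothness bound of Proposition~\ref{prop:channel-bregman}, and then telescope it.

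\textbf{One-step estimate.} Fix $\rho^*\in\mathcal S^*$, set $\sigma=\rho_k\succ0$ and $\eta=\rho_{k+1}=T(\rho_k)$. Lemma~\ref{lem:gibbs-projection} shows that $\eta$ is feasible and positive definite, so by induction every iterate is a positive feasible state. The majorization \eqref{eq:appendix-majorization-property} gives
\begin{equation*}
  F(\eta)\le F(\sigma)+\tr[G_\sigma(\eta-\sigma)]+D(\eta\Vert\sigma).
\end{equation*}
Apply \eqref{eq:mirror-three-point-identity} with $\tau=\rho^*\in\mathcal F$; this is allowed even when $\rho^*$ is singular. It replaces $\tr[G_\sigma\eta]+D(\eta\Vert\sigma)$ by $\tr[G_\sigma\rho^*]+D(\rho^*\Vert\sigma)-D(\rho^*\Vert\eta)$. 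Using $\tr(\sigma G_\sigma)=F(\sigma)$, we obtain
\begin{equation*}
  F(\eta)\le \tr(G_\sigma\rho^*)+D(\rho^*\Vert\sigma)-D(\rho^*\Vert\eta).
\end{equation*}
The supporting linearization \eqref{eq:appendix-supporting-linearization} gives $\tr(G_\sigma\rho^*)\le F(\rho^*)=F^*$. This proves \eqref{eq:one-step-telescoping-bound}.

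\textbf{Telescoping and the rate.} Sum the one-step estimate over $k=0,\dots,n-1$. The right-hand side telescopes to $D(\rho^*\Vert\rho_0)-D(\rho^*\Vert\rho_n)\le D(\rho^*\Vert\rho_0)$. Monotone descent $F(\rho_{k+1})\le F(\rho_k)$, shown in Sec.~\ref{sec:outerloop}, lets us bound the left-hand sum from below by $n\,(F(\rho_n)-F^*)$. For the last inequality in \eqref{eq:appendix-global-objective-rate}, we write
\begin{equation*}
  D(\rho^*\Vert\rho_0)=-S(\rho^*)-\tr\rho^*\log\rho_0\le-\log\lambda_{\min}(\rho_0),
\end{equation*}
using $S(\rho^*)\ge0$ and $-\log\rho_0\preceq-\log\lambda_{\min}(\rho_0)\mathbbm1$.

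\textbf{Sharpened isometric bound.} When $\cG=\operatorname{id}$, we redo the one-step argument with a tighter majorizer. The Bregman identity \eqref{eq:appendix-channel-bregman} becomes $\mathfrak B_F(\rho,\sigma)=D(\rho\Vert\sigma)-D(\cZ(\rho)\Vert\cZ(\sigma))$. We then need a variant of the three-point identity in which the telescoping potential is $D(\cZ(\rho^*)\Vert\cZ(\rho_k))$. The intended route is to check that $G_\sigma=\log\sigma-\cZ^*\log\cZ(\sigma)$, so that $\chi_\sigma\propto\exp(\cZ^*\log\cZ(\sigma))$. Then
\begin{equation*}
  D(\tau\Vert\chi_\sigma)=F(\tau)+D(\cZ(\tau)\Vert\cZ(\sigma))+\mathrm{const}.
\end{equation*}
Combining this with the Pythagorean identity, Proposition~\ref{prop:affine-pythagorean}, and data processing $D(\rho^*\Vert\eta)\ge D(\cZ(\rho^*)\Vert\cZ(\eta))$ should yield
\begin{equation*}
  F(\eta)-F^*\le D(\cZ(\rho^*)\Vert\cZ(\sigma))-D(\cZ(\rho^*)\Vert\cZ(\eta)),
\end{equation*}
which telescopes as before. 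We expect this bookkeeping to be the main obstacle: making sure the inequalities point the right way, and handling singular $\rho^*$ via supported logarithms and the fact that $\cZ(\rho_k)\succ0$ by \eqref{eq:faithful-channel-outputs}.

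\textbf{State convergence.} $F(\rho_n)\to F^*$ and the state space is compact, so every accumulation point of $(\rho_n)$ lies in $\mathcal F$, which is closed. Continuity of $F$ (Proposition~\ref{prop:channel-bregman}) makes every accumulation point optimal. If $\min_{z\in\mathcal S^*}\norm{\rho_n-z}_1$ did not tend to zero, some subsequence would stay at distance at least $\varepsilon$ from $\mathcal S^*$. That subsequence has a limit point outside $\mathcal S^*$, a contradiction. When $\mathcal S^*$ is a singleton, the same argument shows that every subsequence has a further subsequence converging to the unique optimizer, so the whole sequence converges to it.
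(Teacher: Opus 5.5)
Your proposal is correct and follows essentially the paper's proof. The paper writes the one-step bound as an exact identity and then discards $D(\rho_{k+1}\Vert\rho_k)-\mathfrak B_F(\rho_{k+1},\rho_k)\ge0$ and $\mathfrak B_F(\rho^*,\rho_k)\ge0$; these are precisely the slacks in the majorization and supporting-linearization inequalities you use. Your telescoping, the $-\log\lambda_{\min}(\rho_0)$ bound, the isometric sketch and the compactness argument all match the paper. In the isometric case the bookkeeping closes as you expect: Pythagoras gives $D(\cZ(\rho^*)\Vert\cZ(\sigma))=F(\eta)-F^*+D(\rho^*\Vert\eta)+D(\cZ(\eta)\Vert\cZ(\sigma))$, and data processing followed by dropping the last term yields the telescoping inequality.
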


\begin{proof}[Proof of Theorem~\ref{thm:global-convergence}]
Fix a positive feasible state $\rho$ and put $\eta\coloneqq T(\rho)$.  By
Lemma~\ref{lem:gibbs-projection}, $\eta$ is positive and feasible.
Equation~\eqref{eq:symmetrized-mirror-step} and the Bregman expansion of $F$
give the exact descent identity
\begin{equation}
  F(\rho)-F(\eta)
  =D(\rho\Vert\eta)+D(\eta\Vert\rho)
   -\mathfrak B_F(\eta,\rho).
  \label{eq:exact-descent-identity}
\end{equation}
The last two terms have the decomposition
\begin{align}
  D(\eta\Vert\rho)-\mathfrak B_F(\eta,\rho)
  &=D(\eta\Vert\rho)
    -D(\cG(\eta)\Vert\cG(\rho))
  \notag\\
  &\quad+D(\cZ(\eta)\Vert\cZ(\rho))\ge0.
  \label{eq:descent-nonnegative-decomposition}
\end{align}
Data processing under $\cG$ makes the difference nonnegative, and the last
relative entropy is nonnegative.  Hence $F(\rho_k)$ is nonincreasing.

Fix $\rho^*\in\mathcal S^*$ and set
$\delta_k\coloneqq F(\rho_k)-F^*$.  Subtracting the Bregman expansions of $F(\eta)$
and $F(\rho^*)$ about the same positive base $\rho$ gives
\begin{equation}
  \tr[G_\rho(\eta-\rho^*)]
  =F(\eta)-F^*-\mathfrak B_F(\eta,\rho)
   +\mathfrak B_F(\rho^*,\rho).
  \label{eq:optimizer-bregman-comparison}
\end{equation}
Insert this into the three-point identity
\eqref{eq:mirror-three-point-identity} with
$\tau=\rho^*$.  At step $k$,
\begin{align}
  &D(\rho^*\Vert\rho_k)
    -D(\rho^*\Vert\rho_{k+1})
  \notag\\
  &\quad=\delta_{k+1}
   +D(\rho_{k+1}\Vert\rho_k)
  \notag\\
  &\qquad-\mathfrak B_F(\rho_{k+1},\rho_k)
    +\mathfrak B_F(\rho^*,\rho_k).
  \label{eq:optimizer-comparison-identity}
\end{align}
All terms after $\delta_{k+1}$ are nonnegative by
\eqref{eq:appendix-relative-smoothness}.  The identity remains finite when
$\rho^*$ is singular because every second argument and its channel images are
positive.  Discarding the nonnegative terms proves
\eqref{eq:one-step-telescoping-bound}; summing gives
\begin{equation}
  \sum_{j=1}^n\delta_j\le D(\rho^*\Vert\rho_0).
  \label{eq:gap-telescoping-bound}
\end{equation}
The gaps are nonincreasing, so
$n\delta_n\le\sum_{j=1}^n\delta_j$.  This proves the first rate in
\eqref{eq:appendix-global-objective-rate}.  Since $\rho_0\succ0$,
\begin{equation}
  \begin{aligned}
    D(\rho^*\Vert\rho_0)
    &=-S(\rho^*)
      -\tr(\rho^*\log\rho_0)\\
    &\le-\log\lambda_{\min}(\rho_0).
  \end{aligned}
  \label{eq:initial-relative-entropy-bound}
\end{equation}
which proves its second inequality.

For state convergence, suppose that some subsequence remained at trace
distance at least $\varepsilon>0$ from $\mathcal S^*$.  Compactness of
$\mathcal F$ gives a further subsequence converging to
$\overline\rho\in\mathcal F$.
Equation~\eqref{eq:appendix-global-objective-rate} and
continuity imply $F(\overline\rho)=F^*$, so
$\overline\rho\in\mathcal S^*$, a contradiction.  This proves
\eqref{eq:appendix-optimizer-set-convergence}.  If the optimizer set is a
singleton, state-to-set convergence is convergence of the complete sequence.

For the isometric refinement, identify the preprocessing output support with
the input, so $\cG$ is the identity.  For every density operator $\tau$ on
$\mathcal H$ and positive $\rho\in\mathcal F$, direct substitution into
\eqref{eq:moving-reference-identity} gives
\begin{equation}
  D(\tau\Vert\chi(\rho))
  =F(\tau)+D(\cZ(\tau)\Vert\cZ(\rho))+\log Z(\rho).
  \label{eq:isometric-moving-reference}
\end{equation}
Apply affine Pythagoras with $\tau=\rho^*$ and
$\eta=T(\rho)$.  After
cancelling constants,
\begin{align}
  D(\cZ(\rho^*)\Vert\cZ(\rho))
  &=F(\eta)-F^*+D(\rho^*\Vert\eta)
  \notag\\
  &\quad+D(\cZ(\eta)\Vert\cZ(\rho)).
  \label{eq:isometric-optimizer-comparison}
\end{align}
Data processing under $\cZ$ gives
$D(\rho^*\Vert\eta)
\ge D(\cZ(\rho^*)\Vert\cZ(\eta))$.  Thus
\begin{equation}
  D(\cZ(\rho^*)\Vert\cZ(\rho_k))
  -D(\cZ(\rho^*)\Vert\cZ(\rho_{k+1}))
  \ge\delta_{k+1}.
  \label{eq:isometric-telescoping-potential}
\end{equation}
Telescoping and monotonicity give
\eqref{eq:appendix-isometric-objective-rate}.
\end{proof}

\section{Interior-optimizer selection}
\label{app:interior-selection-proof}

Global convergence to an optimizer set does not by itself select one optimizer
or keep the iterates uniformly away from the boundary.  Assuming a positive
optimizer exists, this appendix proves both properties and derives the spectral
floors needed for the subsequent local and inner-solver analyses.

Assume that $\rho^\sharp\in\mathcal S^*$ is positive definite, and define
\begin{equation}
  \begin{aligned}
    a^\sharp&\coloneqq \lambda_{\min}(\rho^\sharp),\\
    a&\coloneqq \exp\!\left[-\frac{
      D(\rho^\sharp\Vert\rho_0)+S(\rho^\sharp)}{a^\sharp}\right].
  \end{aligned}
  \label{eq:appendix-interior-floor-constants}
\end{equation}
For the channel-output bounds, define
\begin{equation}
  \begin{aligned}
    c_G&\coloneqq \lambda_{\min}(\cG(\mathbbm1_{\mathcal H})),\\
    C_G&\coloneqq \lambda_{\max}(\cG(\mathbbm1_{\mathcal H})),\\
    c_Z&\coloneqq \lambda_{\min}(\cZ(\mathbbm1_{\mathcal H})),\\
    C_Z&\coloneqq \lambda_{\max}(\cZ(\mathbbm1_{\mathcal H})).
  \end{aligned}
  \label{eq:appendix-channel-spectral-constants}
\end{equation}
The support compression makes all four constants strictly positive.

\begin{cor}[Selection of an interior optimizer]
  \label{cor:interior-selection}
  Suppose $\mathcal S^*$ contains $\rho^\sharp\succ0$, and define
  $a^\sharp,a,c_G,C_G,c_Z,C_Z$ as above.  Then
  $\rho_k\succeq a\mathbbm1$ for every $k$, and the complete sequence
  converges to a positive-definite optimizer $\rho^*$ selected by the
  trajectory from $\rho_0$.  The exact multiplier sequence also converges,
  and
  \begin{equation}
    \chi(\rho_k)
    \succeq\frac{a^3c_Gc_Z}{dC_GC_Z}\mathbbm1.
    \label{eq:appendix-selection-reference-floor}
  \end{equation}
  For isometric preprocessing the right-hand coefficient improves to
  $ac_Z/(dC_Z)$; for direct pinching,
  $\chi(\rho_k)=\Phi(\rho_k)\succeq a\mathbbm1$.
\end{cor}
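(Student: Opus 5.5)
The plan is to get everything from the Fejér-type monotonicity \eqref{eq:one-step-telescoping-bound} established in the proof of Theorem~\ref{thm:global-convergence}. Since $\delta_{k+1}\ge0$, it gives, for every $\rho^*\in\mathcal S^*$, that $k\mapsto D(\rho^*\Vert\rho_k)$ is nonincreasing.

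\textbf{Step 1: spectral floor.} Apply this monotonicity with $\rho^*=\rho^\sharp$. It gives $-\tr(\rho^\sharp\log\rho_k)\le D(\rho^\sharp\Vert\rho_0)+S(\rho^\sharp)$. Expand $\rho_k=\sum_j\mu_j\ket{e_j}\!\bra{e_j}$. Then
\[
-\tr(\rho^\sharp\log\rho_k)=\sum_j\bra{e_j}\rho^\sharp\ket{e_j}(-\log\mu_j).
\]
Every term is nonnegative because $\mu_j\le1$. Keeping only the term with the smallest eigenvalue and using $\bra{e_j}\rho^\sharp\ket{e_j}\ge a^\sharp$ yields $-\log\mu_{\min}\le (D(\rho^\sharp\Vert\rho_0)+S(\rho^\sharp))/a^\sharp$. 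This is exactly $\rho_k\succeq a\mathbbm1$.

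\textbf{Step 2: selection and convergence of the states.} By compactness and Theorem~\ref{thm:global-convergence}, some subsequence $\rho_{k_j}$ converges to some $\overline\rho\in\mathcal S^*$. The floor is preserved in the limit, so $\overline\rho\succeq a\mathbbm1\succ0$. Now apply the monotonicity again, this time with $\rho^*=\overline\rho$. Because all iterates satisfy $\rho_k\succeq a\mathbbm1$, the map $\sigma\mapsto D(\overline\rho\Vert\sigma)$ is continuous on this region, so $D(\overline\rho\Vert\rho_{k_j})\to0$. Monotonicity then forces $D(\overline\rho\Vert\rho_k)\to0$ along the whole sequence. Pinsker's inequality gives $\rho_k\to\overline\rho=:\rho^*$.

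\textbf{Step 3: convergence of the multipliers.} From the Gibbs form,
\[
\lambda_k\cdot M+\log Z_k\,\mathbbm1=\log\chi(\rho_k)-\log\rho_{k+1}.
\]
The right-hand side converges because $\rho_k$ converges, all iterates stay above the floor, and $\chi$ is continuous on positive states. Linear independence of $\{\mathbbm1,M_1,\dots,M_n\}$ makes the coefficient map a continuous linear inverse on the span, so $\lambda_k$ converges.

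\textbf{Step 4: reference floor.} This is the only step with real bookkeeping. Set $\sigma=\rho_k$ and $H=\log\sigma-\cG^*(\log\cG(\sigma))+\cZ^*(\log\cZ(\sigma))$, and bound $\lambda_{\min}(H)$ and $\lambda_{\max}(H)$ using that unital positive adjoints preserve scalar operator bounds. The inequalities $a\mathbbm1\preceq\sigma\preceq\mathbbm1$ give:
\begin{itemize}
\item $ac_G\preceq\cG(\sigma)\preceq C_G$;
\item $ac_Z\preceq\cZ(\sigma)\preceq C_Z$;
\item $\log a\preceq\log\sigma\preceq0$.
\end{itemize}
Hence
\[
\lambda_{\min}(H)\ge\log\frac{a^2c_Z}{C_G},\qquad \lambda_{\max}(H)\le\log\frac{C_Z}{ac_G}.
\]
Using $\chi\succeq e^{\lambda_{\min}(H)}/(d\,e^{\lambda_{\max}(H)})\,\mathbbm1$ then gives \eqref{eq:appendix-selection-reference-floor}. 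For isometric preprocessing the first two terms of $H$ cancel, so $H=\cZ^*(\log\cZ(\sigma))$. The same estimate then gives $ac_Z/(dC_Z)$. For direct pinching, $\chi=\Phi(\sigma)$, which is $\succeq a\mathbbm1$ since pinching preserves scalar lower bounds.

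The main obstacle I anticipate is Step 2. There the Fejér argument must use the limit point $\overline\rho$ itself as the comparison optimizer, which requires it to be positive. The floor from Step 1 is exactly what supplies this, together with the needed continuity of $D(\overline\rho\Vert\cdot)$.
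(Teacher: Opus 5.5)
Your proof is correct, and Steps 1, 2 and 4 coincide with the paper's argument. Fej\'er monotonicity of $D(\rho^\sharp\Vert\rho_k)$ from \eqref{eq:optimizer-comparison-identity} gives the floor. A cluster point serves as its own comparison optimizer, and Pinsker upgrades $D(\rho^*\Vert\rho_k)\to0$ to trace-norm convergence. The reference floor comes from the same scalar sandwich $\log(a^2c_Z/C_G)\mathbbm1\preceq H(\rho_k)\preceq\log(C_Z/(ac_G))\mathbbm1$, with the same cancellations in the isometric and pinching cases.

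The genuine difference is Step 3. The paper uses the implicit function theorem. The positive-definite BKM Hessian of $g_\chi$ is the Jacobian of the stationarity equation $\nabla g_{\chi(\rho)}(\lambda)=0$, so the multiplier is a $C^1$ function of the reference near $\chi(\rho^*)$, and $\rho\mapsto\chi(\rho)$ is smooth on the floor region. You instead read the multipliers off the Gibbs logarithm, $\lambda_k\cdot M+\log Z_k\mathbbm1=\log\chi(\rho_k)-\log\rho_{k+1}$. Its right-hand side converges because $\rho_k$ and $\rho_{k+1}$ share the positive limit $\rho^*$, and you then invert the coefficient map on the span of $\{\mathbbm1,M_1,\dots,M_n\}$.

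Your route is more elementary. It uses only continuity of the logarithm above the floor and linear independence, not invertibility of the Hessian or differentiability of $\chi$. It also identifies the limit multiplier explicitly as the one in the fixed-point relation $T(\rho^*)=\rho^*$, with convergence of $\log Z_k$ as a by-product. The paper's route proves more: differentiable dependence of the multiplier on the reference. That is the kind of regularity its local linearization of $T$ in Theorem~\ref{thm:geometric-convergence} relies on.

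A minor remark on your closing comment: \eqref{eq:optimizer-comparison-identity} holds for singular optimizers too. What the floor really provides in Step 2 is continuity of $D(\overline\rho\Vert\cdot)$ along the iterates and positivity of the selected limit, not the applicability of the Fej\'er bound itself.
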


This is the formal selection result summarized in
Sec.~\ref{sec:outer-convergence}.

\begin{proof}[Proof of Corollary~\ref{cor:interior-selection}]
Apply \eqref{eq:optimizer-comparison-identity} with
$\rho^*=\rho^\sharp$.  Its nonnegative right-hand
terms imply, for every $k$,
\begin{equation}
  D(\rho^\sharp\Vert\rho_k)
  \le D(\rho^\sharp\Vert\rho_0).
  \label{eq:interior-fejer-bound}
\end{equation}
Every density operator satisfies
$\rho_k\preceq\mathbbm1_{\mathcal H}$, so
$-\log\rho_k\succeq0$.  Since
$\rho^\sharp\succeq a^\sharp\mathbbm1_{\mathcal H}$,
\begin{align}
  D(\rho^\sharp\Vert\rho_0)+S(\rho^\sharp)
  &\ge\tr[\rho^\sharp(-\log\rho_k)]
  \notag\\
  &\ge a^\sharp\tr(-\log\rho_k)
  \notag\\
  &\ge a^\sharp[-\log\lambda_{\min}(\rho_k)].
  \label{eq:iterate-floor-log-bound}
\end{align}
Rearranging and exponentiating gives
$\rho_k\succeq a\mathbbm1_{\mathcal H}$ with $a$ from
\eqref{eq:appendix-interior-floor-constants}.

Appendix~\ref{app:global-convergence-proof} and compactness provide a cluster
point $\rho^*\in\mathcal S^*$.  This is the optimizer selected by the exact
trajectory, and the common floor makes it positive.  Using
\eqref{eq:optimizer-comparison-identity} shows that
$D(\rho^*\Vert\rho_k)$ is nonincreasing; along a subsequence converging
to $\rho^*$ it tends to zero.  Hence
\begin{equation}
  D(\rho^*\Vert\rho_k)\longrightarrow0.
  \label{eq:selected-relative-entropy-convergence}
\end{equation}
Natural-log quantum Pinsker
\cite[Theorem~5.38]{Watrous2018} yields
\begin{equation}
  \norm{\rho_k-\rho^*}_1^2
  \le2D(\rho^*\Vert\rho_k)\longrightarrow0.
  \label{eq:selected-trace-convergence}
\end{equation}

For $r\ge1$, the exact multiplier solves
$\nabla g_{\chi(\rho)}(\lambda)=0$.  Its multiplier Jacobian is the
positive-definite Hessian characterized by
\eqref{eq:directional-dual-hessian}.  The implicit
function theorem makes the unique multiplier a continuously differentiable
function of the positive reference near $\chi(\rho^*)$.  Proposition
\ref{prop:channel-bregman} and
\eqref{eq:moving-reference} make
$\rho\mapsto\chi(\rho)$ smooth in the common positive neighborhood.  Thus
$\rho_k\to\rho^*$ implies convergence of the exact multipliers.  For
$r=0$ the claim is vacuous.

It remains to prove the moving-reference floor
\begin{equation}
  \chi(\rho_k)
  \succeq\frac{a^3c_Gc_Z}{dC_GC_Z}\mathbbm1_{\mathcal H}.
  \label{eq:appendix-general-reference-floor}
\end{equation}
From
$a\mathbbm1_{\mathcal H}\preceq\rho_k\preceq\mathbbm1_{\mathcal H}$ and
positivity,
\begin{align}
  ac_G\mathbbm1_{\mathcal K_G}
  &\preceq\cG(\rho_k)\preceq C_G\mathbbm1_{\mathcal K_G},
  \label{eq:preprocessing-iterate-bounds}\\
  ac_Z\mathbbm1_{\mathcal K_Z}
  &\preceq\cZ(\rho_k)\preceq C_Z\mathbbm1_{\mathcal K_Z}.
  \label{eq:output-iterate-bounds}
\end{align}
The adjoints are positive and unital.  Applying the logarithm and the
adjoints gives
\begin{equation}
  \log\!\left(\frac{a^2c_Z}{C_G}\right)\mathbbm1_{\mathcal H}
  \preceq H(\rho_k)
  \preceq
  \log\!\left(\frac{C_Z}{ac_G}\right)\mathbbm1_{\mathcal H}.
  \label{eq:mirror-hamiltonian-bounds}
\end{equation}
The smallest eigenvalue of $e^{H(\rho_k)}$ is at least $a^2c_Z/C_G$, while
its trace is at most $dC_Z/(ac_G)$.  Their quotient proves
\eqref{eq:appendix-general-reference-floor}.

For isometric preprocessing, the terms
$\log\rho-\cG^*(\log\cG(\rho))$ cancel under the support identification.
The same argument using only $\cZ$ gives
$\chi(\rho_k)\succeq ac_Z/(dC_Z)\mathbbm1_{\mathcal H}$.  For direct pinching,
self-adjointness and functional calculus give
$H(\rho_k)=\log\Phi(\rho_k)$ and
$\chi(\rho_k)=\Phi(\rho_k)\succeq a\mathbbm1_{\mathcal H}$.
\end{proof}

\section{BKM geometry and local convergence}
\label{app:bkm-geometry}
\label{app:geometric-convergence-proof}

The BKM geometry enters the analysis in two places: as the susceptibility of
the inner Gibbs family and as the local metric generated by relative entropy
around a positive optimizer.  This appendix develops both roles.  It contains
the formulas omitted from Secs.~\ref{sec:affine-gibbs-projection} and
\ref{sec:geometric-convergence}, and then proves the formal local convergence
result.

\subsection{Gibbs susceptibility}
\label{app:inner-bkm-susceptibility}

For a positive state $\rho$, define the Kubo--Mori operator
\begin{equation}
  \cJ_\rho(X)\coloneqq \int_0^1\rho^uX\rho^{1-u}\,\mathrm du.
  \label{eq:kubo-mori-operator}
\end{equation}
For the Gibbs family in \eqref{eq:gibbs-dual-objective}, set
\begin{equation}
  \mu_i(\lambda)\coloneqq \tr(\rho_{\chi,\lambda}M_i),
  \qquad
  \overline M_i(\lambda)\coloneqq M_i-\mu_i(\lambda)\mathbbm1.
  \label{eq:appendix-centered-observables}
\end{equation}

\begin{prop}[BKM susceptibility of the Gibbs family]
  \label{prop:inner-bkm-susceptibility}
  The Hessian of $g_\chi$ is
  \begin{equation}
    \bigl(\nabla^2g_\chi(\lambda)\bigr)_{ij}
    =\tr\!\left[
      \overline M_i(\lambda)\,
      \cJ_{\rho_{\chi,\lambda}}(\overline M_j(\lambda))
    \right].
    \label{eq:dual-hessian}
  \end{equation}
  It is positive definite for every finite $\lambda$ when
  $\{\mathbbm1,M_1,\ldots,M_r\}$ is real-linearly independent.  If
  $\log\chi-\lambda\cdot M=\sum_a\kappa_a\ket{a}\!\bra{a}$ and
  $w_a\coloneqq e^{\kappa_a}/\sum_be^{\kappa_b}$, then
  \begin{equation}
    \begin{aligned}
      \bigl(\nabla^2g_\chi(\lambda)\bigr)_{ij}
      ={}&\sum_{a,b}\ell(w_a,w_b)
        \bra{a}\overline M_i(\lambda)\ket{b}\\
        &\qquad\times
        \bra{b}\overline M_j(\lambda)\ket{a},
    \end{aligned}
    \label{eq:spectral-dual-hessian}
  \end{equation}
  where $\ell(x,y)\coloneqq (x-y)/(\log x-\log y)$ for $x\ne y$ and
  $\ell(x,x)\coloneqq x$.
\end{prop}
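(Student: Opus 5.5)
The plan is to compute the Hessian directly from the Duhamel formula for the derivative of the matrix exponential, and then read off both positivity and the spectral form.

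Write $K(\lambda)\coloneqq\log\chi-\lambda\cdot M$, so that $\rho_{\chi,\lambda}=e^{K}/\tr e^{K}$.

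\textbf{Step 1: Duhamel derivative.} Duhamel's formula gives
\[
\partial_j e^{K}=-\int_0^1 e^{uK}M_je^{(1-u)K}\,\mathrm du .
\]
Dividing by $\tr e^K$ and subtracting the derivative of the normalization yields
\[
\partial_j\rho_{\chi,\lambda}=-\cJ_{\rho}(M_j)+\mu_j\rho=-\cJ_\rho(\overline M_j).
\]
Here I use $\cJ_\rho(\mathbbm1)=\rho$ and the identity $e^{uK}/(\tr e^K)^u\cdot e^{(1-u)K}/(\tr e^K)^{1-u}$, which turns the integrand into $\rho^uX\rho^{1-u}$.

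\textbf{Step 2: the Hessian formula.} Differentiating the gradient formula \eqref{eq:appendix-dual-gradient} gives
\[
\partial_j\partial_i g_\chi=-\tr(M_i\,\partial_j\rho)=\tr[M_i\cJ_\rho(\overline M_j)].
\]
To replace $M_i$ by $\overline M_i$, note that
\[
\tr[\mathbbm1\cdot\cJ_\rho(\overline M_j)]=\tr(\rho\overline M_j)=0 ,
\]
by cyclicity of the trace. This proves \eqref{eq:dual-hessian}.

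\textbf{Step 3: spectral form.} Insert the eigenbasis of $K$, in which $\rho=\sum_a w_a\ket a\!\bra a$. Then
\[
\bra a\cJ_\rho(X)\ket b=\int_0^1 w_a^u w_b^{1-u}\,\mathrm du\,\bra aX\ket b=\ell(w_a,w_b)\bra aX\ket b,
\]
by the elementary integral $\int_0^1 x^uy^{1-u}\,\mathrm du=(x-y)/(\log x-\log y)$ for $x\neq y$, with value $x$ when $x=y$. Expanding the trace in this basis then gives \eqref{eq:spectral-dual-hessian}.

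\textbf{Step 4: positive definiteness.} This is the main point. For $v\in\mathbb R^r$ set $X=\sum_iv_i\overline M_i$, which is Hermitian. The spectral form gives
\[
v^T\nabla^2g\,v=\sum_{a,b}\ell(w_a,w_b)\,|\bra aX\ket b|^2\ge0,
\]
since every $w_a>0$ and hence $\ell(w_a,w_b)>0$.

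Equality forces $X=0$. But $X=\sum_iv_iM_i-(v\cdot\mu)\mathbbm1$, so $X=0$ is a real linear relation among $\{\mathbbm1,M_1,\dots,M_r\}$. Independence then forces $v=0$.

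The only care needed is that the $\lambda$-dependent centering does not spoil this argument: the coefficient of $\mathbbm1$ is simply absorbed into the dependence relation, so strict positivity holds at every finite $\lambda$.
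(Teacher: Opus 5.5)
Your proposal is correct and follows essentially the same route as the paper: Duhamel differentiation of the exponential, centering via $\cJ_\rho(\mathbbm1)=\rho$ and $\tr\cJ_\rho(X)=\tr(\rho X)$, the strictly positive logarithmic-mean coefficients of $\cJ_\rho$ in the Gibbs eigenbasis, and constraint independence to force $v=0$. You only spell out details the paper compresses, and you derive the spectral form before using it for positivity rather than after.
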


\begin{proof}
Duhamel differentiation of the matrix exponential gives
$\partial_{\lambda_i}\log Z_\chi=-\mu_i$.  Differentiating once more and
centering the observables removes the normalization derivative, yielding
\eqref{eq:dual-hessian}.  In a direction $v\in\mathbb R^r$, with
$A_v\coloneqq \sum_iv_iM_i$ and
$\mu_v\coloneqq \tr(\rho_{\chi,\lambda}A_v)$, this reads
\begin{equation}
  v^\mathsf T\nabla^2g_\chi(\lambda)v
  =\tr\!\left[
    (A_v-\mu_v\mathbbm1)\,
    \cJ_{\rho_{\chi,\lambda}}(A_v-\mu_v\mathbbm1)
  \right].
  \label{eq:directional-dual-hessian}
\end{equation}
Every coefficient of $\cJ_{\rho_{\chi,\lambda}}$ in an eigenbasis of the
positive Gibbs state is a strictly positive logarithmic mean.  The quadratic
form can therefore vanish only if $A_v=\mu_v\mathbbm1$, which constraint
independence forces to have $v=0$.  This proves positive definiteness.
Evaluating the same logarithmic means in the eigenbasis of the Gibbs
Hamiltonian gives \eqref{eq:spectral-dual-hessian}.
\end{proof}

\subsection{Local geometry of the exact outer map}
\label{app:outer-bkm-geometry}

The global $O(1/n)$ bound does not capture the faster local behavior near the
selected positive optimizer.  We now linearize the exact update in BKM
geometry, identify its transverse contraction, and obtain geometric state
convergence together with a quadratic objective rate.

Assume the assumption of Appendix~\ref{app:interior-selection-proof}, and let
$\rho^*\succ0$ be the selected optimizer to which the exact sequence
$\{\rho_k\}_{k\ge0}$ converges.  The real feasible tangent space is
\begin{equation}
  \mathcal V
  \coloneqq \{X\in\mathcal B(\mathcal H):X=X^\dagger,
       \tr X=0,\ \tr(XM_i)=0,\ i=1,\ldots,r\}.
  \label{eq:appendix-feasible-tangent-space}
\end{equation}
For a positive state $\tau$ on any of the relevant Hilbert spaces, define the
Fr\'echet derivative of the logarithm by
\begin{equation}
  \Omega_\tau(X)\coloneqq \mathrm D\log_\tau[X]
  \coloneqq \int_0^\infty
    (\tau+t\mathbbm1)^{-1}X(\tau+t\mathbbm1)^{-1}\,\mathrm dt.
  \label{eq:appendix-logarithmic-derivative}
\end{equation}
Set
\begin{equation}
  \begin{aligned}
    \rho_G^*&\coloneqq \cG(\rho^*),\\
    \rho_Z^*&\coloneqq \cZ(\rho^*).
  \end{aligned}
  \label{eq:appendix-optimizer-channel-images}
\end{equation}
On $\mathcal V$, define the real symmetric
bilinear forms
\begin{align}
  \mathcal A(X,Y)
  &\coloneqq \tr[X\Omega_{\rho^*}(Y)],
  \label{eq:appendix-input-bkm-form}\\
  \mathcal B_G(X,Y)
  &\coloneqq \tr[
      \cG(X)\Omega_{\rho_G^*}(\cG(Y))],
  \label{eq:appendix-preprocessing-bkm-form}\\
  \mathcal B_Z(X,Y)
  &\coloneqq \tr[
      \cZ(X)\Omega_{\rho_Z^*}(\cZ(Y))].
  \label{eq:appendix-output-bkm-form}
\end{align}
Let
\begin{equation}
  \begin{aligned}
    \nabla^2F(\rho^*)&\coloneqq \mathcal B_G-\mathcal B_Z,\\
    \mathcal M&\coloneqq \mathcal A-\nabla^2F(\rho^*).
  \end{aligned}
  \label{eq:appendix-flat-transverse-forms}
\end{equation}
Define the bilinear kernel and its $\mathcal A$-orthogonal complement by
\begin{equation}
  \begin{aligned}
    \mathcal K
    &\coloneqq \{X\in\mathcal V:
         \nabla^2F(\rho^*)[X,Y]=0
         \ \text{for every }Y\in\mathcal V\},\\
    \mathcal N&\coloneqq \mathcal K^{\perp_{\mathcal A}}.
  \end{aligned}
  \label{eq:appendix-flat-transverse-spaces}
\end{equation}
The induced norm and distance to a nonempty set $\mathcal C$ are
\begin{equation}
  \begin{aligned}
    \norm{X}_{\mathcal A}&\coloneqq \sqrt{\mathcal A(X,X)},\\
    \operatorname{dist}_{\mathcal A}(\rho,\mathcal C)
    &\coloneqq \inf_{z\in\mathcal C}\norm{\rho-z}_{\mathcal A}.
  \end{aligned}
  \label{eq:appendix-bkm-norm-distance}
\end{equation}
If $\mathcal N\ne\{0\}$, define
\begin{equation}
  q_\perp
  \coloneqq \max_{X\in\mathcal N\setminus\{0\}}
       \frac{\mathcal M(X,X)}{\mathcal A(X,X)},
  \label{eq:appendix-transverse-contraction-factor}
\end{equation}
and set $q_\perp\coloneqq 0$ when $\mathcal N=\{0\}$.
\begin{thm}[Geometric state convergence]
  \label{thm:geometric-convergence}
  The complete optimizer set is
  \begin{equation}
    \mathcal S^*=\mathcal F\cap(\rho^*+\mathcal K),
    \label{eq:appendix-affine-optimizer-set}
  \end{equation}
  and every positive-definite optimizer is a fixed point of $T$.  If
  $\mathcal N\ne\{0\}$, then $0\le q_\perp<1$, and for every
  $\theta\in(q_\perp,1)$ there exist
  $c_\theta,\widetilde c_\theta>0$ and $k_\theta\in\mathbb N$ such that
  \begin{align}
    \norm{\rho_k-\rho^*}_{\mathcal A}
    &\le c_\theta\theta^k,
    \label{eq:appendix-geometric-state-rate}\\
    0\le F(\rho_k)-F^*
    &\le\widetilde c_\theta\theta^{2k}
    \label{eq:appendix-geometric-objective-rate}
  \end{align}
  for all $k\ge k_\theta$.  The state bound remains valid, with a different
  constant, in every fixed norm on the finite-dimensional state space.  If
  $\mathcal N=\{0\}$, every feasible state is optimal and the positive exact
  sequence is stationary.
\end{thm}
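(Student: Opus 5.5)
We linearize the exact map $T$ at $\rho^*$ in the BKM geometry and then pass from the linear to the nonlinear iteration with a standard local-stability argument. First we settle the structural claims. Every positive optimizer $z\succ0$ satisfies the KKT condition \eqref{eq:positive-optimizer-kkt}, $G_z+\alpha^*\cdot M=\nu\mathbbm1$. Then $H_z=\log z-G_z=\log z+\alpha^*\cdot M-\nu\mathbbm1$, so $z$ is the Gibbs state $\rho_{\chi(z),-\alpha^*}$ and is feasible. Lemma~\ref{lem:gibbs-projection} then gives $T(z)=z$.

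For \eqref{eq:appendix-affine-optimizer-set}, convexity of $F$ means $\mathcal S^*$ is a convex subset of the affine slice. Take $z\in\mathcal S^*$ and expand $F$ along the segment $\rho^*+t(z-\rho^*)$. The function is constant along it, so its second derivative at $t=0$, namely $\nabla^2F(\rho^*)[X,X]$ with $X=z-\rho^*$, vanishes. Since $\nabla^2F(\rho^*)=\mathcal B_G-\mathcal B_Z$ is positive semidefinite on $\mathcal V$ (from $\mathfrak B_F\ge0$ in Proposition~\ref{prop:channel-bregman}), a vanishing quadratic form forces $X\in\mathcal K$ by Cauchy--Schwarz.

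The converse needs $F$ to be constant on $\mathcal F\cap(\rho^*+\mathcal K)$, not just flat to second order. For this we use the Bregman identity \eqref{eq:appendix-channel-bregman}. Directions in $\mathcal K$ satisfy equality in data processing for $\Phi$ at second order, and I would invoke Petz recovery (equality in monotonicity) to lift this to exact equality along the affine segment. I flag this step as delicate.

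Next we linearize. Write $\rho=\rho^*+X$ with $X\in\mathcal V$ small. The update $\eta=T(\rho)$ is characterized by $\log\eta=\log\rho-G_\rho-\lambda\cdot M-c\mathbbm1$. Differentiating at $\rho^*$ and using $\mathrm DG_{\rho^*}=\mathcal B_G-\mathcal B_Z$ as an operator via the pairing, we get
\[
\Omega_{\rho^*}(Y)=\Omega_{\rho^*}(X)-\nabla^2F(\rho^*)X+\text{span}\{\mathbbm1,M_i\},\qquad Y\in\mathcal V,
\]
where $Y=\mathrm DT(X)$. Pairing with an arbitrary $W\in\mathcal V$ kills the span term, which gives $\mathcal A(Y,W)=\mathcal M(X,W)$. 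Hence $\mathrm DT=\mathcal A^{-1}\mathcal M$ on $\mathcal V$, which is self-adjoint with respect to $\mathcal A$.

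We then locate the spectrum of $\mathrm DT$. On $\mathcal K$ it acts as the identity. It leaves $\mathcal N$ invariant by $\mathcal A$-self-adjointness, since $\mathcal K$ is invariant. Relative smoothness \eqref{eq:appendix-relative-smoothness} at second order gives $0\le\nabla^2F\le\mathcal A$, so $\mathcal M\ge0$ and $q_\perp\ge0$. On $\mathcal N$ the form $\nabla^2F$ is positive definite, so $\mathcal M(X,X)<\mathcal A(X,X)$ for $X\ne0$, and compactness of the unit sphere yields $q_\perp<1$. If $\mathcal N=\{0\}$, then $\mathcal K=\mathcal V$ and the first step shows $\mathcal S^*=\mathcal F$.

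For the nonlinear rate, we decompose $\rho_k-\rho^*$ and control its $\mathcal N$-component. The main obstacle is that $\rho^*$ is a non-isolated fixed point: the flat directions do not contract, so a plain Banach argument does not apply. I would first try the following route. Since every point of $\mathcal S^*$ near $\rho^*$ is a fixed point, $T$ maps a neighbourhood to itself with $T|_{\mathcal S^*}=\mathrm{id}$. Apply the linearization at nearby fixed points $z$, which is uniform by smoothness of $T$ via the implicit function theorem as in Corollary~\ref{cor:interior-selection}. This gives
\[
\operatorname{dist}_{\mathcal A}(T(\rho),\mathcal S^*)\le\theta\,\operatorname{dist}_{\mathcal A}(\rho,\mathcal S^*)
\]
for $\rho$ near $\rho^*$. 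The tangential drift per step is then bounded by $O(\operatorname{dist}^2)$, which is summable. Hence $\rho_k$ converges to some fixed point at rate $\theta^k$, and that limit must be $\rho^*$, since we already know $\rho_k\to\rho^*$.

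Finally, for the objective, $F(\rho_k)-F^*=\mathfrak B_F(\rho_k,\rho^*)$ plus a linear term. The linear term vanishes on the slice by KKT, so the gap is $O(\norm{\rho_k-\rho^*}^2)=O(\theta^{2k})$. Equivalence of norms in finite dimension handles the norm-change claim.
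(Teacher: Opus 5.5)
Your plan follows the paper's proof closely: positive optimizers are fixed points; $\mathcal S^*\subseteq\mathcal F\cap(\rho^*+\mathcal K)$ follows from vanishing curvature along optimal segments; $\mathrm DT_{\rho^*}=\mathcal A^{-1}\mathcal M$ has unit eigenspace $\mathcal K$ and invariant complement $\mathcal N$; $\operatorname{dist}_{\mathcal A}(\cdot,\mathcal S^*)$ contracts, the increments are summable, and the objective gap is quadratic. Your small deviations are sound. These are: getting $T(z)=z$ by recognising $z$ as the Gibbs state of $\chi(z)$ via the KKT relation (the multiplier is $+\alpha^*$, not $-\alpha^*$, in the convention $\rho_{\chi,\lambda}\propto\exp(\log\chi-\lambda\cdot M)$); deriving $0\preceq\nabla^2F(\rho^*)\preceq\mathcal A$ from the second-order expansion of $0\le\mathfrak B_F\le D$ instead of BKM monotonicity; and bounding the objective gap by the Bregman remainder at $\rho^*$.

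The genuine gap is the step you flagged, $\mathcal F\cap(\rho^*+\mathcal K)\subseteq\mathcal S^*$. For $X\in\mathcal K$ you only know $\mathcal B_G(X,X)=\mathcal B_Z(X,X)$. This is equality in the monotonicity of the BKM \emph{metric} under $\Phi$, for the single tangent vector $\cG(X)$ at base point $\rho_G^*$. Petz's equality theorem for relative entropy cannot upgrade this: its nontrivial direction takes $D(\cG(\rho)\Vert\rho_G^*)=D(\cZ(\rho)\Vert\rho_Z^*)$ as hypothesis, which is exactly what you want to prove. The missing ingredient is the equality case for monotone metrics (Hiai's theorem, which the paper uses). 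The reciprocal of the BKM representing function $h_0(x)=(x-1)/\log x$ has an integral representation with full-support measure on $(0,\infty)$, and $\Phi^*$ is Schwarz. Hence metric equality for $X$ is equivalent to $\mathcal R_\Phi(\cZ(X))=\cG(X)$, where $\mathcal R_\Phi$ is the Petz map of $\Phi$ at $\rho_G^*$. Linearity and $\mathcal R_\Phi(\rho_Z^*)=\rho_G^*$ then give $\mathcal R_\Phi(\cZ(\rho))=\cG(\rho)$ for $\rho=\rho^*+X$, so that
\[
D(\cG(\rho)\Vert\rho_G^*)\ge D(\cZ(\rho)\Vert\rho_Z^*)\ge D\bigl(\mathcal R_\Phi(\cZ(\rho))\big\Vert\mathcal R_\Phi(\rho_Z^*)\bigr)=D(\cG(\rho)\Vert\rho_G^*).
\]
The stationary gap identity $F(\rho)-F^*=D(\cG(\rho)\Vert\rho_G^*)-D(\cZ(\rho)\Vert\rho_Z^*)$ (Bregman identity at $\rho^*$ plus stationarity on the slice) then yields $F(\rho)=F^*$. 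Singular $\rho$ follow by continuity along $\rho^*+tX$ as $t\uparrow1$.

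This inclusion is load-bearing for the rate, not just for describing $\mathcal S^*$. Your nonlinear argument needs $\mathcal S^*$ to coincide near $\rho^*$ with the affine piece $\rho^*+\mathcal K$. That is what gives $\operatorname{dist}_{\mathcal A}(\rho,\mathcal S^*)=\norm{N}_{\mathcal A}$ for $\rho-\rho^*=K+N$. It is also what makes $z=\rho^*+K$, and $T(\rho)$ with its $\mathcal N$-component removed, optimizers and hence fixed points. If a direction in $\mathcal K$ were flat only to second order, $\mathrm DT_{\rho^*}$ would still have eigenvalue $1$ there, but the corresponding states would not be fixed. The distance to $\mathcal S^*$ would then fail to contract, and one would expect sublinear convergence. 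The case $\mathcal N=\{0\}$ rests on the same inclusion.

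A minor point concerns your tangential drift claim. With respect to the fixed $\mathcal A$-splitting at $\rho^*$, the drift per step is in general only first order in the distance, not $O(\operatorname{dist}^2)$. The reason is that at a fixed point $z\neq\rho^*$, $\mathrm DT_z$ preserves the orthogonal complement of $\mathcal K$ in the BKM inner product at $z$, not $\mathcal N$. You do not need the quadratic claim: a local Lipschitz constant $L_T$ for $T$ gives $\norm{\rho_{k+1}-\rho_k}_{\mathcal A}\le(L_T+1)\operatorname{dist}_{\mathcal A}(\rho_k,\mathcal S^*)$, which is already geometrically summable.
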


This is the formal version of the local result summarized in
Sec.~\ref{sec:geometric-convergence}.

\begin{proof}[Proof of Theorem~\ref{thm:geometric-convergence}]
In an eigenbasis $\tau=\sum_a p_a\ket{a}\!\bra{a}$, the logarithmic derivative
has coefficients
\begin{equation}
  (\Omega_\tau(X))_{ab}
  =\begin{cases}
    X_{ab}/p_a,&p_a=p_b,\\[1mm]
    \dfrac{\log p_a-\log p_b}{p_a-p_b}X_{ab},&p_a\ne p_b.
  \end{cases}
  \label{eq:logarithmic-derivative-spectral}
\end{equation}
Every coefficient is positive.  The Kubo--Mori operator has reciprocal
coefficients
\begin{equation}
  (\cJ_\tau(X))_{ab}
  =\begin{cases}
    p_aX_{ab},&p_a=p_b,\\[1mm]
    \dfrac{p_a-p_b}{\log p_a-\log p_b}X_{ab},&p_a\ne p_b.
  \end{cases}
  \label{eq:kubo-mori-spectral}
\end{equation}
Thus $\cJ_\tau=\Omega_\tau^{-1}$ and $\mathcal A$ is an inner product on
$\mathcal V$.  Twice differentiating $h(\rho)=\tr(\rho\log\rho)$ gives
\begin{equation}
  \nabla^2h(\tau)[X,Y]
  =\tr(X\Omega_\tau(Y)).
  \label{eq:negative-entropy-hessian}
\end{equation}
Consequently, for Hermitian traceless $X$ and real $t\to0$ such that
$\tau+tX\succeq0$,
\begin{equation}
  D(\tau+tX\Vert\tau)
  =\frac{t^2}{2}\tr[X\Omega_\tau(X)]+o(t^2).
  \label{eq:appendix-local-relative-entropy-expansion}
\end{equation}

BKM monotonicity, first under $\cG$ and then under $\Phi$, gives, for every
$X\in\mathcal V$,
\begin{equation}
  0\le\mathcal B_Z(X,X)
  \le\mathcal B_G(X,X)\le\mathcal A(X,X);
  \label{eq:appendix-bkm-monotonicity-chain}
\end{equation}
see Ref.~\cite{Lesniewski1999}, Theorem~2.14.  Differentiating
$F=h\circ\cG-h\circ\cZ$ twice at $\rho^*$ shows that its feasible
Hessian is
\begin{equation}
  \nabla^2F(\rho^*)[X,Y]
  =\mathcal B_G(X,Y)-\mathcal B_Z(X,Y).
  \label{eq:objective-bkm-hessian}
\end{equation}
Consequently $\nabla^2F(\rho^*)$ is positive semidefinite and
\begin{equation}
  0\preceq\mathcal M
  =\mathcal A-\nabla^2F(\rho^*)
  \preceq\mathcal A
  \label{eq:update-form-bounds}
\end{equation}
as quadratic forms.

We first show that positive optimizers are fixed points.  Let $z\succ0$ be an
optimizer.  Differentiable constrained optimality gives, for every
$X\in\mathcal V$,
\begin{equation}
  \tr(G_zX)=0.
  \label{eq:positive-optimizer-stationarity}
\end{equation}
For every $\rho\in\mathcal F$, the moving-reference identity gives
\begin{equation}
  \begin{aligned}
    D(\rho\Vert\chi(z))-D(z\Vert\chi(z))
    &=D(\rho\Vert z)+\tr[G_z(\rho-z)]\\
    &=D(\rho\Vert z)\ge0.
  \end{aligned}
  \label{eq:optimizer-projection-comparison}
\end{equation}
Thus $z$ is the unique affine information projection of $\chi(z)$ and
\begin{equation}
  T(z)=z.
  \label{eq:positive-optimizer-fixed-point}
\end{equation}

At the selected optimizer, stationarity and the channel Bregman identity
give, for every $\rho\in\mathcal F$,
\begin{equation}
  F(\rho)-F^*
  =D(\cG(\rho)\Vert\rho_G^*)
   -D(\cZ(\rho)\Vert\rho_Z^*).
  \label{eq:stationary-channel-gap}
\end{equation}
We characterize equality in this expression.  Define the Petz map for
$\Phi$ at the positive base $\rho_G^*$ by
\begin{equation}
  \mathcal R_\Phi:\mathcal B(\mathcal K_Z)\longrightarrow
    \mathcal B(\mathcal K_G),
  \qquad
  \mathcal R_\Phi(Y)
  \coloneqq (\rho_G^*)^{1/2}
    \Phi^*\!\left(
      (\rho_Z^*)^{-1/2}Y(\rho_Z^*)^{-1/2}
    \right)
    (\rho_G^*)^{1/2}.
  \label{eq:petz-map}
\end{equation}
It is completely positive.  Since $\Phi(\rho_G^*)=\rho_Z^*$, it satisfies
\begin{equation}
  \begin{aligned}
    \tr\mathcal R_\Phi(Y)&=\tr Y,\\
    \mathcal R_\Phi(\rho_Z^*)&=\rho_G^*.
  \end{aligned}
  \label{eq:petz-map-properties}
\end{equation}
Thus $\mathcal R_\Phi$ is CPTP.

For the BKM representing function, let
\begin{equation}
  h_0(x)\coloneqq 
  \begin{cases}
    \dfrac{x-1}{\log x},&x\ne1,\\[1mm]
    1,&x=1.
  \end{cases}
  \label{eq:bkm-representing-function}
\end{equation}
Its reciprocal has the representation
\begin{equation}
  \frac1{h_0(x)}
  =\int_0^\infty\frac{1+t}{x+t}\,\frac{\mathrm dt}{(1+t)^2},
  \label{eq:bkm-reciprocal-representation}
\end{equation}
whose representing measure has full support on $(0,\infty)$.  The adjoint of
$\Phi$ is Schwarz, both base states are positive, and $\cG(X)$ is Hermitian.
Hiai's equality theorem for monotone metrics
\cite[Theorem~4.1(ii) and Remark~4.4]{Hiai2023} therefore gives
\begin{equation}
  \begin{aligned}
    \mathcal B_G(X,X)=\mathcal B_Z(X,X)
    \quad\Longleftrightarrow\quad\\
    \mathcal R_\Phi(\cZ(X))=\cG(X).
  \end{aligned}
  \label{eq:bkm-petz-equality}
\end{equation}
Because $\nabla^2F(\rho^*)$ is positive semidefinite, zero
quadratic value is equivalent to membership in its bilinear kernel
$\mathcal K$.

Let $X\in\mathcal K$ and suppose first that
$\rho\coloneqq \rho^*+X\in\mathcal F$ is positive.  By
\eqref{eq:bkm-petz-equality} and linearity,
\begin{equation}
  \mathcal R_\Phi(\cZ(\rho))=\cG(\rho).
  \label{eq:finite-recovery-equality}
\end{equation}
Data processing under $\Phi$ and then under $\mathcal R_\Phi$ gives
\begin{align}
  D(\cG(\rho)\Vert\rho_G^*)
  &\ge D(\cZ(\rho)\Vert\rho_Z^*)
  \notag\\
  &\ge D\!\left(
    \mathcal R_\Phi(\cZ(\rho))
    \middle\Vert
    \mathcal R_\Phi(\rho_Z^*)
  \right)
  \notag\\
  &=D(\cG(\rho)\Vert\rho_G^*).
  \label{eq:recovery-data-processing-sandwich}
\end{align}
Both inequalities are equalities, and
\eqref{eq:stationary-channel-gap} proves $\rho\in\mathcal S^*$.  If
$\rho$ is singular, apply the same argument to $\rho^*+tX$ for
$0\le t<1$ and use entropy continuity as $t\to1$.  Therefore
\begin{equation}
  \mathcal F\cap(\rho^*+\mathcal K)\subseteq\mathcal S^*.
  \label{eq:flat-slice-contained-in-optimizers}
\end{equation}

Conversely, let $\rho^\flat$ be any optimizer.  Convexity makes the
segment $\rho^*+t(\rho^\flat-\rho^*)$ optimal.  Its second
derivative at $t=0$ is
\begin{equation}
  \nabla^2F(\rho^*)
    [\rho^\flat-\rho^*,\rho^\flat-\rho^*]=0.
  \label{eq:optimizer-segment-zero-curvature}
\end{equation}
Positive semidefiniteness puts $\rho^\flat-\rho^*$ in
$\mathcal K$, proving
\eqref{eq:appendix-affine-optimizer-set}.

We next differentiate the exact update.  Absorb the normalization of
$\chi(\rho)$ into the scalar trace multiplier.  The KKT equation for the
affine projection is
\begin{equation}
  \log T(\rho)-\log\rho+G_\rho
  +\gamma_0(\rho)\mathbbm1_{\mathcal H}
  +\sum_i\gamma_i(\rho)M_i=0,
  \label{eq:update-kkt-equation}
\end{equation}
together with the affine constraints on $T(\rho)$, where
$\gamma_0(\rho),\ldots,\gamma_r(\rho)\in\mathbb R$ are the trace and moment
multipliers.  The constraint-multiplier
Jacobian is the positive-definite BKM covariance proved in
Lemma~\ref{lem:gibbs-projection}.  The implicit-function theorem therefore
makes $T$ and the multipliers continuously differentiable near
$\rho^*$.  For $r=0$, the same conclusion follows directly from the
normalized exponential.

Differentiate \eqref{eq:update-kkt-equation} at the fixed point in a
direction $Y\in\mathcal V$ and pair with $X\in\mathcal V$.  The scalar and
affine multiplier terms vanish, leaving
\begin{equation}
  \begin{aligned}
    \mathcal A(X,\mathrm DT_{\rho^*}[Y])
    &=\mathcal A(X,Y)-\nabla^2F(\rho^*)[X,Y]\\
    &=\mathcal M(X,Y).
  \end{aligned}
  \label{eq:update-differential-riesz}
\end{equation}
Equation~\eqref{eq:update-form-bounds} makes $\mathrm DT_{\rho^*}$ positive,
$\mathcal A$-self-adjoint, and contractive.  Moreover,
\begin{equation}
  \ker(\operatorname{id}_{\mathcal V}-\mathrm DT_{\rho^*})=\mathcal K,
  \label{eq:update-unit-eigenspace}
\end{equation}
because
$\mathcal A(X,(\operatorname{id}_{\mathcal V}-\mathrm DT_{\rho^*})[Y])
=\nabla^2F(\rho^*)[X,Y]$ for every $X,Y\in\mathcal V$.  The
orthogonal complement $\mathcal N$ is invariant under $\mathrm DT_{\rho^*}$.
If
$\mathcal N\ne\{0\}$, finite-dimensional spectral theory gives
\begin{equation}
  \norm{\left.\mathrm DT_{\rho^*}\right|_{\mathcal N}}_{\mathcal A}
  =\max_{X\in\mathcal N,\,X\ne0}
    \frac{\mathcal M(X,X)}{\mathcal A(X,X)}
  =q_\perp<1.
  \label{eq:transverse-linear-contraction}
\end{equation}
The inequality is strict because the unit eigenspace has been removed.  If
$\mathcal N=\{0\}$, then $\mathcal V=\mathcal K$ and
\eqref{eq:appendix-affine-optimizer-set} makes every feasible state
optimal.  Every
positive feasible state is then fixed, so the positive exact trajectory is
stationary.

Assume $\mathcal N\ne\{0\}$.  Decompose each feasible state near
$\rho^*$ as
\begin{equation}
  \rho-\rho^*=K+N,
  \label{eq:local-flat-transverse-decomposition}
\end{equation}
where $K\in\mathcal K$ and $N\in\mathcal N$.  Put $z\coloneqq \rho^*+K$.  In a
sufficiently small positive neighborhood,
$z$ is feasible and \eqref{eq:appendix-affine-optimizer-set} makes it
optimal.
Thus
\begin{equation}
  \operatorname{dist}_{\mathcal A}(\rho,\mathcal S^*)
  =\norm{N}_{\mathcal A}.
  \label{eq:local-optimizer-distance}
\end{equation}
Let $\Pi_{\mathcal N}:\mathcal V\to\mathcal N$ be the
$\mathcal A$-orthogonal projection.
Choose $\theta\in(q_\perp,1)$.  Continuity of $\mathrm DT$ and compactness of
the unit sphere in $\mathcal N$ allow the neighborhood to be reduced so that
\begin{equation}
  \norm{\Pi_{\mathcal N}\mathrm DT_\eta[N]}_{\mathcal A}
  \le\theta\norm{N}_{\mathcal A}
  \label{eq:uniform-local-transverse-bound}
\end{equation}
for every relevant $\eta$ and $N\in\mathcal N$.  Since $T(z)=z$, the
fundamental theorem of calculus along $z+tN$ gives
\begin{equation}
  \Pi_{\mathcal N}(T(\rho)-\rho^*)
  =\int_0^1\Pi_{\mathcal N}\mathrm DT_{z+tN}[N],\mathrm dt.
  \label{eq:update-fundamental-theorem}
\end{equation}
Therefore
\begin{equation}
  \operatorname{dist}_{\mathcal A}(T(\rho),\mathcal S^*)
  \le\theta\operatorname{dist}_{\mathcal A}(\rho,\mathcal S^*).
  \label{eq:local-set-contraction}
\end{equation}

Appendix~\ref{app:interior-selection-proof} gives
$\rho_k\to\rho^*$, so the trajectory eventually remains in this
neighborhood and its optimizer-set distance is $O(\theta^k)$.  Let $z_k$ be
its local $\mathcal A$-projection onto the optimizer set.  A local Lipschitz
constant $L_T$ gives
\begin{equation}
  \norm{\rho_{k+1}-\rho_k}_{\mathcal A}
  \le(L_T+1)\operatorname{dist}_{\mathcal A}(\rho_k,\mathcal S^*).
  \label{eq:geometric-increment-bound}
\end{equation}
These increments are geometrically summable.  Their tail converges to the
already selected limit and proves
\eqref{eq:appendix-geometric-state-rate}, after
enlarging $c_\theta$ to absorb the finite prefix.  Norm equivalence proves
the same rate in every fixed finite-dimensional norm.

Finally, the feasible gradient vanishes at each positive optimizer $z_k$, and
the Hessian of $F$ is bounded on the selected positive neighborhood.  Taylor's
theorem therefore supplies $C<\infty$ such that
\begin{equation}
  0\le F(\rho_k)-F^*
  \le C\operatorname{dist}_{\mathcal A}(\rho_k,\mathcal S^*)^2
  \le\widetilde c_\theta\theta^{2k},
  \label{eq:quadratic-objective-growth-local}
\end{equation}
which proves \eqref{eq:appendix-geometric-objective-rate}.
\end{proof}

\section{Inner Newton convergence}
\label{app:inner-newton-proof}

The outer theory assumes that each affine Gibbs projection is solved exactly,
so the reliability of the inner multiplier solver must be established
separately.  This appendix bounds the relevant dual sublevel set and its
derivatives to prove global damped-Newton convergence and eventual quadratic
convergence, with explicit fixed problem constants.

Fix a positive-definite density operator $\chi$ on $\mathcal H$ and assume
$r\ge1$.  Let $\lambda_\chi^*\coloneqq \operatorname*{argmin}_{\lambda\in\mathbb R^r}
g_\chi(\lambda)$, whose existence and uniqueness were proved in
Appendix~\ref{app:gibbs-projection-proof}.  Choose Armijo parameters
$0<\alpha<1/2$ and $0<\beta<1$ and an arbitrary
$\lambda^{(0)}\in\mathbb R^r$.  At iteration $t$, define
\begin{equation}
  \begin{aligned}
    q_t&\coloneqq \nabla g_\chi(\lambda^{(t)}),\\
    B_t&\coloneqq \nabla^2g_\chi(\lambda^{(t)}),\\
    \Delta\lambda^{(t)}&\coloneqq -B_t^{-1}q_t.
  \end{aligned}
  \label{eq:appendix-newton-direction}
\end{equation}
Let $j_t$ be the smallest nonnegative integer for which
$u_t\coloneqq \beta^{j_t}$ satisfies
\begin{equation}
  g_\chi(\lambda^{(t)}+u_t\Delta\lambda^{(t)})
  \le g_\chi(\lambda^{(t)})
    +\alpha u_t q_t\cdot\Delta\lambda^{(t)},
  \label{eq:appendix-armijo-condition}
\end{equation}
and set
$\lambda^{(t+1)}\coloneqq \lambda^{(t)}+u_t\Delta\lambda^{(t)}$.

\begin{thm}[Convergence of the exact inner solve]
  \label{thm:inner-newton-convergence}
  Let $r\ge1$.  For every fixed $\chi\succ0$ and finite initial multiplier,
  the exact damped Armijo--Newton sequence is well defined and converges to
  $\lambda_\chi^*$.  On its initial dual sublevel set there are constants
  $0<\mu\le L<\infty$ and $M<\infty$ such that
  \begin{equation}
    \mu\mathbbm1_r
    \preceq\nabla^2g_\chi(\lambda)
    \preceq L\mathbbm1_r,
    \label{eq:inner-hessian-bounds}
  \end{equation}
  and the Hessian is $M$-Lipschitz.  After finitely many damped steps, unit
  steps are accepted and
  \begin{equation}
    \norm{\lambda^{(t+1)}-\lambda_\chi^*}_2
    \le\frac{M}{2\mu}
      \norm{\lambda^{(t)}-\lambda_\chi^*}_2^2.
    \label{eq:inner-quadratic-rate}
  \end{equation}
  Moreover,
  \begin{align}
    g_\chi(\lambda)-g_\chi(\lambda_\chi^*)
    &=D(\rho^*_\chi\Vert\rho_{\chi,\lambda}),
    \label{eq:appendix-inner-dual-gap}\\
    \norm{\rho_{\chi,\lambda}-\rho^*_\chi}_1
    &\le\sqrt{2\bigl[g_\chi(\lambda)-g_\chi(\lambda_\chi^*)\bigr]}.
    \label{eq:appendix-inner-state-gap}
  \end{align}
  For $r=0$, no multiplier iteration is present.
\end{thm}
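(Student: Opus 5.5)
We treat $g_\chi$ as a smooth, strictly convex, coercive function on $\mathbb R^r$ and apply the standard damped-Newton analysis (Boyd--Vandenberghe, \S9.5). We begin with the two identities, since they are pure algebra. Identity \eqref{eq:gibbs-variational-identity} holds for every feasible $\tau$ and every finite $\lambda$. Take $\tau=\rho^*_\chi$. Evaluating it at $\lambda$ and at $\lambda_\chi^*$ and subtracting yields
\begin{equation*}
D(\rho^*_\chi\Vert\rho_{\chi,\lambda})-D(\rho^*_\chi\Vert\rho^*_\chi)=g_\chi(\lambda)-g_\chi(\lambda^*_\chi).
\end{equation*}
Since $D(\rho^*_\chi\Vert\rho^*_\chi)=0$, this is \eqref{eq:appendix-inner-dual-gap}. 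The natural-log quantum Pinsker inequality \cite[Theorem~5.38]{Watrous2018} then gives \eqref{eq:appendix-inner-state-gap}.

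\textbf{Constants on the sublevel set.} Set $S_0\coloneqq\{\lambda: g_\chi(\lambda)\le g_\chi(\lambda^{(0)})\}$. By the coercivity bound \eqref{eq:dual-coercivity}, every $\lambda\in S_0$ satisfies $\|\lambda\|_2\le (g_\chi(\lambda^{(0)})-\lambda_{\min}(\log\chi))/\delta_{\mathrm{sl}}$. Hence $S_0$ is compact, and it is convex. By Proposition~\ref{prop:inner-bkm-susceptibility} the Hessian is positive definite at every finite $\lambda$, and $g_\chi$ is real analytic because $\lambda\mapsto\exp(\log\chi-\lambda\cdot M)$ is entire. On the compact set $S_0$ the continuous function $\lambda_{\min}(\nabla^2 g_\chi)$ therefore attains a positive minimum $\mu$, and $\lambda_{\max}$ attains a finite maximum $L$.

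The third derivative is continuous, so it is bounded on $S_0$, which makes the Hessian $M$-Lipschitz there. If explicit constants are wanted, $L$ can be taken as $\sum_i\|M_i\|_{\mathrm{op}}^2$-type, because the BKM covariance of centered observables is bounded by the variance, which is at most $\|\overline M\|^2$. Likewise, $\mu$ can be bounded below using the floor $\lambda_{\min}(\rho_{\chi,\lambda})\ge e^{\lambda_{\min}(H)-\lambda_{\max}(H)}/d$ over $S_0$ together with independence of $\{\mathbbm1,M_i\}$. We will only need existence, though.

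\textbf{Global convergence and quadratic phase.} Descent keeps the iterates in $S_0$. The Newton direction is a descent direction because $q_t\cdot\Delta\lambda^{(t)}=-q_t^{\mathsf T}B_t^{-1}q_t\le -\|q_t\|^2/L$. One technical point is that a backtracking trial point could leave $S_0$. We handle this by running the argument on a slightly larger compact sublevel set that contains every trial segment, or simply by noting that any Armijo-accepted point lies in $S_0$ and that the bounds are needed only along accepted steps in the descent-lemma argument; we expect this routine detail to be the only delicate point.

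With these bounds the standard analysis applies. In the damped phase, each step decreases $g_\chi$ by at least $\alpha\beta\mu\eta^2/L^2$ whenever $\|q_t\|\ge\eta\coloneqq\min\{1,3(1-2\alpha)\}\mu^2/M$. Since $g_\chi$ is bounded below, the damped phase is finite. Once $\|q_t\|<\eta$, the unit step satisfies Armijo, and $\|q_{t+1}\|\le \frac{M}{2\mu^2}\|q_t\|^2$.

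The iterate form \eqref{eq:inner-quadratic-rate} follows directly from the integral identity
\begin{equation*}
\lambda^{(t+1)}-\lambda^*=B_t^{-1}\int_0^1\bigl[B_t-\nabla^2g_\chi(\lambda^*+s(\lambda^{(t)}-\lambda^*))\bigr](\lambda^{(t)}-\lambda^*)\,\mathrm ds,
\end{equation*}
using $\|B_t^{-1}\|\le1/\mu$ and the Lipschitz bound, which produces the factor $M/(2\mu)$. Gradient convergence to zero plus strong convexity gives $\lambda^{(t)}\to\lambda_\chi^*$. The case $r=0$ is vacuous by Lemma~\ref{lem:gibbs-projection}.
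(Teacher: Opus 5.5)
Your proof is correct and follows the paper's argument essentially step for step: the Gibbs variational identity at $\lambda$ and $\lambda_\chi^*$ together with Pinsker for the two gap bounds, coercivity for a compact convex sublevel set, an Armijo step floor from a quadratic upper bound, eventual unit steps, and the integral remainder for the factor $M/(2\mu)$. The one difference is that the paper computes explicit constants $\mu=p_0\gamma_\perp$, $L=\kappa^2$, $M=6\kappa^3$ rather than invoking compactness, and $L$ and $M$ hold on all of $\mathbb R^r$ (your own variance remark already gives a $\lambda$-independent $L$), so the trial-point subtlety you flag never arises. Your first fix, an enlarged compact sublevel set, also works because $\|\Delta\lambda\|_2\le\|q\|_2/\mu$ keeps every trial segment bounded; your second alternative is not right as stated, since the step-size floor needs the upper bound along rejected trial segments as well.
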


This is the formal result summarized in Sec.~\ref{sec:inner-gibbs-solver}.

\begin{proof}[Proof of Theorem~\ref{thm:inner-newton-convergence}]
The $r=0$ branch was handled in
Appendix~\ref{app:gibbs-projection-proof}.  Retain the uniform Slater margin
$\delta_{\mathrm{sl}}$ from \eqref{eq:uniform-slater-margin}.  For
$v\in\mathbb R^r$, set $A_v\coloneqq \sum_iv_iM_i$ and define
\begin{align}
  \kappa
  &\coloneqq \max_{\norm{v}_2=1}\norm{A_v}_{\mathrm{op}}>0,
  \label{eq:observable-operator-bound}\\
  \gamma_\perp
  &\coloneqq \min_{\norm{v}_2=1}\min_{c\in\mathbb R}
    \norm{A_v-c\mathbbm1_{\mathcal H}}_{\mathrm F}^2>0.
  \label{eq:observable-scalar-separation}
\end{align}
The second inequality follows from independence of
$\{\mathbbm1,M_1,\ldots,M_r\}$ and compactness of the unit sphere.

Let $\lambda^{(0)}$ be the initial multiplier and set
\begin{equation}
  \mathcal L_0
  \coloneqq \{\lambda\in\mathbb R^r:
       g_\chi(\lambda)\le g_\chi(\lambda^{(0)})\}.
  \label{eq:initial-dual-sublevel}
\end{equation}
Put
\begin{align}
  R_0
  &\coloneqq \frac{g_\chi(\lambda^{(0)})-\lambda_{\min}(\log\chi)}
          {\delta_{\mathrm{sl}}},
  \label{eq:dual-sublevel-radius}\\
  p_0
  &\coloneqq \frac1d\exp\!\left[-\operatorname{osc}(\log\chi)-2\kappa R_0\right].
  \label{eq:sublevel-gibbs-floor}
\end{align}
The coercive estimate in \eqref{eq:dual-coercivity} gives
$\norm{\lambda}_2\le R_0$ throughout $\mathcal L_0$.  For
$K_\lambda\coloneqq \log\chi-\lambda\cdot M$,
\begin{equation}
  \operatorname{osc}(K_\lambda)
  \le\operatorname{osc}(\log\chi)+2\kappa R_0.
  \label{eq:sublevel-hamiltonian-oscillation}
\end{equation}
The smallest normalized exponential weight is therefore at least $p_0$.
Hence, for every $\lambda\in\mathcal L_0$,
\begin{equation}
  \rho_{\chi,\lambda}\succeq p_0\mathbbm1_{\mathcal H},
  \label{eq:sublevel-state-floor}
\end{equation}
with $\operatorname{osc}(X)\coloneqq \lambda_{\max}(X)-\lambda_{\min}(X)$.
For a unit $v$, set
$\mu_v(\lambda)\coloneqq \tr(\rho_{\chi,\lambda}A_v)$.  Then
\eqref{eq:directional-dual-hessian} and the spectral formula for $\cJ$
give
\begin{equation}
  v^\mathsf T\nabla^2g_\chi(\lambda)v
  \ge p_0\norm{A_v-\mu_v\mathbbm1_{\mathcal H}}_{\mathrm F}^2
  \ge p_0\gamma_\perp.
  \label{eq:inner-hessian-lower-bound}
\end{equation}
Every logarithmic mean is at most the corresponding arithmetic mean, and
hence
\begin{equation}
  v^\mathsf T\nabla^2g_\chi(\lambda)v
  \le\tr[\rho_{\chi,\lambda}
    (A_v-\mu_v\mathbbm1_{\mathcal H})^2]
  \le\norm{A_v}_{\mathrm{op}}^2
  \le\kappa^2.
  \label{eq:inner-hessian-upper-bound}
\end{equation}
Set
\begin{equation}
  \begin{aligned}
    \mu&\coloneqq p_0\gamma_\perp,\\
    L&\coloneqq \kappa^2.
  \end{aligned}
  \label{eq:explicit-inner-curvature-constants}
\end{equation}
Then, throughout $\mathcal L_0$,
\begin{equation}
  \mu\mathbbm1_r
  \preceq\nabla^2g_\chi(\lambda)
  \preceq L\mathbbm1_r.
  \label{eq:appendix-inner-hessian-bounds}
\end{equation}

To control the Hessian variation, apply the Duhamel simplex formula and
Schatten H\"older to the trace-exponential functional.  For $1\le j\le3$,
\begin{equation}
  \left|\mathrm D^j[\tr e^{(\mathord\cdot)}]_K
    [K_1,\ldots,K_j]\right|
  \le (\tr e^K)\prod_{\ell=1}^j\norm{K_\ell}_{\mathrm{op}}.
  \label{eq:partition-derivative-bound}
\end{equation}
Three derivatives of $K\mapsto\log\tr e^K$ produce one third-moment term,
three products of a first and a second moment, and two triple products.  Their
absolute coefficients sum to six.  Each unit multiplier direction changes
the Hamiltonian by an operator of norm at most $\kappa$, so the third
derivative tensor has
Euclidean operator norm at most
\begin{equation}
  M\coloneqq 6\kappa^3.
  \label{eq:explicit-hessian-lipschitz-constant}
\end{equation}
Thus the Hessian is globally $M$-Lipschitz.

Let $q\coloneqq \nabla g_\chi(\lambda)$ and
$B\coloneqq \nabla^2g_\chi(\lambda)$, and set
$\Delta\lambda\coloneqq -B^{-1}q$.  Away from stationarity,
\begin{equation}
  q\cdot\Delta\lambda=-q^\mathsf TB^{-1}q<0,
  \label{eq:newton-descent-direction}
\end{equation}
so a sufficiently short step satisfies Armijo.  Every accepted step decreases
$g_\chi$ and keeps the iterates in the compact set $\mathcal L_0$.
The global upper Hessian bound gives, for $0\le u\le1$,
\begin{equation}
  g_\chi(\lambda+u\Delta\lambda)
  \le g_\chi(\lambda)+u q\cdot\Delta\lambda
     +\frac{L}{2}u^2\norm{\Delta\lambda}_2^2.
  \label{eq:armijo-taylor-upper-bound}
\end{equation}
On $\mathcal L_0$,
\begin{equation}
  -q\cdot\Delta\lambda
  =\Delta\lambda^\mathsf TB\Delta\lambda
  \ge\mu\norm{\Delta\lambda}_2^2.
  \label{eq:newton-decrement-lower-bound}
\end{equation}
Therefore every trial satisfying
$u\le2(1-\alpha)\mu/L$ passes.  Backtracking
returns a step bounded below by
\begin{equation}
  u_{\min}
  \coloneqq \beta\min\!\left\{1,
    \frac{2(1-\alpha)\mu}{L}\right\}>0.
  \label{eq:armijo-step-floor}
\end{equation}
The accepted decrease satisfies
\begin{equation}
  g_\chi(\lambda^{(t)})-g_\chi(\lambda^{(t+1)})
  \ge\frac{\alpha u_{\min}}{L}
       \norm{\nabla g_\chi(\lambda^{(t)})}_2^2.
  \label{eq:armijo-gradient-decrease}
\end{equation}
The objective is bounded below, so the gradients tend to zero.  Every cluster
point is stationary.  Strict convexity leaves only $\lambda_\chi^*$, and
compactness of $\mathcal L_0$ makes the complete sequence converge to it.

Hessian Lipschitz continuity also gives
\begin{equation}
  g_\chi(\lambda+\Delta\lambda)
  \le g_\chi(\lambda)+\frac12q\cdot\Delta\lambda
     +\frac{M}{6}\norm{\Delta\lambda}_2^3.
  \label{eq:full-newton-step-bound}
\end{equation}
Since
$-q\cdot\Delta\lambda\ge\mu\norm{\Delta\lambda}_2^2$
and $\alpha<1/2$, the full step satisfies Armijo whenever
\begin{equation}
  \frac{M}{6}\norm{\Delta\lambda}_2
  \le\left(\frac12-\alpha\right)\mu.
  \label{eq:eventual-full-step-condition}
\end{equation}
The Newton direction tends to zero, so this condition holds after finitely
many steps.

Put $e_t\coloneqq \lambda^{(t)}-\lambda_\chi^*$.  Once unit steps are accepted,
$\nabla g_\chi(\lambda_\chi^*)=0$ gives
\begin{align}
  e_{t+1}
  &=\nabla^2g_\chi(\lambda^{(t)})^{-1}
    \int_0^1\!\bigl[
      \nabla^2g_\chi(\lambda^{(t)})
  \notag\\[-1mm]
  &\hspace{30mm}
      -\nabla^2g_\chi(\lambda_\chi^*+ue_t)
    \bigr]e_t\,\mathrm du.
  \label{eq:newton-error-remainder}
\end{align}
The inverse bound and Hessian Lipschitz constant prove
\begin{equation}
  \norm{\lambda^{(t+1)}-\lambda_\chi^*}_2
  \le\frac{M}{2\mu}
       \norm{\lambda^{(t)}-\lambda_\chi^*}_2^2
  \label{eq:appendix-inner-quadratic-rate}
\end{equation}
once unit steps are accepted.

Finally, apply \eqref{eq:gibbs-variational-identity} with
$\tau=\rho^*_\chi$ and use
\eqref{eq:optimal-dual-value}.  This gives
\begin{equation}
  g_\chi(\lambda)-g_\chi(\lambda_\chi^*)
  =D(\rho^*_\chi\Vert\rho_{\chi,\lambda}).
  \label{eq:appendix-dual-gap-relative-entropy}
\end{equation}
Natural-log quantum Pinsker
\cite[Theorem~5.38]{Watrous2018} therefore gives
\begin{equation}
  \norm{\rho_{\chi,\lambda}-\rho^*_\chi}_1
  \le\sqrt{2\left[g_\chi(\lambda)-g_\chi(\lambda_\chi^*)\right]}.
  \label{eq:appendix-dual-gap-state-bound}
\end{equation}

Under the assumption of Appendix~\ref{app:interior-selection-proof}, the
reference states have a uniform spectral floor.  For the cold start
$\lambda^{(0)}=0$,
$g_\chi(0)=0$, so the constants above can be bounded uniformly over the exact
outer trajectory.  The convergent preceding exact multiplier likewise yields
uniformly bounded warm-start sublevels.  This is uniformity for one fixed
finite-dimensional problem, not uniformly over the photon cutoff.
\end{proof}

\section{Numerical implementation details}
\label{app:numerical-implementation}

The compact DM CV block form avoids constructing the Naimark-dilated pinched
state.  If $\rho=U\operatorname{diag}(p)U^\dagger$, the nonzero spectrum of
$A_a\rho A_a^\dagger$ is obtained from a thin singular-value decomposition of
\begin{equation}
  C_a\coloneqq A_aU\operatorname{diag}(\sqrt p).
  \label{eq:thin-block-factor}
\end{equation}
The same factors yield the entropy and supported-logarithm pullback.  Exact
structural zero modes remain on the declared spaces $\mathcal K_a$; a
floating rank heuristic does not redefine the channel.

For the inner Hamiltonian
\begin{equation}
  K(\lambda)\coloneqq \log\chi-\lambda\cdot M,
  \label{eq:inner-hamiltonian}
\end{equation}
one Hermitian eigendecomposition supplies the log partition, Gibbs state,
gradient, and BKM Hessian in
Eqs.~\eqref{eq:dual-gradient}--\eqref{eq:spectral-dual-hessian}.
Rejected line-search trials require eigenvalues but not eigenvectors.  No
$d^2$-by-$d^2$ superoperator is formed.  With dense observables and compact
block maps, the leading stored data scale as
\begin{equation}
  O\!\left(rd^2+r^2+d\sum_a\dim\mathcal K_a\right),
  \label{eq:storage-complexity}
\end{equation}
which becomes $O(d^2)$ when $r$ is fixed and
$\sum_a\dim\mathcal K_a=O(d)$, as in the four-block QPSK family.  Dense
spectral kernels cost $O(d^3)$ per full decomposition.

\section{Conditional symmetry reduction}
\label{app:symmetry-reduction}

Symmetries can reduce the numerical problem substantially, but only exact
invariance makes such a reduction rigorous.  This appendix proves the
unitary-twirling and antiunitary-averaging reductions under explicit assumption
and distinguishes those statements from the tolerance-based DM CV acceleration.

\subsection{Finite unitary twirling}
\label{app:unitary-twirling}

Work in the normalized DM CV specialization of
Appendix~\ref{app:channel-reduction}, with block operators
$A_a:\mathcal H\to\mathcal K_a$ and objective
$F(\rho)=S(\cZ(\rho))-S(\rho)$.  Let $\Gamma$ be a finite group and let
$g\mapsto W_g\in\operatorname{U}(\mathcal H)$ be a unitary representation.
Assume the affine slice is invariant, meaning
$W_g\rho W_g^\dagger\in\mathcal F$ for every
$\rho\in\mathcal F$ and $g\in\Gamma$.  For each $g$, assume there is a
permutation $\pi_g$ of $\mathsf A$ and, for every $a\in\mathsf A$, a unitary
$U_{g,a}:\mathcal K_a\to\mathcal K_{\pi_g(a)}$ such that
\begin{equation}
  A_{\pi_g(a)}W_g=U_{g,a}A_a
  \label{eq:block-channel-covariance}
\end{equation}
for all $g\in\Gamma$ and $a\in\mathsf A$.  Conjugation by $W_g$ then permutes
the blocks of $\cZ(\rho)$ and conjugates each by a unitary.  Hence
\begin{equation}
  F(W_g\rho W_g^\dagger)=F(\rho).
  \label{eq:unitary-objective-invariance}
\end{equation}
for every density operator $\rho$ on $\mathcal H$ and $g\in\Gamma$.
The twirl
\begin{equation}
  \mathcal E_\Gamma(\rho)
  \coloneqq \frac1{|\Gamma|}\sum_{g\in\Gamma}W_g\rho W_g^\dagger
  \label{eq:finite-group-twirl}
\end{equation}
is feasible whenever $\rho$ is feasible.  Convexity gives
\begin{equation}
  F(\mathcal E_\Gamma(\rho))
  \le\frac1{|\Gamma|}\sum_{g\in\Gamma}F(W_g\rho W_g^\dagger)
  =F(\rho).
  \label{eq:twirling-objective-bound}
\end{equation}
Applying the twirl to an optimizer proves existence of an invariant optimizer;
it does not prove that every optimizer is invariant.

\subsection{Antiunitary reflection}
\label{app:antiunitary-reflection}

Fix an orthonormal basis of $\mathcal H$, and let the overline denote entrywise
complex conjugation in that basis.  Define the following map on Hermitian
operators:
\begin{equation}
  \Theta(\rho)\coloneqq J\overline\rho J^\dagger
  \label{eq:antiunitary-involution}
\end{equation}
for a unitary $J\in\operatorname{U}(\mathcal H)$ satisfying
$J\overline J=\mathbbm1_{\mathcal H}$.  The map $\Theta$ is a real-linear,
positive, trace-preserving involution.  Suppose
$\Theta(\rho)\in\mathcal F$ and $F(\Theta(\rho))=F(\rho)$ for every
$\rho\in\mathcal F$.  Then
\begin{equation}
  \mathcal E_\Theta(\rho)\coloneqq \frac{\rho+\Theta(\rho)}2
  \label{eq:antiunitary-average}
\end{equation}
is a feasible fixed point of $\Theta$, and convexity gives
\begin{equation}
  F(\mathcal E_\Theta(\rho))\le F(\rho).
  \label{eq:antiunitary-objective-bound}
\end{equation}
Thus an invariant optimizer exists under the exact assumption.  If in addition
$\mathcal E_\Theta\mathcal E_\Gamma
=\mathcal E_\Gamma\mathcal E_\Theta$, their composition maps every optimizer
to an optimizer fixed by both averages; this is the precise compatibility
condition for the combined reduction used here.

For completeness, let $\chi$ be a positive-definite density operator on
$\mathcal H$ and suppose a finite invariant decomposition
$\mathcal H=\bigoplus_{q\in\mathsf Q}\mathcal H_q$ with orthogonal projections
$\Pi_q\in\mathcal B(\mathcal H)$ makes $\log\chi$ and every $M_i$ block
diagonal.  For
$K(\lambda)\coloneqq \log\chi-\lambda\cdot M$, define
$K_q(\lambda)\coloneqq \Pi_qK(\lambda)\Pi_q|_{\mathcal H_q}$.  The Gibbs state then
retains one global partition function,
\begin{equation}
  Z_\chi(\lambda)=\sum_q\tr\exp(K_q(\lambda)).
  \label{eq:global-sector-partition}
\end{equation}
Normalizing sectors separately changes the feasible Gibbs family and is not a
symmetry reduction of the original problem.

For the serialized Lorente QPSK instances, these relations are checked only
to declared floating tolerances.  The exact proof above establishes the
implication from exact assumption, not those assumption for the concrete
Float64 payloads.  The implemented audit uses the cyclic generator
\begin{equation}
  \begin{aligned}
    W&\coloneqq S_A\otimes\operatorname{diag}(1,\mathrm i,\ldots,\mathrm i^{N_c}),\\
    S_A\ket{j}&=\ket{j+1\bmod4},
  \end{aligned}
  \label{eq:qpsk-cyclic-generator}
\end{equation}
and the antiunitary reflection
\begin{equation}
  \begin{aligned}
    J&\coloneqq R_A\otimes\mathbbm1_{N_c+1},\\
    R_A\ket{j}&=\ket{-j\bmod4}.
  \end{aligned}
  \label{eq:qpsk-reflection}
\end{equation}
The search follows $31\to7\to6$ constraint coordinates.  These are
tolerance-admitted numerical accelerations, not exact theorems about the
serialized instances.  Every candidate is rechecked against all 31 source
moments, and every lower pencil is reconstructed in source coordinates.  The
checks diagnose lifting and reconstruction errors, but tolerance tests do not
establish exact feasibility, a directed certificate enclosure, or equality
of the reduced and unreduced infima.

\end{document}